\pdfoutput=1
\documentclass[a4paper,USenglish,cleveref,autoref,thm-restate]{lipics-v2021}
\nolinenumbers
\hideLIPIcs

\usepackage{amsmath,amssymb}
\usepackage{xspace}
\usepackage{booktabs}
\usepackage{tikz}
\definecolor{wspdblue}{HTML}{1F77B4}
\definecolor{wspdorange}{HTML}{E66100}
\definecolor{wspdgreen}{HTML}{1A9850}

\newcommand{\eps}{\varepsilon}
\DeclareMathOperator{\diam}{diam}
\DeclareMathOperator{\dist}{dist}
\newcommand{\opt}{\mathrm{OPT}}
\newcommand{\lb}{\mathrm{LB}}
\newcommand{\R}{\mathbb{R}}
\newcommand{\W}{\mathcal{W}}
\newcommand{\ball}{\mathbb{B}}
\newcommand{\minwspd}{\textsc{minWSPD}\xspace}
\crefname{observation}{Observation}{Observations}
\Crefname{observation}{Observation}{Observations}

\title{Certified Small Well-Separated Pair Decompositions in Practice}
\titlerunning{Certified Small WSPDs in Practice}

\author{Guilherme D. da Fonseca}{Aix-Marseille Universit\'e and LIS, France}{}{https://orcid.org/0000-0002-9807-028X}{}
\authorrunning{G.\,D. da Fonseca}
\Copyright{Guilherme D. da Fonseca}

\ccsdesc[500]{Theory of computation~Computational geometry}
\ccsdesc[300]{Theory of computation~Approximation algorithms analysis}
\keywords{Well-separated pair decomposition, approximation algorithm, local search, set cover, algorithm engineering}

\supplementdetails{Software}{https://github.com/gfonsecabr/small-wspd-experiments}

\acknowledgements{Use of AI: This paper was prepared with substantial help from the AI assistant Claude (Anthropic). It took part in developing the algorithms and proofs, writing the code and running the experiments, checking the references, and writing the text. The author checked all results and is responsible for the content.}

\EventEditors{}
\EventNoEds{0}
\EventLongTitle{}
\EventShortTitle{}
\EventAcronym{}
\EventYear{}
\EventDate{}
\EventLocation{}
\EventLogo{}
\SeriesVolume{}
\ArticleNo{}

\begin{document}
\maketitle

\begin{abstract}
Given $n$ points and $\eps>0$, a well-separated pair decomposition (WSPD) is a collection of \emph{dumbbells} $\{X,Y\}$, where $X$ and $Y$ have diameter at most $\eps$ times their distance, such that every pair of points has one point in $X$ and the other in $Y$ for some dumbbell. In a \emph{partition} WSPD every pair is covered by exactly one dumbbell, while a \emph{cover} WSPD allows a pair to be covered by several. Callahan and Kosaraju give a partition WSPD with $O(n/\eps^d)$ dumbbells in $\R^d$, but many point sets have far smaller WSPDs of both kinds. For $d\ge2$, the known $O(1)$-approximations of the minimum size WSPD that preserve the separation $\eps$ have unspecified factors, above $10^7$ for $d=2$ by our estimates.

For cover WSPDs, we give an $O(n^2\log n)$-time \emph{sweep approximation algorithm} with a factor $40$ in the plane for $\eps \le 1/16$. It extends to $\R^d$ with a factor $2^{O(d)}$. The sweep is also a \emph{lower-bound tool}: its number of steps is a lower bound on the WSPD size. For partition WSPDs, we \emph{analyze quadtree and fair split tree constructions}: in the plane and for $\eps \le 1/5$, they give constant approximation factors, the smallest of which is $252$ for a randomly shifted and scaled quadtree.

In experiments with $\eps=1/20$, the number of sweep steps is at least $0.91\,\opt$ for $n=500$, where the optimum $\opt$ is known, and a local search raises it to a lower bound of $0.99\,\opt$.
A greedy variant of the sweep, which is still a certified $40$-approximation, is at most $1.28$ times the lower bound for $n=10000$.
Followed by a new local-search heuristic that shrinks the WSPD, it is within $0.02\%$ of the optimum on every instance with $n=500$, and at most $1.10$ times the lower bound for $n=10000$.
We then show how to convert the cover WSPDs into partition WSPDs.
For $n=10000$, the conversion gives on average $1.13$ times the lower bound, against $1.05$ for the cover WSPD itself and $1.55$ for the Callahan--Kosaraju partition WSPD.
\end{abstract}

\section{Introduction}
\label{sec:intro}

Let $P$ be a set of $n$ points in fixed dimension $d$, and let $\eps\in(0,1)$. We write $\|pq\|$ for the Euclidean distance between points $p$ and $q$, $\diam(X)$ for the diameter of a set $X\subseteq P$, and $\dist(X,Y)=\min_{x\in X,y\in Y}\|xy\|$. A \emph{dumbbell} $\{X,Y\}$ consists of two disjoint nonempty sets $X,Y\subseteq P$, its \emph{sides}, and it is \emph{$\eps$-separated} if $\max\{\diam(X),\diam(Y)\}\le \eps\cdot\dist(X,Y)$.

An \emph{$\eps$-WSPD} of $P$ is a collection $\W$ of $\eps$-separated dumbbells that together cover all $\binom n2$ \emph{pairs} of points, and $|\W|$ is its \emph{size}. A \emph{partition} WSPD covers every pair exactly once, and we use the term \emph{cover} WSPD to stress that pairs may be covered by multiple dumbbells.

WSPDs were introduced in~\cite{CK95} and are now a standard tool in computational geometry~\cite{NS07,Smi18}. The Callahan--Kosaraju (CK) construction~\cite{CK95} computes a partition $\eps$-WSPD of size $O(n/\eps^d)$ in $O(n\log n + n/\eps^d)$ time, with dumbbells given as pairs of tree nodes.
Several applications of WSPDs only use that every pair is covered at least once, so they work with cover WSPDs. Choosing one edge per dumbbell gives a $t$-spanner with $|\W|$ edges~\cite{CK93,NS07,Smi18}, and hence an approximate minimum spanning tree~\cite{CK93,Smi18}; such sparse edge sets also speed up solvers for geometric tour problems~\cite{FKKN24}. The approximate diameter~\cite{Smi18,Har11} is found by inspecting every dumbbell once. Other applications, such as the computation of $n$-body potential fields~\cite{CK95}, approximate distance selection~\cite{BS02,Smi18}, and the approximation of the sum of all pairwise distances~\cite{Har11}, need a partition WSPD.

The bound $O(n/\eps^d)$ is a worst-case bound, and many point sets have much smaller WSPDs. For example, the $n$ points $p_i=(i,(4/\eps)^i)$ have an $\eps$-WSPD with only $n-1$ dumbbells even in the partition version~\cite{BCHKLPRW26}. Since spanners and many other structures built from a WSPD inherit its size, a WSPD that is as small as possible is desirable.

The input of \minwspd consists of $P$ and $\eps$, and the goal is a cover $\eps$-WSPD of $P$ of minimum size, which we denote by $\opt_\eps$, or by $\opt$ if $\eps$ is clear from the context. A partition WSPD is in particular a cover WSPD. Therefore, every approximation factor with respect to $\opt_\eps$ also holds with respect to the partition optimum.

\subparagraph*{Previous work.}
\minwspd was introduced in~\cite{BCHKLPRW26}. It is NP-hard in the plane when $\eps$ may depend on $n$, and in general metric spaces no polynomial-time algorithm achieves a factor $n^c$, for some constant $c>0$, unless P${}={}$NP~\cite{BCHKLPRW26}. \Cref{tab:results} lists the known approximation algorithms. For $d\ge2$, the only one stated for the separation $\eps$ is based on the net-tree WSPD~\cite{HM06}, and its factor $2^{O(d)}$ is unspecified; our rough estimate from its proof is $10^{11}$ for $d=2$. Greedy set cover over all maximal $\eps$-separated dumbbells keeps the separation and gives an $O(\log n)$-approximation whenever their number is polynomial in $n$, which holds on the line but is not known for $d\ge2$~\cite{BCHKLPRW26}. The constant-factor algorithm of~\cite{HRR25} works in every metric space but relaxes the separation (the constants $7$ and $33$ follow from its proof, see also~\cite{BCHKLPRW26}). In $\R^d$, the relaxation can be undone: a grid cuts each side of a $c\eps$-separated dumbbell into at most $\lceil c\sqrt d\,\rceil^d$ pieces whose diameter is at most $1/c$ times the larger diameter of the two sides, and the pieces form $\eps$-separated dumbbells. For $d=2$ and $\eps<1/33$, the factor becomes $7\cdot47^4\approx3.4\cdot10^7$. For $d\ge2$, the question of a $(1+\delta)$-approximation, or a simple $O(1)$-approximation, was raised in~\cite{BCHKLPRW26}. A bicriteria approximation scheme, which computes a $(1+\delta)\eps$-WSPD of size at most $(1+\delta)\opt_\eps$ in $\R^d$ and in doubling metrics, is given in~\cite{Fon26bicriteria}, but its running time is doubly exponential in $d/\delta$.

\subparagraph*{Our results.}
\Cref{tab:results} places our algorithms among the known approximation results.
\begin{enumerate}
\item \textbf{Improved cover WSPD approximation factors} (\cref{thm:sweep}). For points in the plane and $\eps\le 1/16$, a sweep algorithm computes a $40$-approximation in $O(n^2\log n)$ time and $O(n^2)$ space. It extends to $\R^d$ for every fixed $d$ (\cref{thm:sweepd}) with the same time and space bounds, with a factor $2048$ for $d=3$ and $2^{O(d)}$ in general. It generalizes the $3$-approximation of~\cite{BCHKLPRW26} for $d=1$ and is a simple $O(1)$-approximation, as asked there.

\item \textbf{Improved partition WSPD approximation factors} (\cref{sec:partition}). For points in the plane and $\eps\le1/5$, the CK construction on a fair split tree~\cite{CK95} outputs at most $1024\,\opt$ dumbbells in $O(n\log n+\opt)$ time. The same size bound holds for the two variants of our experiments: \emph{CK box} accepts two tree nodes if their bounding boxes are $\eps$-separated, and \emph{CK exact}, or just \emph{CK}, if their point sets are. In the same time, the maximal separated pairs of same-level quadtree cells, with separation tested on the cells, form a partition $\eps$-WSPD of size at most $529\,\opt$, and of expected size at most $252\,\opt$ for a randomly shifted and scaled quadtree. The deterministic bounds extend to $\R^d$.

\item \textbf{Small cover WSPD in practice} (\cref{sec:experiments}). A greedy variant of the sweep keeps the approximation factor $40$ and runs in $O(n^3)$ time (\cref{thm:greedy}). The number of steps of the sweep is a lower bound on $\opt$. A local search on the anchors of the steps enlarges it to a lower bound $\lb$ (\cref{sec:lb}), which certifies the quality of any WSPD. For $\eps=1/20$ and $n=500$, where $\opt$ is known, the greedy sweep is on average $1.03\,\opt$, against $1.17\,\opt$ for the CK WSPD, and the lower bound is at least $0.99\,\opt$. For $n=10000$, where $\opt$ is out of reach, the greedy sweep and the CK WSPD are on average $1.22\,\lb$ and $1.55\,\lb$.
We present a local-search heuristic (\cref{sec:reshape}) that post-processes any given WSPD with a single move, \emph{reshape}, never increasing its size. For $n=500$, it brings the sweep to within $0.02\%$ of $\opt$, and it reaches $\opt$ on $13$ of the $22$ instances. For $n=10000$, the sweep followed by local search is at most $1.10\,\lb$ and on average $1.05\,\lb$.

\item \textbf{Small partition WSPD in practice} (\cref{sec:experiments}). Converting the cover WSPD of the sweep followed by local search into a partition WSPD gives on average $1.006\,\opt$ for $n=500$ and $1.13\,\lb$ for $n=10000$, against $1.17\,\opt$ and $1.55\,\lb$ for CK.
\end{enumerate}

\begin{table}[htb]
\centering\small\setlength{\tabcolsep}{4pt}
\caption{Approximation algorithms for \minwspd. \emph{Both}: outputs a partition WSPD. $^\ast$: expected.}
\label{tab:results}
\begin{tabular}{llllll}
\toprule
Setting & Problem & Size & Separation & Time & Reference \\
\midrule
$\R^1$ & cover & $(1+\delta)\,\opt_\eps$ & $\eps$ & $n^{O(1/\delta^2)}$ & \cite{BCHKLPRW26} \\
$\R^1$ & cover & $3\,\opt_\eps$ & $\eps$ & $O(n\log n)$ & \cite{BCHKLPRW26} \\
$\R^1$ & both & $10\,\opt_\eps$ & $\eps$ & polynomial & \cite{BCHKLPRW26} \\
$\R^d$ & both & $2^{O(d)}\opt_\eps$ & $\eps$ & $2^{O(d)}(n\log n+\opt)^{\ast}$ & \cite{BCHKLPRW26,HM06} \\
$\R^d$ & cover & $(1+\delta)\,\opt_\eps$ & $(1+\delta)\eps$ & $f(d,\delta)\,n^6$ & \cite{Fon26bicriteria} \\
Any metric & cover & $7\,\opt_{\eps}$ & $33\eps$ & polynomial & \cite{HRR25} \\
\addlinespace[2pt]
$\R^2$, $\eps\le1/16$ & cover & $40\,\opt_\eps$ & $\eps$ & $O(n^2\log n)$ & \cref{thm:sweep} \\
$\R^2$, $\eps\le1/5$ & both & $1024\,\opt_\eps$ & $\eps$ & $O(n\log n+\opt)$ & \cite{CK95}, \cref{sec:partition} \\
$\R^2$, $\eps\le1/5$ & both & $529\,\opt_\eps$ & $\eps$ & $O(n\log n+\opt)$ & \cref{sec:partition} \\
$\R^2$, $\eps\le1/5$ & both & $252\,\opt_\eps{}^{\ast}$ & $\eps$ & $O(n\log n+\opt)$ & \cref{sec:partition} \\
$\R^d$, $\eps\le1/16$ & cover & $2^{O(d)}\opt_\eps$ & $\eps$ & $O(n^2\log n)$ & \cref{thm:sweepd} \\
\bottomrule
\end{tabular}
\end{table}

\Cref{sec:sweep} presents the sweep and its analysis, \cref{sec:ls} a greedy variant of the sweep and the local search for smaller WSPDs and larger lower bounds, and \cref{sec:higher} the extension of the sweep to higher dimensions. \Cref{sec:experiments} reports the experiments, in the plane and, in \cref{sec:exp3d}, in $\R^3$. \Cref{sec:largeeps} extends the sweep to larger $\eps$. The partition WSPD constructions are analyzed in \cref{sec:partition}.

\section{Sweep Algorithm}
\label{sec:sweep}

This section presents the sweep $40$-approximation algorithm for points in the plane. Let $\prec$ denote the lexicographic order on $\R^2$ (by first coordinate, then by second coordinate), and write $p\preceq q$ if $p\prec q$ or $p=q$. We write $\ball(c,r)$ for the closed disk of radius $r$ centered at~$c$.

\subparagraph*{Pieces.}
For a disk $\ball$ of radius $\rho$, we use a fixed partition of $\ball$ into $8$ pieces: the concentric disk of radius $0.434\rho$, and the $7$ sectors of angle $2\pi/7$ of the annulus between radii $0.434\rho$ and $\rho$. For the half-disk $\ball^+$, which consists of the points of $\ball$ whose first coordinate is at least that of its center, we use $5$ pieces: the concentric half-disk of radius $0.4\rho$, and the $4$ sectors of angle $\pi/4$ of the half-annulus between radii $0.4\rho$ and $\rho$. Points on the boundary between two pieces are assigned to one of them arbitrarily, so the pieces partition the disk and the half-disk. The inner radii are chosen so that all pieces have roughly the same diameter.

\begin{lemma}\label{lem:pieces}
Every piece of the disk partition has diameter at most $0.868\rho$, and every piece of the half-disk partition has diameter at most $0.8\rho$.
\end{lemma}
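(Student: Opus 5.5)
Scale so that $\rho=1$ and treat the two kinds of piece separately. The inner disk has radius $0.434$, so its diameter is $0.868$. The inner half-disk has radius $0.4$, and its diameter equals that of the full disk of the same radius, namely $0.8$: the two endpoints of the bounding diameter segment both lie in the half-disk. What remains is to bound the annular sectors.

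An annular sector between radii $r<1$ with angle $\theta\le\pi/2$ is convex in the relevant sense: its diameter is attained by two of its extreme points. These are the four corners together with points on the outer arc. We therefore list the candidate distances. Two points on the outer arc are at distance at most the outer chord $2\sin(\theta/2)$. The radial side has length $1-r$. For an inner corner and an outer point at angular separation $\phi\le\theta$, the distance is $\sqrt{1+r^2-2r\cos\phi}$, which is maximized at $\phi=\theta$; this is the diagonal. Comparing the squared diagonal $1+r^2-2r\cos\theta$ with the squared chord $2-2\cos\theta$, the diagonal is the larger exactly when $r^2-1>(2r-2)\cos\theta$, that is, when $r+1<2\cos\theta$ (divide by $r-1<0$). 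Hence the diameter is the maximum of the outer chord, the diagonal and $1-r$.

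Disk case: $\theta=2\pi/7$ and $r=0.434$. We have $\cos(2\pi/7)\approx0.6235$, so $2\cos\theta\approx1.247<1.434$, and the chord dominates the diagonal. The chord is $2\sin(\pi/7)\approx0.8678\le0.868$. The radial side is $0.566$. So the diameter is at most $0.868$.

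Half-disk case: $\theta=\pi/4$ and $r=0.4$. Now $2\cos\theta\approx1.414>1.4$, so the diagonal dominates. It equals $\sqrt{1.16-0.8\cos(\pi/4)}=\sqrt{1.16-0.5657}\approx\sqrt{0.5943}\approx0.7709\le0.8$. The chord is $2\sin(\pi/8)\approx0.765$ and the radial side is $0.6$. So the diameter is at most $0.8$.

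Boundary assignment only moves points into pieces whose closures already contain them, so taking closures does not affect these bounds.

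The main obstacle is justifying that the diameter of an annular sector is attained only at these candidate pairs. A point on the inner arc is never extreme for the maximum distance. For a fixed point $q$, the distance to $q$ along a radius is a convex function of the radius, so it is maximized at an endpoint. This reduces the problem to pairs of points on the outer arc or at inner corners. A pair of outer arc points, or a corner paired with an outer arc point, is then handled by the monotonicity of $\cos\phi$ on $[0,\theta]$. A pair of inner corners is at distance $2r\sin(\theta/2)$, which is less than the chord.
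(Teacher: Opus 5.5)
Your proof is correct and follows the paper's argument: you reduce to the extreme points of the convex hull (the outer arc and the two inner corners), bound the diameter by the maximum of the outer chord, the diagonal from an inner corner to the opposite outer corner, and $1-r$, and then evaluate these numerically. The comparison $r+1<2\cos\theta$ and the closure remark are harmless additions. The radial-convexity argument in your last paragraph, taken alone, only reduces to points on the inner and outer arcs, not to the inner corners, but the convex-hull argument you already gave covers the inner-arc points.
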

\begin{proof}
Assume $\rho=1$. The diameter of a set is attained by two extreme points of its convex hull. For an annular sector with angle less than $\pi$, these are the two inner corners and the points of the outer arc. Two points of the outer arc are at distance at most $2\sin(\varphi/2)$, and the distance from an inner corner to a point of the outer arc increases with the angle between them. Hence an annular sector with inner radius $a<1$, outer radius $1$, and angle $\varphi\le\pi/2$ has diameter $\max\{2\sin(\varphi/2),\sqrt{1+a^2-2a\cos\varphi},1-a\}$.
For $\varphi=2\pi/7$ and $a=0.434$, these values are $0.8678$, $0.804$, and $0.566$. For $\varphi=\pi/4$ and $a=0.4$, they are $0.765$, $0.771$, and $0.6$. The central pieces have diameters $0.868$ and $0.8$.
\end{proof}

\begin{figure}[htb]
\centering
\begin{tikzpicture}[font=\small]
\fill[black!8] (0,-1.75) arc[start angle=-90, end angle=-270, radius=1.75] -- cycle;
\fill[wspdblue!25] (0.495,-0.495) arc[start angle=-45, end angle=0, radius=0.7] -- (1.75,0) arc[start angle=0, end angle=-45, radius=1.75] -- cycle;
\fill[wspdblue!25] (7.406,1.674) arc[start angle=141.429, end angle=192.857, radius=0.759] -- (6.294,0.811) arc[start angle=192.857, end angle=141.429, radius=1.75] -- cycle;
\draw[black!45] (0,0) circle[radius=1.75];
\draw[thick] (0,-1.75) arc[start angle=-90, end angle=90, radius=1.75] -- cycle;
\draw (0,-0.7) arc[start angle=-90, end angle=90, radius=0.7];
\draw (0.495,-0.495) -- (1.237,-1.237);
\draw (0.7,0) -- (1.75,0);
\draw (0.495,0.495) -- (1.237,1.237);
\draw[thick] (8,1.2) circle[radius=1.75];
\draw (8,1.2) circle[radius=0.759];
\draw (8,1.959) -- (8,2.95);
\draw (7.406,1.674) -- (6.632,2.291);
\draw (7.26,1.031) -- (6.294,0.811);
\draw (7.67,0.516) -- (7.241,-0.377);
\draw (8.33,0.516) -- (8.759,-0.377);
\draw (8.74,1.031) -- (9.706,0.811);
\draw (8.594,1.674) -- (9.368,2.291);
\draw[black!60] (0,0) -- (3.68,0.552) (4.32,0.648) -- (8,1.2);
\draw[black!60] (3.603,0.662) -- (3.757,0.442);
\draw[black!60] (4.243,0.758) -- (4.397,0.538);
\node[black!70, above=2pt] at (4,0.6) {$D$};
\fill[black] (0.441,-1.185) circle[radius=1.3pt];
\fill[black!35] (-0.435,-1.469) circle[radius=1.3pt];
\fill[black] (1.096,0.908) circle[radius=1.3pt];
\fill[black!35] (-0.51,1.13) circle[radius=1.3pt];
\fill[wspdblue] (1.515,-0.543) circle[radius=1.3pt];
\fill[black] (0.167,-0.403) circle[radius=1.3pt];
\fill[black] (0.955,1.33) circle[radius=1.3pt];
\fill[black!35] (-1.37,-0.532) circle[radius=1.3pt];
\fill[black] (0.465,-0.185) circle[radius=1.3pt];
\fill[wspdblue] (1.009,-0.262) circle[radius=1.3pt];
\fill[black] (0.163,-1.059) circle[radius=1.3pt];
\fill[black!35] (-0.808,0.98) circle[radius=1.3pt];
\fill[wspdblue] (0.848,-0.545) circle[radius=1.3pt];
\fill[black!35] (-0.188,1.317) circle[radius=1.3pt];
\fill[black!35] (-0.612,-0.678) circle[radius=1.3pt];
\fill[black] (0.367,1.032) circle[radius=1.3pt];
\fill[black] (0.512,0.712) circle[radius=1.3pt];
\fill[black] (0.136,0.565) circle[radius=1.3pt];
\fill[black] (1.5,0.482) circle[radius=1.3pt];
\fill[black] (8.815,-0.103) circle[radius=1.3pt];
\fill[wspdblue] (6.833,2.007) circle[radius=1.3pt];
\fill[black] (8.412,1.444) circle[radius=1.3pt];
\fill[black] (7.878,2.471) circle[radius=1.3pt];
\fill[wspdblue] (6.515,1.323) circle[radius=1.3pt];
\fill[wspdblue] (6.907,1.293) circle[radius=1.3pt];
\fill[black] (8.199,2.704) circle[radius=1.3pt];
\fill[black] (8.37,2.115) circle[radius=1.3pt];
\fill[black] (9.039,1.387) circle[radius=1.3pt];
\fill[black] (6.625,0.575) circle[radius=1.3pt];
\fill[black] (8.892,0.494) circle[radius=1.3pt];
\fill[black] (7.116,0.724) circle[radius=1.3pt];
\fill[black] (7.075,2.211) circle[radius=1.3pt];
\fill[black] (7.407,1.96) circle[radius=1.3pt];
\fill[black] (7.621,-0.073) circle[radius=1.3pt];
\fill[wspdblue] (6.546,1.679) circle[radius=1.3pt];
\fill[black] (8.015,0.004) circle[radius=1.3pt];
\fill (0,0) circle[radius=1.9pt] node[left=1pt] {$p$};
\fill (8,1.2) circle[radius=1.9pt] node[above right=0pt] {$q$};
\node[wspdblue!70!black] at (1.811,-0.75) {$A_i$};
\node[wspdblue!70!black] at (6.055,1.644) {$C_k$};
\node[black, align=center, font=\footnotesize] at (-1,0.12) {$z\prec p$};
\node[below] at (0,-1.8) {$\ball^+(p,\eps D)$};
\node[below] at (8,-0.6) {$\ball(q,\eps D)$};
\end{tikzpicture}
\caption{A step of the sweep with anchor $\{p,q\}$. The step outputs a dumbbell $\{A_i,C_k\}$, such as the one highlighted, for every two nonempty pieces.}
\label{fig:step}
\end{figure}

\subparagraph*{Algorithm.}
Initially, no pair is covered. The algorithm processes the points $p\in P$ in the order $\prec$, and for each such $p$ the points $q\succ p$ in arbitrary order. If the pair $\{p,q\}$ is not yet covered, the algorithm performs a \emph{step} with \emph{anchor} $\{p,q\}$, as follows. Let $D=\|pq\|$. Let $A_1,\dots,A_5$ be the intersections of $\{z\in P: z\succeq p\}$ with the pieces of the half-disk $\ball^+(p,\eps D)$, and let $C_1,\dots,C_8$ be the intersections of $P$ with the pieces of the disk $\ball(q,\eps D)$. The step outputs every dumbbell $\{A_i,C_k\}$ whose sides are both nonempty (\cref{fig:step}), at most $5\cdot8=40$ dumbbells. We call these dumbbells the \emph{fixed collection} of the step.

\begin{observation}[Locality]\label{obs:local}
If an $\eps$-separated dumbbell $\{X,Y\}$ covers $\{x,y\}$ and $\ell=\dist(X,Y)$, then $\ell\le\|xy\|\le(1+2\eps)\ell$, and each side lies in the disk of radius $\eps\ell$ centered at any of its points.
\end{observation}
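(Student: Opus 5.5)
The plan is to prove the two claims of the Observation separately, both straight from the definitions of $\eps$-separation and of $\dist$. Let $\{X,Y\}$ be $\eps$-separated, covering $\{x,y\}$ with $x\in X$ and $y\in Y$, and write $\ell=\dist(X,Y)$. By separation, $\diam(X)\le\eps\ell$ and $\diam(Y)\le\eps\ell$.

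For the lower bound on $\|xy\|$, note that $\ell$ is a minimum over all pairs in $X\times Y$, so $\ell\le\|xy\|$. For the upper bound, pick $x'\in X$ and $y'\in Y$ attaining the distance, $\|x'y'\|=\ell$; the minimum exists because $P$ is finite. The triangle inequality then gives
\[
\|xy\|\le\|xx'\|+\|x'y'\|+\|y'y\|\le\diam(X)+\ell+\diam(Y)\le(1+2\eps)\ell .
\]

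For the disk claim, fix any point $z$ of a side, say $z\in X$. Every $w\in X$ satisfies $\|zw\|\le\diam(X)\le\eps\ell$, so $X\subseteq\ball(z,\eps\ell)$. The same argument applies to $Y$.

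Nothing here is a real obstacle. The only point to check is that the minimum defining $\dist$ is attained, which holds because $P$ is finite.
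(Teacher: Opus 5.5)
Your proof is correct and is the direct argument the paper leaves implicit, since it states this observation without proof. It uses $\ell\le\|xy\|$ from the definition of $\dist$, the triangle inequality through a closest cross pair for the upper bound $(1+2\eps)\ell$, and $\diam(X),\diam(Y)\le\eps\ell$ for the disk containment.
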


\begin{theorem}\label{thm:sweep}
For $P\subset\R^2$ and $\eps\le1/16$, the sweep performs at most $\opt_\eps$ steps and outputs an $\eps$-WSPD of size at most $40\,\opt_\eps$ in $O(n^2\log n)$ time and $O(n^2)$ space.
\end{theorem}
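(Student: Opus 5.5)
The plan has four parts: separation, coverage, the bound on the number of steps (the heart of the theorem), and running time. The bound of $40\,\opt_\eps$ on the output size then follows because each step outputs at most $5\cdot8=40$ dumbbells.

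\emph{Separation.} Consider a step with anchor $\{p,q\}$ and $D=\|pq\|$. By \cref{lem:pieces}, every side $A_i$ or $C_k$ has diameter at most $0.868\,\eps D$. Every point of $\ball^+(p,\eps D)$ is at distance at least $D-2\eps D$ from every point of $\ball(q,\eps D)$, so $\dist(A_i,C_k)\ge(1-2\eps)D$. For $\eps\le1/16$ we have $1-2\eps\ge0.875>0.868$, so every output dumbbell is $\eps$-separated. This is where the hypothesis $\eps\le1/16$ is used.

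\emph{Coverage.} Sides are disjoint since the two disks are disjoint. Every pair $\{p,q\}$ with $p\prec q$ is examined when $p$ is processed. Either it is already covered, or it becomes an anchor. In the latter case $p$ lies in some $A_i$, because $p\succeq p$ and $p$ is the center of the half-disk, and $q$ lies in some $C_k$. So $\{p,q\}$ is covered by $\{A_i,C_k\}$.

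\emph{At most $\opt_\eps$ steps.} Fix an optimal WSPD and charge each anchor to some optimal dumbbell $\{X,Y\}$ covering it. I will show the charging is injective.
\begin{itemize}
\item Suppose $\{p,q\}$, with $p\prec q$, is the first anchor charged to $\{X,Y\}$. Say $p\in S$ and $q\in T$, where $\{S,T\}=\{X,Y\}$.
\item Any later anchor $\{p',q'\}$ has smaller element processed no earlier than $p$. Hence both $p'\succeq p$ and $q'\succeq p$. One of them, say $s$, lies in $S$, and the other, $t$, lies in $T$.
\item By \cref{obs:local}, $\ell=\dist(X,Y)\le D$, so $S\subseteq\ball(p,\eps\ell)\subseteq\ball(p,\eps D)$ and $T\subseteq\ball(q,\eps D)$.
\item Since $s\succeq p$, its first coordinate is at least that of $p$, so $s\in\ball^+(p,\eps D)$ and $s$ lies in some $A_i$. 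Likewise $t$ lies in some $C_k$.
\end{itemize}
Thus $\{s,t\}=\{p',q'\}$ was already covered by the step at $\{p,q\}$, contradicting that it is an anchor. Hence distinct anchors go to distinct optimal dumbbells.

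\emph{Running time.} This is the step I expect to be the main obstacle. Store a covered-bit for each of the $\binom n2$ pairs, which gives $O(n^2)$ space. Sorting the points and, for each point, the others by distance costs $O(n^2\log n)$ and lets each step find the points of its two disks by binary search. The difficulty is the marking cost, namely the total of $|A_i|\cdot|C_k|$ over all output dumbbells, since a pair may be marked by several steps. I would first try a packing argument: the anchors of steps covering a fixed pair $\{x,y\}$ have lengths within a factor $(1\pm O(\eps))$ of $\|xy\|$ and endpoints within $\eps D$ of $x$ and $y$. Moreover, any two such anchors are mutually uncovered, so each endpoint lies outside the other step's pieces. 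From this I would aim to show that only $O(1)$ steps cover any pair, giving $O(n^2)$ total marking. If that fails, I would instead only mark, for each step, the pairs whose smaller element is the point currently being swept or a later one, via the half-disk sides. I would then charge the remaining work to the $O(\log n)$ cost of the range queries.
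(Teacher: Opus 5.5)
Your separation, coverage, and step-counting arguments are correct and follow the paper's charging. Showing directly that a later anchor charged to the same optimal dumbbell is already covered, because both of its endpoints are $\succeq p$ by the processing order, is a harmless shortcut around the paper's invariant.

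The gap is the running time, which you leave as a plan. The $O(n^2\log n)$ bound rests on the multiplicity lemma (\cref{lem:multiplicity}). It says that each pair is covered by at most one output dumbbell per step, since the pieces partition the two disjoint disks, and by the outputs of only $O(1)$ steps. You do not prove the second part, and the property you want to build it on is misstated. A later anchor not being covered by an earlier step does not put each of its endpoints outside that step's pieces, only not \emph{both} of them. Precisely, let steps $j$ and $k$, with $j$ earlier, have anchors $\{p_j,q_j\}$ and $\{p_k,q_k\}$, where $p_j\prec q_j$ and $p_k\prec q_k$. Suppose $x\in\ball(p_j,\eps D_j)$ and $y\in\ball(q_j,\eps D_j)$, and similarly for $k$. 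Then $p_k\succeq p_j$ and uncoveredness give only $\max\{\|p_jp_k\|,\|q_jq_k\|\}>\eps D_j$. So a packing of each endpoint separately in the plane fails. You must instead pack the points $(p_j,q_j)$ in $\ball(x,r)\times\ball(y,r)\subset\R^4$, where $L=\|xy\|$ and $r=\eps L/(1-2\eps)$. The products $\ball(p_j,s/2)\times\ball(q_j,s/2)$ are pairwise disjoint and contained in $\ball(x,r+s/2)\times\ball(y,r+s/2)$, which bounds the number of such steps by $(1+2r/s)^4$ per orientation. The paper gets $s=\eps(L-2r)\ge3r/4$ from the pairwise incompatibility of anchors (\cref{obs:anchors}), hence at most $180$ steps per orientation and $360$ in all. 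Your uncoveredness condition, with $s=\eps L/(1+2\eps)$, would give a similar constant.

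You also have to pay for scanning, not only for marking. Binary search locates the two prefixes, but assigning their points to pieces costs $O(1+m)$ for $m$ scanned points. The paper restricts the half-disk side to the points $z\succeq p$, a sublist of the distance-sorted list of $p$ extracted once per $p$ in $O(n)$ time. Then every scanned point lies in an output dumbbell, and $m\le\sum(|A|+|C|)\le\sum(|A|\cdot|C|+1)$ over the at most $40$ output dumbbells $\{A,C\}$. The total work is then linear in the number of steps, at most $\binom n2$, plus the number of markings, which is $O(n^2)$ by the multiplicity lemma. Your fallback does not replace this argument. Marking only the pairs with both endpoints $\succeq p$ does not bound how often a pair is marked. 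An $O(\log n)$ range-query cost per step also cannot pay for marking up to $|A_i|\cdot|C_k|$ pairs.
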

\begin{proof}
\emph{Validity.} The disks $\ball(p,\eps D)$ and $\ball(q,\eps D)$ are disjoint since $\eps<1/2$. By \cref{lem:pieces}, $\diam(A_i),\diam(C_k)\le0.868\,\eps D$, while $\dist(A_i,C_k)\ge(1-2\eps)D$. Since $0.868\le 1-2\eps$ for $\eps\le1/16$, every output dumbbell is $\eps$-separated. Every pair $\{p,q\}$ with $p\prec q$ is covered once $p$ is processed, since the step with anchor $\{p,q\}$ covers it ($p$ lies in some $A_i$, $q$ in some $C_k$).

\emph{Invariant.} By the validity argument and the processing order, when the step with anchor $\{p,q\}$ starts, every pair $\{x,y\}$ with $\min_\prec\{x,y\}\prec p$ is already covered.

\emph{Charging.} Fix a minimum $\eps$-WSPD $\W^*$. Charge the step with anchor $\{p,q\}$ to some $Z=\{X,Y\}\in\W^*$ covering $\{p,q\}$, with $p\in X$ and $q\in Y$. Let $\ell=\dist(X,Y)\le D$. By \cref{obs:local}, $X\subseteq \ball(p,\eps\ell)\subseteq \ball(p,\eps D)$ and $Y\subseteq \ball(q,\eps D)$.

Let $\{x,y\}$, with $x\in X$ and $y\in Y$, be a pair that is not covered before the step. By the invariant, $x\succeq p$, so $x\in \ball^+(p,\eps D)$ lies in some $A_i$, and $y$ lies in some $C_k$. Hence, after the step, every pair covered by $Z$ is covered. In particular, no later anchor is covered by $Z$, and $Z$ is charged only once. Hence distinct steps are charged to distinct dumbbells of $\W^*$, and there are at most $|\W^*|$ steps. Since each step outputs at most $5\cdot8=40$ dumbbells, the output has size at most $40\,|\W^*|$. The running time is analyzed below.
\end{proof}

The condition $\eps\le1/16$ keeps constants simple. With inner radius $\sin(\pi/7)\rho$ instead of $0.434\rho$, every disk piece has diameter at most $2\sin(\pi/7)\rho$, and the factor $40$ only needs $\eps\le1/2-\sin(\pi/7)\approx1/15.12$. For larger $\eps$, more pieces give larger factors (\cref{sec:largeeps}).

Two pairs are \emph{compatible} if a single $\eps$-separated dumbbell covers both of them.

\begin{observation}[Heredity]\label{obs:hered}
If $\{X,Y\}$ is $\eps$-separated, then so is $\{X',Y'\}$ for all nonempty $X'\subseteq X$ and $Y'\subseteq Y$. Hence $\{a,a'\}$ and $\{b,b'\}$ are compatible if and only if $\{\{a,b\},\{a',b'\}\}$ or $\{\{a,b'\},\{a',b\}\}$ has disjoint sides and is $\eps$-separated.
\end{observation}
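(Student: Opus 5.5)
Two directions need to be shown. For the ``if'' direction, suppose, say, $\{\{a,b\},\{a',b'\}\}$ has disjoint sides and is $\eps$-separated. This single dumbbell covers $\{a,a'\}$, since $a\in\{a,b\}$ and $a'\in\{a',b'\}$. It also covers $\{b,b'\}$, since $b$ lies on the first side and $b'$ on the second. So the two pairs are compatible. The other orientation $\{\{a,b'\},\{a',b\}\}$ is handled symmetrically.

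For the ``only if'' direction, start from an $\eps$-separated dumbbell $\{X,Y\}$ covering both pairs. Then one of $a,a'$ lies in $X$ and the other in $Y$. After renaming, take $a\in X$ and $a'\in Y$. The pair $\{b,b'\}$ is also split across the sides, giving two cases.
\begin{itemize}
\item If $b\in X$ and $b'\in Y$, set $X'=\{a,b\}\subseteq X$ and $Y'=\{a',b'\}\subseteq Y$.
\item If $b'\in X$ and $b\in Y$, set $X'=\{a,b'\}$ and $Y'=\{a',b\}$.
\end{itemize}
In either case $X'$ and $Y'$ are disjoint, because $X$ and $Y$ are disjoint by the definition of a dumbbell. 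By the first sentence of the observation, $\{X',Y'\}$ is $\eps$-separated. If the renaming swapped $X$ and $Y$, the resulting unordered dumbbell is the same, so the conclusion is exactly the disjunction in the statement.

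It remains to justify the heredity claim itself, which reduces to two monotonicity facts. Diameter is monotone under inclusion, so $\diam(X')\le\diam(X)$ and $\diam(Y')\le\diam(Y)$. Distance is antimonotone, so $\dist(X',Y')\ge\dist(X,Y)$, since the minimum is taken over fewer pairs. Combining these gives
\[
\max\{\diam(X'),\diam(Y')\}\le\max\{\diam(X),\diam(Y)\}\le\eps\,\dist(X,Y)\le\eps\,\dist(X',Y').
\]
The sides $X'$ and $Y'$ remain nonempty and disjoint.

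There is no real obstacle in this argument. The only care needed is the case split on how $\{b,b'\}$ is oriented across the two sides, which is exactly what produces the two alternatives in the statement. One degenerate point should be addressed: if the pairs share a point, for instance $b=a$, then $\{a,b\}$ collapses to a singleton. This is harmless, because the sides are sets. Disjointness still follows from $X\cap Y=\emptyset$, and it correctly fails when, say, $b=a'$ lies on the wrong side.
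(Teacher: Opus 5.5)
Your proof is correct and follows the reasoning the paper leaves implicit, since the paper states this observation without proof. The reasoning has two parts: the diameter can only shrink and the distance can only grow when passing to subsets, and a covering dumbbell is then restricted to the four endpoints, where the two ways $\{b,b'\}$ can be split across the sides give the two alternatives (your handling of degenerate cases such as $b=a$ or $b=a'$ via disjointness is also right).
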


For the step with anchor $\{p,q\}$, let $R_{pq}$ be the set of pairs $\{x,y\}$ that are uncovered when the step starts, are compatible with $\{p,q\}$, and satisfy $x,y\succeq p$.

\begin{observation}\label{obs:anchors}
If every step of the sweep, with anchor $\{p,q\}$, covers $R_{pq}$, then the anchors are pairwise incompatible, so the number of steps is at most $\opt_\eps$.
\end{observation}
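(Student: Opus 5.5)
Take two distinct anchors $\{p,q\}$ and $\{p',q'\}$ and show directly that they are incompatible; the bound then follows because an $\eps$-WSPD assigns each pair to some dumbbell, and no single dumbbell can cover two incompatible pairs. Consequently any minimum $\eps$-WSPD $\W^*$ contains a distinct dumbbell for each anchor, so the number of steps is at most $|\W^*|=\opt_\eps$.

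\emph{Ordering the two steps.} Suppose the step with anchor $\{p,q\}$ (with $p\prec q$) is performed before the step with anchor $\{p',q'\}$ (with $p'\prec q'$). The sweep handles points in the order $\prec$, so $p\preceq p'$. Since $\min_\prec\{p',q'\}=p'$, both $p'\succeq p$ and $q'\succ p'\succeq p$ hold. A step is performed only when its anchor is uncovered, so $\{p',q'\}$ is uncovered when its own step begins. Covering only grows over time, hence $\{p',q'\}$ was also uncovered when the step for $\{p,q\}$ began.

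\emph{Concluding.} Assume, for contradiction, that $\{p',q'\}$ is compatible with $\{p,q\}$. The three properties above (uncovered at the start of the step, compatible with $\{p,q\}$, and both endpoints $\succeq p$) are exactly the defining conditions of $R_{pq}$, so $\{p',q'\}\in R_{pq}$. The hypothesis says the step with anchor $\{p,q\}$ covers $R_{pq}$, so $\{p',q'\}$ is covered once that step ends. It would then never be chosen as an anchor later, which contradicts our assumption. Hence any two distinct anchors are incompatible.

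The main obstacle is the ordering step. One has to use the sweep order carefully to get $p\preceq p'$, which makes both endpoints of the later anchor $\succeq p$. One also needs the monotonicity of coverage to transfer ``uncovered'' back to the earlier time. The counting at the end is immediate, because the dumbbell covering each anchor is distinct by incompatibility.
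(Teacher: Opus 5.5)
Your proposal is correct and follows essentially the same argument as the paper. You order the two steps, observe that both endpoints of the later anchor are $\succeq p$ and that it was uncovered when the earlier step began, and conclude that compatibility would place it in $R_{pq}$, a contradiction; then, since no $\eps$-separated dumbbell covers two anchors, each anchor needs its own dumbbell of an optimal WSPD.
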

\begin{proof}
Let $\{p',q'\}$ be the anchor of a step performed after the step with anchor $\{p,q\}$. Both $p'$ and $q'$ are $\succeq p$, and $\{p',q'\}$ was uncovered when the step with anchor $\{p,q\}$ was performed. If the two anchors were compatible, that step would have covered $\{p',q'\}$, a contradiction. Since no $\eps$-separated dumbbell covers two anchors, every $\eps$-WSPD has at least as many dumbbells as there are steps.
\end{proof}

The fixed collection covers $R_{pq}$ for every $\eps<1/2$ (the condition $\eps\le1/16$ is needed only for its dumbbells to be $\eps$-separated). Indeed, the charging argument uses only that $Z$ is an $\eps$-separated dumbbell that covers $\{p,q\}$, and if $\{x,y\}$ is compatible with $\{p,q\}$, then, by \cref{obs:hered}, $\{\{p,x\},\{q,y\}\}$ or $\{\{p,y\},\{q,x\}\}$ is such a dumbbell.

\begin{lemma}\label{lem:multiplicity}
Let $\eps\le 1/16$. Every pair is covered by the output dumbbells of at most $360$ steps of the sweep, and by at most one output dumbbell of each step.
\end{lemma}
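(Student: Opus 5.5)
The second claim is the easy one. In a step with anchor $\{p,q\}$, the disks $\ball(p,\eps D)$ and $\ball(q,\eps D)$ are disjoint because $\eps<1/2$. So if some output dumbbell $\{A_i,C_k\}$ covers $\{x,y\}$, the roles are forced: the point lying in $\ball^+(p,\eps D)$ must be in an $A$-side, and the other point must be in a $C$-side. The $5$ half-disk pieces partition the half-disk and the $8$ disk pieces partition the disk. Hence $i$ and $k$ are determined uniquely, and at most one dumbbell of the fixed collection covers $\{x,y\}$.

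For the bound of $360$ steps, fix a pair $\{x,y\}$ and let $L=\|xy\|$. Consider any step with anchor $\{p,q\}$ that covers it, say with $x\in\ball(p,\eps D)$ and $y\in\ball(q,\eps D)$ (the other orientation, with $y$ near $p$ and $x$ near $q$, is counted separately and contributes at most a factor $2$). By the triangle inequality $(1-2\eps)D\le L\le(1+2\eps)D$. Hence $D\le L/(1-2\eps)$, and the anchor points satisfy $p\in\ball(x,r)$ and $q\in\ball(y,r)$ with $r=\eps L/(1-2\eps)$. By the remark after \cref{obs:anchors}, the fixed collection covers $R_{pq}$. So \cref{obs:anchors} applies, and the anchors of all steps are pairwise incompatible.

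The core is a packing argument. Partition $\ball(x,r)$ into $k_1$ cells and $\ball(y,r)$ into $k_2$ cells, all of diameter at most $\delta$. Suppose two anchors $\{p,q\}$ and $\{p',q'\}$ in the same orientation have $p,p'$ in one cell and $q,q'$ in one cell. Then $\{\{p,p'\},\{q,q'\}\}$ has sides of diameter at most $\delta$ at distance at least $L-2r$. If $\delta\le\eps(L-2r)$, this dumbbell is $\eps$-separated, so by \cref{obs:hered} the two anchors are compatible, which is a contradiction. Hence each orientation contains at most $k_1k_2$ anchors, and the total is at most $2k_1k_2$ steps.

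Now $L-2r=L(1-4\eps)/(1-2\eps)$, so the required cell diameter relative to the radius is $\delta/r=1-4\eps\ge3/4$ for $\eps\le1/16$. The main obstacle is choosing concrete partitions of a disk of radius $r$ into pieces of diameter at most $0.75r$ whose counts satisfy $2k_1k_2\le360$. For example, $k_1k_2=180$ could come from $12\cdot15$, or from a common count near $13$ for both disks, verified as in \cref{lem:pieces}.

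I would try a central disk plus annular sectors, possibly with two rings. The annulus between radii $a r$ and $r$ cut into $m$ sectors needs $2\sin(\pi/m)\le0.75$, $\sqrt{1+a^2-2a\cos(2\pi/m)}\le0.75$ and $2a\le0.75$. This forces about $9$ or $10$ outer sectors and an inner region of radius about $0.375r$, which I would check numerically.

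The half-disk constraint could also be exploited to cut the count: processing order gives $x\succeq p$ when the pair is first relevant. I would keep this in reserve in case the plain disk count exceeds $180$ per orientation.
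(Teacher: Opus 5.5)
Your proof is correct, but it replaces the paper's packing step with a pigeonhole over explicit partitions. Both arguments rest on the same fact. Two steps of one orientation have incompatible anchors, so $\max\{\|p_jp_k\|,\|q_jq_k\|\}>\eps(L-2r)=(1-4\eps)r\ge 3r/4$.

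The paper turns this into a volume bound in $\R^4$. The products $\ball(p_j,s/2)\times\ball(q_j,s/2)$ are pairwise disjoint and lie in $\ball(x,r+s/2)\times\ball(y,r+s/2)$, so each orientation has at most $(11/3)^4<181$ steps.

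You instead need a partition of a disk of radius $r$ into few pieces of diameter at most $3r/4$. The step you left to numerical checking does go through with your own suggestion. A central disk of radius $0.36r$ and $10$ annular sectors of angle $\pi/5$ give $11$ pieces. The central disk has diameter $0.72r$, and by \cref{lem:sectordiam} each sector has diameter $\max\{2\sin(\pi/10),\sqrt{1+0.36^2-0.72\cos(\pi/5)}\}\,r<0.74r$. This gives at most $2\cdot 11^2=242\le 360$ steps. With a single central disk, $9$ sectors cannot work: the sector condition needs inner radius at least $0.3796r$, while the central disk needs at most $0.375r$.

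Incidentally, both proofs can drop the factor $2$. Label the endpoints of each anchor by whether they lie in $\ball(x,r)$ or in $\ball(y,r)$, which are disjoint, and run the same argument on all steps at once.

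Your route gives a sharper constant in the plane, but it needs an explicit, numerically verified partition. The paper's route needs no construction and applies verbatim to balls in $\R^d$ (\cref{lem:multd}) and to the enlarged disks of \cref{lem:greedymult}. There your count would need a new partition for each dimension and each ratio $s/r$.
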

\begin{proof}
\emph{At most one dumbbell per step.} Let $\{x,y\}$ be covered in the step with anchor $\{p,q\}$, and let $D=\|pq\|$. The disks $\ball(p,\eps D)$ and $\ball(q,\eps D)$ are disjoint, so one endpoint, say $x$, lies in $\ball(p,\eps D)$ and the other in $\ball(q,\eps D)$. The pieces partition each disk, so exactly one output dumbbell of the step covers $\{x,y\}$.

\emph{At most $360$ steps.} Let $L=\|xy\|$. We split the steps whose output covers $\{x,y\}$ into two classes, according to whether $x$ lies in the disk around the first or the second point of the anchor, and we bound the size of each class by $180$. Consider the class of steps with anchors $\{p_j,q_j\}$ such that $x\in \ball(p_j,\eps D_j)$ and $y\in \ball(q_j,\eps D_j)$, where $D_j=\|p_jq_j\|$. By the triangle inequality, $(1-2\eps)D_j\le L\le(1+2\eps)D_j$. Therefore, all $p_j$ lie in $\ball(x,r)$ and all $q_j$ lie in $\ball(y,r)$, where $r=\eps L/(1-2\eps)$. Since $2r<L$ for $\eps<1/4$, these two disks are disjoint, so the sets $\{p_j,p_k\}$ and $\{q_j,q_k\}$ below are disjoint. For $j\ne k$, the anchors are incompatible by \cref{obs:anchors}, so the dumbbell $\{\{p_j,p_k\},\{q_j,q_k\}\}$ is not $\eps$-separated. Since its distance is at least $L-2r$,
\[
  \max\{\|p_jp_k\|,\|q_jq_k\|\}>s,\qquad\text{where } s=\eps(L-2r)=r\,(1-4\eps)\ge 3r/4 .
\]
For $j\ne k$, if the products $\ball(p_j,s/2)\times\ball(q_j,s/2)$ and $\ball(p_k,s/2)\times\ball(q_k,s/2)$ of disks shared a point $(u,v)$, then $\|p_jp_k\|\le\|p_ju\|+\|up_k\|\le s$ and, similarly, $\|q_jq_k\|\le s$. Hence these products are pairwise disjoint. They are contained in $\ball(x,r+s/2)\times\ball(y,r+s/2)$. The measure of a product of two disks is the product of their areas, so comparing areas, the class has at most $\bigl((r+s/2)/(s/2)\bigr)^4=(1+2r/s)^4\le(11/3)^4<181$ steps.
\end{proof}

\begin{proof}[Proof of the running time in \cref{thm:sweep}]
We store the covered pairs in an $n\times n$ table. For every point $c\in P$, we also store the list of the points of $P$ sorted by distance from~$c$. These lists take $O(n^2\log n)$ time to build and $O(n^2)$ space. The algorithm inspects each pair once, when its $\prec$-smaller endpoint is processed, and checks the table in $O(1)$ time. This takes $O(n^2)$ time in total. When the algorithm starts processing $p$, it extracts from the list of $p$ the points $z\succeq p$, in $O(n)$ time.
In a step with anchor $\{p,q\}$, the points of $\ball(q,\eps D)$ form a prefix of the list of $q$, and the points $z\succeq p$ of $\ball(p,\eps D)$ form a prefix of this extracted list. The step scans these two prefixes and assigns each point to its piece in $O(1)$ time (in the real RAM model, since the sector boundaries have irrational slopes), so it takes $O(1+m)$ time, where $m$ is the number of scanned points. Every scanned point lies in an output dumbbell of the step: a point of $A_i$ together with the piece $C_k$ that contains $q$, and a point of $C_k$ together with the piece $A_i$ that contains $p$. Hence $m\le\sum(|A|+|C|)$, where the sum is over the output dumbbells $\{A,C\}$ of the step. Marking the pairs covered by $\{A,C\}$ takes $O(|A|\cdot|C|)$ time, and $|A|+|C|\le|A|\cdot|C|+1$. Since a step outputs at most $40$ dumbbells, all steps together take time linear in the number of steps plus the total number of pair markings. There are at most $\binom n2$ steps, and by \cref{lem:multiplicity} every pair is marked at most $360$ times, so all steps take $O(n^2)$ time. Sorting dominates the time, and the table and the lists the space.
\end{proof}

\subparagraph*{Dependence on $\eps$.} The bound of \cref{thm:sweep} can also be stated in terms of $\eps$. By \cref{obs:anchors}, the number of steps is at most $\opt_\eps=O(n/\eps^2)$ by the CK construction~\cite{CK95}. Hence, even if every step scans all of $P$ without sorting, the sweep takes $O(n^2/\eps^2)$ time. Together with \cref{thm:sweep}, this gives $O(n^2\min\{\log n,1/\eps^2\})$ time.

\section{Practical Algorithms}
\label{sec:ls}

We present a greedy variant of the sweep (\cref{sec:greedy}), and we use local search in two ways: to reduce the size of a given WSPD (\cref{sec:reshape}), and to enlarge the lower bound given by the steps of the sweep (\cref{sec:lb}). They give large practical improvements, but do not improve the approximation bounds.

\subsection{Greedy Sweep}
\label{sec:greedy}

The proofs of validity and of the approximation factor in \cref{thm:sweep} use only one property of a step with anchor $\{p,q\}$: it covers $R_{pq}$. This set contains $\{p,q\}$ itself and, by the invariant, every uncovered pair covered by an optimal dumbbell that covers $\{p,q\}$. Therefore, these two proofs hold verbatim if each step outputs \emph{any} collection of $\eps$-separated dumbbells covering $R_{pq}$ whose size is at most that of the fixed collection. The running-time analysis does not carry over, and we analyze the greedy variant separately below.

The sweep of \cref{thm:sweep}, which we call the \emph{fixed sweep}, is of theoretical interest only: every step outputs a dumbbell for every two nonempty pieces, even if a single dumbbell covers $R_{pq}$. The \emph{greedy sweep} keeps the proof but not the collection. It covers $R_{pq}$ with dumbbells grown from pairs of $R_{pq}$, and falls back to the fixed collection only when that collection is smaller. To \emph{grow} a dumbbell from a \emph{seed} pair $\{a,b\}$, we start from $\{\{a\},\{b\}\}$ and repeatedly add a point that \emph{fits}, that is, a point whose distance to every point of one side is at most $\eps\|ab\|$ and whose distance to every point of the other side is at least $\|ab\|$; the point joins the first side. The dumbbell remains $\eps$-separated, with $\{a,b\}$ as a closest cross pair. Among the points that fit, we choose one that maximizes the number of newly covered pairs of $R_{pq}$, breaking ties by the number of newly covered pairs overall, and we stop when no point fits. If $|R_{pq}|\le 16$, the step grows one dumbbell from each pair of $R_{pq}$ and covers $R_{pq}$ by greedy set cover over these candidates. Otherwise, it grows a dumbbell from $\{p,q\}$, and then repeatedly grows a dumbbell from the closest pair of $R_{pq}$ that is not yet covered, until $R_{pq}$ is covered. The second rule avoids growing $|R_{pq}|$ dumbbells when $R_{pq}$ is large.

For $\eps\le1/16$, the number of dumbbells grown in a step is capped at $40$. If $R_{pq}$ is not covered by then, the step outputs the fixed collection. The cap does not change the output: a greedy cover with more than $40$ dumbbells is larger than the fixed collection, which has at most $40$ dumbbells, so the step would output the fixed collection anyway. Finally, the greedy sweep deletes, one at a time, the dumbbells whose pairs are all covered by other dumbbells.

\begin{theorem}\label{thm:greedy}
For $P\subset\R^2$ and $\eps\le1/16$, the greedy sweep performs at most $\opt_\eps$ steps and outputs an $\eps$-WSPD of size at most $40\,\opt_\eps$ in $O(n^3)$ time and $O(n^2)$ space.
\end{theorem}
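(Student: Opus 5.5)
The plan splits the statement into two parts. The first part is validity and the approximation factor, which I reduce to \cref{thm:sweep}. The second part is the running time, which I analyze directly.

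\emph{Validity and factor.} First I check that every dumbbell output by a greedy step is $\eps$-separated. A grown dumbbell is built by adding points that fit. Adding a point $z$ to one side keeps that side inside a set whose pairwise distances are at most $\eps\|ab\|$. It also keeps every cross distance at least $\|ab\|$. So the diameter of each side is at most $\eps\|ab\|\le\eps\dist$. The fixed collection is $\eps$-separated by \cref{thm:sweep}.

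Next, each step covers $R_{pq}$. In the greedy case this holds by construction. In the fallback case it holds because the fixed collection covers $R_{pq}$, as noted after \cref{obs:anchors}. Each step outputs at most as many dumbbells as the fixed collection, hence at most $40$.

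Then I repeat the proof of \cref{thm:sweep}. Every pair $\{p,q\}$ with $p\prec q$ is covered once $p$ is processed, since the anchor itself lies in $R_{pq}$. The invariant therefore holds. Since $R_{pq}$ is covered, \cref{obs:anchors} makes the anchors pairwise incompatible, so there are at most $\opt_\eps$ steps and at most $40\,\opt_\eps$ output dumbbells. The final deletion pass only removes redundant dumbbells, so the output stays a valid WSPD and never grows.

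\emph{Running time.} There are at most $\opt_\eps\le\binom n2$ steps, so per-step costs must be bounded carefully. I keep the covered-pair table and the sorted distance lists from the proof of \cref{thm:sweep}, using $O(n^2)$ space. Within a step, I compute $R_{pq}$ by restricting to the candidate points $z\succeq p$ in $\ball(p,\eps D)\cup\ball(q,\eps D)$. This suffices because compatible pairs have endpoints in these disks, by \cref{obs:local} applied to the dumbbell from \cref{obs:hered}.

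The main obstacle is bounding the total work by $O(n^3)$. Growing one dumbbell greedily can cost $\Theta(n^2)$ naively if we rescan all points after each addition, and a step may grow up to $40$ dumbbells.

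To handle this, I restrict growth to points of $\ball(a,\eps\|ab\|)\cup\ball(b,\eps\|ab\|)$. For each candidate I maintain incremental counters: a fit flag and the number of newly covered pairs in $R_{pq}$ and overall. Each addition of a point $w$ updates the counters of all candidates in $O(m)$ time, where $m$ is the number of candidates. The total cost per dumbbell is then $O(m^2+m\cdot\text{(work for counters)})$.

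The key is to charge this cost to pair markings, or to bound $m$ per step by $O(n)$. Either way a step costs $O(n)$ times the number of growth additions plus $O(n)$ setup. I then need the number of additions per step to be $O(n)$ amortized. Using \cref{lem:multiplicity}-style packing, the pairs touched by a step's candidates are covered by only $O(1)$ steps, so summed over all steps the additions total $O(n^2)$. This gives $O(n^3)$ overall. Computing $R_{pq}$ when $|R_{pq}|\le16$ involves at most $16$ growths, and growing from each seed has the same cost.

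Finally, for the deletion pass, I keep a coverage count per pair. For each dumbbell I test whether all of its pairs have count at least $2$, and if so I decrement those counts. The total time is $O(\sum|A||C|)$. Since each pair is covered $O(1)$ times by the outputs, as in \cref{lem:multiplicity} adapted to grown dumbbells whose sides lie in the same disks, this sum is $O(n^2)$.
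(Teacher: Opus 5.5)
Your size argument matches the paper's: grown dumbbells are $\eps$-separated by construction, every step covers $R_{pq}$ with at most $40$ dumbbells, and \cref{obs:anchors} bounds the number of steps by $\opt_\eps$. The running-time plan also has the paper's shape: local candidates, counters updated in $O(m)$ time per addition, and a packing bound. However, the packing step, which is where the real work lies, is asserted for the wrong disks.

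A seed $\{a,b\}\in R_{pq}$ only satisfies $a\in\ball(p,\eps D)$, $b\in\ball(q,\eps D)$ and $\|ab\|\le(1+2\eps)D$, by \cref{obs:hered,obs:local}. So a point that fits can be up to $\eps(2+2\eps)D$ from $p$ or $q$. Hence the growth candidates do not lie in the disks $\ball(p,\eps D)$ and $\ball(q,\eps D)$ of \cref{lem:multiplicity}. Neither do the sides of the grown output dumbbells in your deletion-pass argument, so \cref{lem:multiplicity} does not bound the number of additions.

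What you need is the packing for enlarged disks, which is the paper's \cref{lem:greedymult}. With $\lambda=17/8\ge2+2\eps$, every pair lies in $\ball(p,\lambda\eps D)\times\ball(q,\lambda\eps D)$ for $O(1)$ anchors $\{p,q\}$. Its proof repeats the packing with $r=\lambda\eps L/(1-2\lambda\eps)$ and $s=\eps(L-2r)=r(1-4\lambda\eps)/\lambda$. It works only because $4\lambda\eps<1$ for $\eps\le1/16$, and you have to check this.

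Your per-step accounting also needs repair. Let $B_p=P\cap\ball(p,\lambda\eps D)$, $B_q=P\cap\ball(q,\lambda\eps D)$, and $M=|B_p|\cdot|B_q|$. Enumerating $R_{pq}$, scanning it for each new seed, and marking the output dumbbells take $\Theta(M)$ time. Since $M$ can be $\Theta(n^2)$ in a single step, the ``$O(n)$ setup'' is wrong. Your intermediate goal of $O(n)$ additions per step would give only $O(n^4)$ over up to $\binom n2$ steps. What you need, and what your next sentence claims, is $O(n^2)$ additions in total.

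The fix is to charge everything to $M$. Let $N=|B_p|+|B_q|$. A step makes at most $40N$ additions and takes $O(N^2+M)$ time. Since $p\in B_p$ and $q\in B_q$, we have $N\le2M$, and also $N\le n$, so a step takes $O(nM)$ time. The enlarged packing gives $\sum M=O(n^2)$ over all steps, hence $O(n^3)$ time in total. The same bound, $\sum|A|\cdot|C|\le40\sum M$, makes your deletion pass take $O(n^2)$ time. This is a point the paper leaves implicit, and you were right to address it.
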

\begin{proof}
Grown dumbbells are $\eps$-separated by construction, and every step covers $R_{pq}$ with at most $40$ dumbbells, so the size bounds follow as explained above. The running time is proved below.
\end{proof}

We prove the time and space bounds of \cref{thm:greedy}. The proof holds in every dimension. Let $\lambda=17/8$. For a step with anchor $\{p,q\}$ and $D=\|pq\|$, let $B_p=P\cap\ball(p,\lambda\eps D)$ and $B_q=P\cap\ball(q,\lambda\eps D)$. All the work of the step takes place in these two sets. Indeed, every seed $\{a,b\}$ is $\{p,q\}$ or a pair of $R_{pq}$, so some $\eps$-separated dumbbell covers both $\{a,b\}$ and $\{p,q\}$, and its distance is at most $D$. By \cref{obs:local}, we may name the seed so that $a\in\ball(p,\eps D)$, $b\in\ball(q,\eps D)$, and $\|ab\|\le(1+2\eps)D$. A point that fits is within distance $\eps\|ab\|$ of $a$ or $b$, hence within distance $\eps(2+2\eps)D\le\lambda\eps D$ of $p$ or~$q$. Therefore, every pair of $R_{pq}$ and every dumbbell that the step grows or outputs has one side in $B_p$ and the other in $B_q$. The next lemma extends the second part of \cref{lem:multiplicity} to these larger disks.

\begin{lemma}\label{lem:greedymult}
Let $P\subset\R^d$, $\eps\le1/16$, and $\lambda=17/8$. Every pair $\{x,y\}$ satisfies $x\in\ball(p,\lambda\eps D)$ and $y\in\ball(q,\lambda\eps D)$, where $D=\|pq\|$, for the anchors $\{p,q\}$ of at most $2\cdot11^{2d}$ steps of the greedy sweep, in one of the two orders of $p$ and $q$.
\end{lemma}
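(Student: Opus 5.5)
I will follow the packing argument in the proof of \cref{lem:multiplicity}, now in $\R^d$ and with the enlarged radius $\lambda\eps D$. Fix the pair $\{x,y\}$ with $L=\|xy\|$, and split the relevant steps into two classes according to which endpoint of the anchor lies near $x$. I will bound each class by $11^{2d}$, which gives the factor $2$. The packing needs pairwise incompatibility of anchors. The greedy sweep covers $R_{pq}$ in every step, so \cref{obs:anchors} still applies to it, exactly as the paper notes before \cref{thm:greedy}.

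\textbf{Localization.} Consider a class of anchors $\{p_j,q_j\}$ with $x\in\ball(p_j,\lambda\eps D_j)$ and $y\in\ball(q_j,\lambda\eps D_j)$, where $D_j=\|p_jq_j\|$. The triangle inequality gives $(1-2\lambda\eps)D_j\le L\le(1+2\lambda\eps)D_j$. With $\eps\le1/16$ we have $2\lambda\eps\le 17/64$, so $D_j\le L/(1-2\lambda\eps)\le (64/47)L$. Hence every $p_j$ lies in $\ball(x,r)$ and every $q_j$ lies in $\ball(y,r)$, where
\[
  r=\frac{\lambda\eps L}{1-2\lambda\eps}\le \frac{17}{128}\cdot\frac{64}{47}L=\frac{17}{94}L .
\]
Since $2r<L$, the two balls are disjoint, so the sides $\{p_j,p_k\}$ and $\{q_j,q_k\}$ used below are disjoint.

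\textbf{Separation between anchors.} For $j\ne k$, the anchors are incompatible, so by \cref{obs:hered} the dumbbell $\{\{p_j,p_k\},\{q_j,q_k\}\}$ is not $\eps$-separated. Its distance is at least $L-2r$, so
\[
  \max\{\|p_jp_k\|,\|q_jq_k\|\}>s,\qquad s=\eps(L-2r).
\]

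\textbf{Packing.} The products $\ball(p_j,s/2)\times\ball(q_j,s/2)$ are pairwise disjoint and all lie in $\ball(x,r+s/2)\times\ball(y,r+s/2)$. Comparing $2d$-dimensional volumes bounds the class by $(1+2r/s)^{2d}$.

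\textbf{Main obstacle.} The difficulty is the constant. Here $r/s=\lambda/\bigl((1-2\lambda\eps)(1-2r/L)\bigr)$, which does not vanish with $\eps$. Plugging in $\eps=1/16$ gives $r/s\approx\lambda/\bigl((47/64)(60/94)\bigr)\approx 4.53$, so $1+2r/s\approx 10.06\le 11$. The whole computation must therefore be done with $r$ and $s$ expressed in units of $L$ and checked at $\eps=1/16$. I also need monotonicity in $\eps$ to cover smaller $\eps$. This is easy, because $r/s=\lambda/\bigl((1-2\lambda\eps)-2\lambda\eps\bigr)=\lambda/(1-4\lambda\eps)$, which decreases as $\eps$ decreases. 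At $\eps=1/16$ this equals $(17/8)/(1-17/32)=68/15$, so $1+2r/s=151/15<11$. This gives each class at most $11^{2d}$ steps, and the total is at most $2\cdot 11^{2d}$.
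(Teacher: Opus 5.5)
Your proposal is correct and follows essentially the same argument as the paper. You split into two classes, localize the anchors in balls of radius $r=\lambda\eps L/(1-2\lambda\eps)$ around $x$ and $y$, and use incompatibility of the anchors via \cref{obs:anchors}. You then pack products of balls of radius $s/2$ in $\R^{2d}$ with $s=r(1-4\lambda\eps)/\lambda$, which gives $(1+2r/s)^{2d}\le(151/15)^{2d}<11^{2d}$ per class, exactly as in the paper.
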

\begin{proof}
Let $L=\|xy\|$. Consider the steps with anchors $\{p_j,q_j\}$ such that $x\in\ball(p_j,\lambda\eps D_j)$ and $y\in\ball(q_j,\lambda\eps D_j)$, where $D_j=\|p_jq_j\|$; the steps with the roles of $x$ and $y$ exchanged are counted in the same way. By the triangle inequality, $L\ge(1-2\lambda\eps)D_j$, so $p_j\in\ball(x,r)$ and $q_j\in\ball(y,r)$, where $r=\lambda\eps L/(1-2\lambda\eps)$. Since $\lambda\eps\le17/128<1/4$, we have $2r<L$, so these two balls are disjoint. The greedy sweep covers $R_{pq}$ in every step, so by \cref{obs:anchors}, the dumbbell $\{\{p_j,p_k\},\{q_j,q_k\}\}$ is not $\eps$-separated for $j\ne k$. Its distance is at least $L-2r$, so
\[
  \max\{\|p_jp_k\|,\|q_jq_k\|\}>s,\qquad\text{where } s=\eps(L-2r)=\frac{r\,(1-4\lambda\eps)}{\lambda}\ge\frac{15}{68}\,r .
\]
Hence the products $\ball(p_j,s/2)\times\ball(q_j,s/2)\subset\R^{2d}$ are pairwise disjoint. They are contained in $\ball(x,r+s/2)\times\ball(y,r+s/2)$. Comparing volumes, there are at most $(1+2r/s)^{2d}\le(151/15)^{2d}<11^{2d}$ such steps.
\end{proof}

\begin{proof}[Proof of the running time in \cref{thm:greedy}]
We store the covered pairs and the sorted lists as in the proof of \cref{thm:sweep}, in $O(n^2\log n)$ time, and inspect every pair once. Consider a step, and let $N=|B_p|+|B_q|$ and $M=|B_p|\cdot|B_q|$. Since $\lambda\eps<1/2$, the sets $B_p$ and $B_q$ are disjoint prefixes of the lists of $p$ and $q$. Enumerating $R_{pq}$ takes $O(M)$ time, and we mark its pairs in the table, so that membership in $R_{pq}$ is tested in $O(1)$ time. To grow a dumbbell, we maintain for every point of $B_p\cup B_q$ whether it fits and how many pairs it would newly cover. When a point joins a side, these values change only by its own contribution, so they are updated in $O(N)$ time, and a dumbbell is grown in $O(N^2)$ time. Its seed is found by scanning $R_{pq}$ in $O(M)$ time. The greedy set cover for $|R_{pq}|\le16$ takes $O(N)$ time, and marking each of the at most $40$ output dumbbells takes $O(M)$ time. Since $p\in B_p$ and $q\in B_q$, we have $N\le 2M$, and $N\le n$, so the step takes $O(N^2+M)=O(nM)$ time. By \cref{lem:greedymult}, every pair contributes to $M$ in $O(1)$ steps, so the sum of $M$ over all steps is $O(n^2)$. Hence the greedy sweep runs in $O(n^3)$ time, and the table and the lists take $O(n^2)$ space.
\end{proof}

\subsection{Local Search for Smaller WSPDs}
\label{sec:reshape}

The sweep of \cref{sec:sweep} never revisits a dumbbell once it is output. We now describe a local-search heuristic that tries to reduce the size of any given $\eps$-WSPD, for instance the CK WSPD or the output of the sweep. It uses a single move, \emph{reshape}, and never increases the size. It uses no solver: unlike exact and heuristic set-cover solvers, it never builds the set-cover instance of \minwspd, whose size limits these solvers in practice (\cref{sec:experiments}).

\subparagraph*{Reshape.}
A pair is \emph{private} to a dumbbell of $\W$ if that dumbbell is the only one covering it, and a dumbbell without private pairs is \emph{redundant} and can be deleted.
A point $p$ can \emph{join} the side $X$ of an $\eps$-separated dumbbell $\{X,Y\}$ if $p\notin X\cup Y$ and $\{X\cup\{p\},Y\}$ is still $\eps$-separated. Unlike the greedy sweep of \cref{sec:greedy}, which keeps the closest cross pair fixed, we allow $\dist(X\cup\{p\},Y)$ to be smaller than $\dist(X,Y)$. To \emph{grow} a dumbbell, we repeatedly add a point that can join one of its sides, choosing one that newly covers the most pairs private to other dumbbells (ties are broken at random), until no point can join.

The move (\cref{fig:reshape}) is applied to a \emph{target} dumbbell $T\in\W$. If $T=\{X,Y\}$ is redundant, delete it. Otherwise, replace $T$ by the \emph{subdumbbell spanned by its private pairs}, the dumbbell $\{X',Y'\}$ where $X'\subseteq X$ and $Y'\subseteq Y$ consist of the endpoints of these pairs, which is $\eps$-separated by \cref{obs:hered}. Grow it, and then delete, one at a time, the dumbbells that the new dumbbell makes redundant. Shrinking $T$ first releases the points that lie in no private pair of $T$, and with them most pairs that other dumbbells also cover, so the new dumbbell can grow in a different direction than the old one. Its growth rule favors pairs that are private to other dumbbells, so repeated reshapes tend to make dumbbells redundant.

\begin{figure}[htb]
\centering
\input{figs/reshape.tex}
\caption{A reshape of $T$ for $\eps=1/3$, showing three dumbbells, each as its two sides and a closest cross pair. (a)~With $y$ in $T$, $x$ cannot join $T$: the right side would be too wide for the smaller distance. (b)~The pairs private to $T$ (thin lines) span the subdumbbell, and $y$ is released. (c)~Free growth adds $x$, which covers the pairs private to $U$, so $U$ is deleted.}
\label{fig:reshape}
\end{figure}

The targets are taken in passes over $\W$, each in a new random order, until the time budget ends. Deleted dumbbells are skipped, and dumbbells created during a pass wait for the next one. At the end, the remaining redundant dumbbells are deleted.

\subparagraph*{Large neighborhood search.}
For comparison, we also use an integer-programming move. The \emph{large neighborhood search (LNS)} move removes a target $T$ and its $k$ nearest dumbbells, which leaves a set $U$ of uncovered pairs. It re-covers $U$ by a minimum number of dumbbells, chosen among the removed ones and, if $|U|\le u$, maximal $\eps$-separated dumbbells whose closest cross pair is in $U$, with CPLEX. The change is accepted only if it uses fewer dumbbells than it removes. LNS takes its targets in the same passes as reshape, and its neighborhoods and parameters are given in \cref{sec:setup}.

\begin{observation}\label{obs:ls}
Reshape and LNS keep $\W$ an $\eps$-WSPD and do not increase $|\W|$.
\end{observation}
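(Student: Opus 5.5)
The plan is to argue by induction over the sequence of moves: an invariant holds initially and each single move preserves it. The invariant has two parts. First, every dumbbell of $\W$ is $\eps$-separated and has disjoint nonempty sides. Second, every one of the $\binom n2$ pairs is covered by some dumbbell of $\W$. In parallel, I will check that no move increases $|\W|$. The final cleanup, which deletes the remaining redundant dumbbells, is handled by the same argument as individual deletions.

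\emph{Deletions.} A dumbbell is deleted only when it is redundant, that is, when it has no private pair. So every pair it covers is covered by another dumbbell that is still in $\W$. Deletions are done one at a time, and redundancy is re-evaluated against the current $\W$ before each deletion. Hence coverage is preserved, and the size decreases by one. This is the point to be careful about. Deleting several dumbbells that are simultaneously redundant could uncover a pair shared only among them. Sequential re-checking is what makes the argument valid, so I would state explicitly that ``redundant'' always refers to the current collection.

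\emph{Reshape on a non-redundant target.} $T=\{X,Y\}$ is replaced by $\{X',Y'\}$, spanned by its private pairs. This subdumbbell has nonempty sides, because a private pair exists. It is $\eps$-separated by \cref{obs:hered}, and its sides are disjoint because $X'\subseteq X$ and $Y'\subseteq Y$. Every pair covered by $T$ is either private to $T$, and then it is covered by $\{X',Y'\}$, or it is covered by some other dumbbell of $\W$. So coverage holds after the replacement, and the size is unchanged. Growing only adds points $p\notin X\cup Y$ for which the enlarged dumbbell remains $\eps$-separated. This keeps the sides disjoint and separated, and it only enlarges the set of covered pairs, with the size still unchanged. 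The subsequent deletions are handled by the deletion argument above.

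\emph{LNS.} Removing $T$ and its $k$ neighbours uncovers exactly the set $U$ of pairs. The re-cover is chosen among the removed dumbbells and maximal $\eps$-separated dumbbells, all of which are valid. It covers $U$ by construction, so coverage is restored. The change is accepted only if it uses fewer dumbbells than it removed, and otherwise $\W$ is left unchanged. Therefore $|\W|$ never increases.
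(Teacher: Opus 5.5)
Your proof is correct and follows the same route as the paper. The new dumbbell contains the subdumbbell spanned by the private pairs of $T$ and stays $\eps$-separated by \cref{obs:hered} and the join rule, deletions are safe because redundancy is re-checked against the current collection before each one, and LNS accepts only a valid re-cover of $U$ that is strictly smaller; you simply spell out details the paper leaves implicit, such as why the sides of the subdumbbell are nonempty and disjoint.
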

\begin{proof}
In a reshape, the new dumbbell contains the subdumbbell spanned by the private pairs of $T$, so it covers every pair that only $T$ covered. It is $\eps$-separated by \cref{obs:hered} and because every added point keeps it $\eps$-separated. Only redundant dumbbells are deleted, and redundancy is checked again before each deletion, so every pair stays covered. A reshape replaces $T$ by one dumbbell and otherwise only deletes dumbbells, so $|\W|$ does not increase. LNS adds $\eps$-separated dumbbells that cover $U$, and only if the size decreases.
\end{proof}

\subparagraph*{Implementation.}
A reshape must decide which pairs are private, which dumbbells become redundant, and which points can join a side, each in constant time per pair or point. A table stores, for every pair, the number of dumbbells of $\W$ that cover it. Inserting or deleting $\{X,Y\}$ updates it in $O(|X|\cdot|Y|)$ time. A pair is private if its count is~$1$, so finding private pairs, testing redundancy, and computing the gain of a point take $O(1)$ time per pair.
Shrinking $T$ and deleting dumbbells only decrease counts, so a dumbbell becomes redundant only if the new $T$ covers one of its private pairs. To find it, we need the \emph{owner} of such a pair, the dumbbell to which it was private. Lists of the dumbbells containing each point are expensive: at $n=10000$, a point lies in hundreds to thousands of dumbbells. Instead, a second table stores, for every pair, the bitwise exclusive or of the indices of the dumbbells that cover it, updated in the same loop as the counts. Since $k\oplus k=0$, deleting a dumbbell cancels its insertion, so for a pair with count~$1$ the entry is the index of its owner. Hence the dumbbells that become redundant are found while $T$ is inserted, at no extra asymptotic cost. In a preliminary test with $n=10000$, this made reshape more than an order of magnitude faster than point-to-dumbbell lists.
A point $p$ that can join $X$ satisfies $\|px\|\le\eps\dist(X,Y)$ for every $x\in X$, so the candidates are the points within distance $\eps\dist(X,Y)$ of the initial $X$ or $Y$, found by range queries in a kd-tree. For $\eps<1/2$, each can join only the side it is close to. Diameters only increase and the distance between the sides only decreases, so a candidate that cannot join is discarded for good. For each candidate, we maintain its largest distance to its side, its smallest distance to the other side, and its gain, so that after each addition, updating them and testing separation exactly take constant time per candidate.
The points are relabeled in the order of a Hilbert curve, and the tables are packed upper triangles of $\binom n2$ entries of $4$ bytes, stored by rows, so the entries of the pairs of a dumbbell are close in memory.

\subsection{Local Search for Larger Lower Bounds}
\label{sec:lb}

The \emph{compatibility graph} $G_\eps$ has the pairs as vertices and an edge between any two compatible pairs. By the proof of \cref{obs:anchors}, every independent set $I$ of $G_\eps$ gives the lower bound $|I|\le\opt_\eps$. The anchors of the greedy sweep form such a set, and we call their number $\lb_0$. We enlarge this set by a local search for large independent sets, starting from the anchors. For the current set $I$, a pair is \emph{free} if it has no neighbor in $I$, and \emph{tight} for $e\in I$ if $e$ is its only neighbor in $I$. These terms and the $1$-for-$2$ swaps below follow the local search of Andrade, Resende, and Werneck~\cite{ARW12} for maximum independent sets. The search has three stages.
\begin{enumerate}
\item Add free pairs to $I$ until none is left.
\item Repeat passes over $I$. For each $e\in I$, if two pairs $f$ and $g$ that are tight for $e$ are incompatible with each other, replace $e$ by $f$ and $g$ (a \emph{$1$-for-$2$ swap}), and add the pairs that have become free. Stop after a pass without swaps, and let $\lb_1=|I|$.
\item Make $30$ more passes in which, in addition, a pair $e$ without a swap is replaced, with probability $1/2$, by a random pair that is tight for $e$, followed again by the free pairs. These \emph{plateau moves}~\cite{PH06,GLP08} do not change $|I|$, but they change which pairs are tight, so later passes find new swaps. The lower bound $\lb$ is $|I|$ at the end.
\end{enumerate}
A pair joins $I$ only if it is free or tight for the leaving pair, so $I$ stays independent and $\lb_0\le\lb_1\le\lb\le\opt_\eps$.

\subparagraph*{Implementation.}
The search keeps a table with, for every pair $w$, the number $t(w)$ of pairs of $I$ compatible with $w$. Hence $w$ is free if $t(w)=0$, and tight for $e$ if $t(w)=1$ and $w$ is compatible with $e$. If $\{x',y'\}$ is compatible with $e=\{x,y\}$, then, by \cref{obs:hered,obs:local} and after possibly exchanging $x'$ and $y'$, we have $x'\in\ball(x,\eps\|xy\|)$ and $y'\in\ball(y,\eps\|xy\|)$. As in the proof of \cref{thm:sweep}, these points are prefixes of the lists of the points sorted by distance from $x$ and from $y$, which gives the enumeration, and the update of the table when $e$ joins or leaves $I$, in $O(M_e)$ time, where $M_e$ is the product of the numbers of points in the two disks. In a pass, we skip $e$ if the dumbbell formed by the first points and the second points of the tight pairs of $e$, oriented as above, is $\eps$-separated: then all these pairs are pairwise compatible by \cref{obs:hered}. Otherwise, we test pairs of tight pairs for compatibility, at most $2\cdot10^5$ of them per $e$ and pass. Because of this limit, $\lb_1$ need not be a local optimum for $1$-for-$2$ swaps.

\begin{proposition}\label{prop:lbtime}
Let $\eps\le1/16$. The search takes $O(n^2\log n+kn^3)$ time and $O(n^2)$ space, where $k$ is the number of passes, and $k\le\lb_1-\lb_0+31$.
\end{proposition}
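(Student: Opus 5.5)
The proof splits into the pass count, the preprocessing and space, and the cost of one pass. Two earlier facts do the work: the volume packing argument of \cref{lem:multiplicity} and \cref{lem:greedymult}, and the incompatibility of pairs in $I$ from \cref{obs:anchors}.

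\emph{Number of passes.} Stage~3 makes exactly $30$ passes. In stage~2, every pass except the last performs at least one $1$-for-$2$ swap, and each swap increases $|I|$ by at least one. Stage~2 starts with $|I|\ge\lb_0$ and ends with $|I|=\lb_1$, so it makes at most $\lb_1-\lb_0+1$ passes. Stage~1 is not a pass. This gives $k\le\lb_1-\lb_0+31$.

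\emph{Preprocessing and space.} The table $t$, the set $I$ and the sorted lists take $O(n^2)$ space, and building the lists takes $O(n^2\log n)$ time, as in \cref{thm:sweep}. The per-$e$ tests use at most $2\cdot10^5=O(1)$ compatibility checks of pairs of tight pairs. Each check takes $O(1)$ time by \cref{obs:hered}, since it only compares four points.

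\emph{Cost per pass.} The main step is to show that one pass costs $O(n^3)$. The cost of processing $e\in I$ is dominated by enumerating its compatible pairs and updating $t$ when $e$ leaves or a replacement joins, which takes $O(M_e)$ time. I will show that $\sum_{e\in I}M_e=O(n^2)$ by reusing the packing argument of \cref{lem:greedymult} with $\lambda=1$. Fix a pair $\{x,y\}$. It contributes to $M_e$ only when, after orienting $e=\{a,b\}$, we have $x\in\ball(a,\eps\|ab\|)$ and $y\in\ball(b,\eps\|ab\|)$. The pairs of $I$ are pairwise incompatible, exactly as the anchors are in \cref{obs:anchors}. The volume bound then shows that $\{x,y\}$ is counted by $O(1)$ elements of $I$, namely at most $2\cdot11^{4}$ in the plane. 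Hence the current $I$ costs $O(n^2)$ in total.

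The obstacle is that $I$ changes during a pass. Each swap or plateau move inserts new pairs, and adding free pairs touches more, each at cost $O(M_f)$ with $M_f\le n^2$. So I need to bound the number of membership changes per pass. Each $e\in I$ triggers $O(1)$ removals and insertions, plus the insertion of newly free pairs. Every free-pair insertion increases $|I|$, so there are at most $\binom n2$ of them overall. The crude bound, $O(|I|+\text{free insertions})$ changes per pass at cost $O(n)$ each, would need $M_f=O(n)$, and that is false in general. Instead I charge each membership change to $O(M_f)$ and use the packing bound again. Inside one pass, a pair $\{x,y\}$ can be counted by pairs that joined $I$ only $O(1)$ times per element of $I$ being processed. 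This gives $O(n)$ pairs processed times $O(n^2)$, that is $O(n^3)$ per pass. I would first try this simple bound, $O(n^2)$ per membership change times $O(n)$ changes per pass. Here $|I|\le\opt_\eps=O(n)$ comes from the CK bound, which needs fixed $\eps$; otherwise I would use the packing argument to show $|I|=O(n)$ changes per pass. Summing over the passes gives $O(kn^3)$, plus the $O(n^2\log n)$ preprocessing.
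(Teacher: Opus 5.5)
Your count of passes, your preprocessing and space bounds, and the packing fact $\sum_{e\in I}M_e=O(n^2)$ for every pairwise incompatible set $I$ are all correct and match the paper.

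\textbf{The cost of a pass.} This is where the gap is. Your argument finally rests on $|I|=O(n)$, so that a pass makes $O(n)$ membership changes of cost $O(n^2)$ each.
\begin{itemize}
\item The CK bound only gives $|I|\le\opt_\eps=O(n/\eps^2)$, hence $O(kn^3/\eps^2)$. The statement is meant with constants independent of $\eps$, which is what the hypothesis $\eps\le1/16$ is for.
\item The packing argument cannot give $|I|=O(n)$. Since $M_e\ge1$, it only gives $|I|=O(n^2)$, and this is tight. If $\eps$ is smaller than the ratio of the smallest to the largest distance in $P$, no two pairs are compatible, and $I$ consists of all $\binom n2$ pairs.
\item Your sentence about a pair being counted ``$O(1)$ times per element of $I$ being processed'' bounds nothing either, since there are $|I|$ such elements.
\end{itemize}
The missing observation is that a pass processes only the pairs that were in $I$ when it started. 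So every pair removed during the pass belongs to the starting set. Every pair inserted during the pass, by a swap, a plateau move, or as a free pair, stays in $I$ until the pass ends, so it belongs to the final set. Both sets are pairwise incompatible, so the packing bound applies to each of them. Hence all enumerations and updates of $t$ in a pass take only $O(n^2)$ time. The same argument applied to the final $I$ bounds stage~1, which only adds pairs.

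\textbf{The skip test.} You omit the test that skips $e$, which checks whether the dumbbell spanned by the tight pairs of $e$ is $\eps$-separated. It takes time quadratic in the number $N_e$ of points in the two disks of $e$. This is the term that actually produces $n^3$. Each disk contains an endpoint of $e$, so $N_e^2\le2nM_e$, as in the proof of \cref{thm:greedy}. Summing over the starting set gives $O(n\sum_eM_e)=O(n^3)$ per pass. The at most $2\cdot10^5$ compatibility tests per $e$ add only $O(|I|)=O(n^2)$.

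With these two corrections, your outline becomes the paper's proof.
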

\begin{proof}
Sorting takes $O(n^2\log n)$ time, and the lists and the table take $O(n^2)$ space. The first stage scans all pairs once, and it adds pairs to $I$ without removing any, so its updates take $O(\sum_{e\in I}M_e)$ time for the final set $I$. The second part of the proof of \cref{lem:multiplicity} uses only that the anchors are pairwise incompatible. Applied to $I$, it shows that every pair lies in the product of the two disks of at most $360$ pairs of $I$, so $\sum_{e\in I}M_e=O(n^2)$ for every pairwise incompatible set $I$. In a pass, the pairs processed were in $I$ when the pass started, and the pairs added during the pass stay in $I$ until it ends, so enumerations and updates take $O(n^2)$ time in all. The test that skips $e$ takes time quadratic in the number $N_e$ of points in the two disks, and $N_e^2\le2nM_e$ as in the proof of \cref{thm:greedy}, which gives $O(n^3)$ per pass. The tests of pairs of tight pairs take $O(1)$ time per pair of $I$. Every pass of the second stage except the last increases $|I|$, so there are at most $\lb_1-\lb_0+31$ passes.
\end{proof}

In the experiments, stage~2 makes at most $13$ passes, and the time grows like $n^2$ (\cref{fig:lb}).

\section{Sweep in Higher Dimensions}
\label{sec:higher}

In this section, $P\subset\R^d$ for a fixed $d\ge3$, and $\ball(c,\rho)$ denotes a closed ball. The order $\prec$ is the lexicographic order on $\R^d$, and the half-ball $\ball^+(c,\rho)$ consists of the points of $\ball(c,\rho)$ whose first coordinate is at least that of~$c$. Constants in the $O$-notation may depend on $d$. The local search does not depend on the dimension; \cref{sec:exp3d} reports experiments in $\R^3$.

\begin{theorem}\label{thm:sweepd}
Let $d\ge3$ be fixed and $\eps\le1/16$. There are pieces for which, for every $P\subset\R^d$, the sweep performs at most $\opt_\eps$ steps and outputs an $\eps$-WSPD of size less than $32^d\opt_\eps$, and at most $2048\,\opt_\eps$ if $d=3$, in $O(n^2\log n)$ time and $O(n^2)$ space. With the same pieces, the greedy sweep satisfies the same bounds, except that it runs in $O(n^3)$ time.
\end{theorem}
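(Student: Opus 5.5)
The plan is to reuse the proof of \cref{thm:sweep} verbatim and change only the pieces. That proof uses three properties of the pieces. First, they partition the half-ball $\ball^+(p,\eps D)$ and the ball $\ball(q,\eps D)$. Second, every piece has diameter at most $(1-2\eps)\eps D$, which makes the output dumbbells $\eps$-separated since $\dist(A_i,C_k)\ge(1-2\eps)D$. Third, the numbers $a$ and $c$ of pieces give the factor $a\cdot c$. The invariant, the charging argument, and \cref{obs:anchors} use only the processing order $\prec$ and the fact that the step covers $R_{pq}$. They hold in $\R^d$ with $\prec$ the lexicographic order and $\ball^+$ the half-ball defined above. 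So it suffices to partition a ball of radius $\rho$ into pieces of diameter at most $\delta\rho$ with $\delta=7/8\le1-2\eps$, and to count them.

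For general $d$, I would cut the bounding cube $[-\rho,\rho]^d$ of the ball by a grid of cubes of side $\delta\rho/\sqrt d$. The number of cubes meeting the ball is bounded by a volume argument: every such cube lies in $\ball(c,\rho+\delta\rho)$, so their number is at most
\[
\mathrm{vol}\bigl(\ball(0,(1+\delta)\rho)\bigr)\big/(\delta\rho/\sqrt d)^d=\omega_d\bigl((1+\delta)\sqrt d/\delta\bigr)^d,
\]
where $\omega_d$ is the volume of the unit ball. Using $\omega_d\le(2\pi e/d)^{d/2}/\sqrt{\pi d}$, this is at most $\bigl(\sqrt{2\pi e}\,(1+\delta)/\delta\bigr)^d\approx(8.85)^d$ per ball. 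The half-ball gets at most the same number, so the product is below $32^d$ once $8.85^2<32$, which it is (about $78$ needs care: $8.85^2\approx 78>32$). This fails, so I would instead use the half-ball bound honestly and a finer covering: cover the ball by balls of radius $\delta\rho/2$ via a maximal packing. The centers are $\delta\rho/2$-separated, so the number of pieces is at most $((1+\delta/4)/(\delta/4))^d=(4/\delta+1)^d\approx 5.6^d$. The product for half-ball and ball is then $\le 5.6^{2d}\approx 31.4^d<32^d$. Assigning each point to its nearest center (a Voronoi partition restricted to the ball) gives pieces of diameter at most $\delta\rho$. This is the construction I would commit to.

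For $d=3$ an explicit partition is needed to get $2048=32\cdot64$. My first try would be to imitate \cref{lem:pieces}: a central ball plus annular "caps" obtained by projecting a partition of the sphere into spherical regions. I aim for $64$ ball pieces and $32$ half-ball pieces, each with diameter at most $7\rho/8$, verified by bounding the cap diameters as in \cref{lem:pieces}. This is the main obstacle. Choosing the radii and the spherical tiling so that the diameter bound closes with exactly these counts requires a concrete calculation. If the two-layer construction fails, I would fall back to three concentric layers.

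The running time carries over unchanged. In the proof of the running time of \cref{thm:sweep}, the sorted prefixes work in any dimension, and assigning a point to a piece takes $O(1)$ time for fixed $d$. The multiplicity bound comes from \cref{lem:greedymult}, which already holds in $\R^d$ with $O(1)$ anchors per pair, since $\ball(p,\eps D)\subseteq\ball(p,\lambda\eps D)$. For the greedy sweep, the cap on grown dumbbells becomes the size of the fixed collection, and the proof of \cref{thm:greedy} is stated to hold in every dimension.
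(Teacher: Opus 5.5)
\textbf{The gap: the $d=3$ factor.} Your proposal does not prove the bound $2048\,\opt_\eps$ for $d=3$. You name the target counts $64$ and $32$, but you give no partition that achieves them. The spherical-cap construction is left as ``a concrete calculation'' with a fallback to three layers, so this part of the statement is unproved.

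You also dropped grids too early. Your volume estimate for cubes of side $\delta\rho/\sqrt d$ is crude, and in low dimension an exact count of an aligned grid is much better. Cut the bounding cube $[-\rho,\rho]^3$ of the ball into the $4^3=64$ cubes of side $\rho/2$. Each has diameter $\sqrt3\,\rho/2\approx0.866\,\rho\le 7\rho/8\le(1-2\eps)\rho$. The plane through the center orthogonal to the first axis is a grid plane, so with half-open cubes the half-ball $\ball^+$ meets only the $2\cdot4^2=32$ cubes on its side. This gives $kh=64\cdot32=2048$, and a point is assigned to its piece in $O(1)$ time by rounding its coordinates. In general, the paper uses cubes of side $1/m_d$ with $m_d=\lceil 8\sqrt d/7\rceil$ for $d\le4$, and net pieces for $d\ge5$.

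\textbf{What is correct.} Your bound $32^d$ for general $d$ is sound and is essentially the paper's net construction. You take a maximal $\delta\rho/2$-separated set of centers, assign each point to a nearest center, and count the centers with disjoint balls of radius $\delta\rho/4$. This gives $(4/\delta+1)^{2d}=(39/7)^{2d}<31.05^d<32^d$, slightly better than the paper's $5.59^{2d}$.

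The paper instead takes the maximal separated subset of a fine finite grid. Then the set of centers is produced by a finite greedy computation, and the algorithm can actually use it. A maximal set in the continuous ball exists, since separated sets have bounded size, so your version suffices for the bound. You should still say how the centers are obtained if the algorithm is to be run.

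Your running-time argument is also fine. Applying \cref{lem:greedymult} to the fixed sweep is legitimate, because its proof uses only that the anchors are pairwise incompatible, which holds for the fixed sweep too. The paper instead extends \cref{lem:multiplicity} to \cref{lem:multd}.
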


\subparagraph*{What carries over.}
\Cref{obs:hered,obs:local} hold in every dimension. The proof of \cref{thm:sweep} uses three properties of the pieces: they partition the ball and the half-ball, they have diameter at most $(1-2\eps)$ times the radius, and there are $8$ pieces in the ball and $5$ in the half-ball. The invariant and the charging also use that a point $x\succeq p$ of $\ball(p,\eps D)$ lies in $\ball^+(p,\eps D)$, which holds for the lexicographic order in any dimension. Hence, if the ball and the half-ball of radius $\rho$ are partitioned into $k$ and $h$ pieces, respectively, each of diameter at most $(1-2\eps)\rho$, the proofs of validity and of the approximation factor apply verbatim and give an $\eps$-WSPD of size at most $kh\,\opt_\eps$. \Cref{obs:anchors} does not use the geometry of the pieces, and it holds as stated: the number of steps is a lower bound on $\opt_\eps$. The same holds for the greedy sweep, whose cap of $40$ becomes $kh$.

\subparagraph*{Pieces.}
Let $\delta=7/8$, so that $\delta\le1-2\eps$ for $\eps\le1/16$. It suffices to partition $\ball(0,1)$ and $\ball^+(0,1)$ into pieces of diameter at most $\delta$, and to scale and translate them to $\ball(c,\rho)$ and $\ball^+(c,\rho)$. For fixed $d$, the pieces below are fixed sets, and a point is assigned to its piece in $O(1)$ time.

The \emph{grid pieces}: let $m_d=\lceil 8\sqrt d/7\rceil$, partition the bounding cube of $\ball(0,1)$ into the $(2m_d)^d$ cubes of side $1/m_d$ of a grid, and intersect them with the ball. The hyperplane through $0$ orthogonal to the first axis is a grid hyperplane, so the half-ball meets only $m_d(2m_d)^{d-1}$ of these cubes. Each cube has diameter $\sqrt d/m_d\le\delta$. Hence $k=(2m_d)^d$, $h=m_d(2m_d)^{d-1}$, and $kh=2^{2d-1}m_d^{2d}$. For $d=3$, we have $m_3=2$, with $64$ pieces in the ball and $32$ in the half-ball, and $kh=2048$. For $d=4$, we have $m_4=3$ and $kh=839808<32^4$. A point is assigned to its cube by rounding its coordinates down, and a coordinate equal to $1$ is assigned to the last cube.

The \emph{net pieces}: let $G$ be the set of points of the grid $\eta\mathbb{Z}^d$, with $\eta=\delta/(400\sqrt d)$, at distance at most $1+\delta/800$ from the origin, and let $N$ be a maximal subset of $G$ whose points are pairwise more than $399\delta/800$ apart, found greedily. Every point of $\ball(0,1)$ is within distance $\delta/800$ of $G$, and by maximality every point of $G$ is within distance $399\delta/800$ of $N$, so every point of $\ball(0,1)$ is within distance $\delta/2$ of $N$. Hence assigning every point to a nearest point of $N$ (ties broken by a fixed order of $N$) partitions the ball, and the half-ball, into at most $|N|$ pieces of diameter at most $\delta$. The balls of radius $399\delta/1600$ centered at $N$ are pairwise disjoint and contained in $\ball(0,1+401\delta/1600)$, so $|N|\le(1600/(399\delta)+401/399)^d=(15607/2793)^d<5.59^d$. Hence $kh<5.59^{2d}<31.25^d<32^d$, which is smaller than the factor of the grid pieces for $d\ge5$. The set $N$ depends only on $d$, and a point is assigned to its piece in $O(|N|)=O(1)$ time.

\begin{lemma}\label{lem:multd}
Let $\eps\le1/16$ and $d\ge 2$. Every pair is covered by the output dumbbells of at most $2(11/3)^{2d}$ steps of the sweep in $\R^d$, and by at most one output dumbbell of each step.
\end{lemma}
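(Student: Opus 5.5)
The plan is to repeat the proof of \cref{lem:multiplicity} verbatim in $\R^d$, replacing areas by volumes. The only ingredients of that proof are the disjointness of the two balls around the anchor, the fact that the pieces partition each ball, and \cref{obs:anchors}. All three hold in $\R^d$ for the pieces of \cref{thm:sweepd}. Indeed, we showed above that the fixed collection covers $R_{pq}$, since the charging argument does not use the geometry of the pieces. Hence \cref{obs:anchors} applies, and the anchors of the sweep in $\R^d$ are pairwise incompatible.

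\emph{At most one dumbbell per step.} Let $\{x,y\}$ be covered in the step with anchor $\{p,q\}$, and let $D=\|pq\|$. Since $\eps<1/2$, the balls $\ball(p,\eps D)$ and $\ball(q,\eps D)$ are disjoint, so one endpoint lies in each ball. The pieces partition the ball, and the half-ball pieces partition $\ball^+(p,\eps D)$, so exactly one piece on each side contains the corresponding endpoint. Therefore exactly one output dumbbell of the step covers $\{x,y\}$.

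\emph{Counting steps.} Let $L=\|xy\|$. We split the relevant steps into two classes, according to which anchor point has $x$ in its ball; this split accounts for the factor $2$. Fix the class with $x\in\ball(p_j,\eps D_j)$ and $y\in\ball(q_j,\eps D_j)$. The triangle inequality gives $(1-2\eps)D_j\le L$, so $p_j\in\ball(x,r)$ and $q_j\in\ball(y,r)$ with $r=\eps L/(1-2\eps)$. Since $2r<L$, these two balls are disjoint.

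For $j\ne k$, the anchors are incompatible, so the dumbbell $\{\{p_j,p_k\},\{q_j,q_k\}\}$ is not $\eps$-separated. Its sides are disjoint and its distance is at least $L-2r$, which gives $\max\{\|p_jp_k\|,\|q_jq_k\|\}>s=\eps(L-2r)=r(1-4\eps)\ge 3r/4$. Consequently the products $\ball(p_j,s/2)\times\ball(q_j,s/2)\subset\R^{2d}$ are pairwise disjoint. They all lie in $\ball(x,r+s/2)\times\ball(y,r+s/2)$.

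The volume of a product of two $d$-balls is the product of their volumes, which scales as $\rho^{2d}$. Comparing volumes, the class has at most $(1+2r/s)^{2d}\le(1+8/3)^{2d}=(11/3)^{2d}$ steps. Adding the two classes gives the bound $2(11/3)^{2d}$.

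There is no real obstacle. The only point needing care is that \cref{obs:anchors} applies to the fixed sweep in $\R^d$, and this follows from the remark that the fixed collection covers $R_{pq}$ for every $\eps<1/2$.
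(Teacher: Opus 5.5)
Your proposal is correct and follows the paper's proof: the paper also reruns the argument of \cref{lem:multiplicity} with balls in place of disks, with pairwise disjoint products $\ball(p_j,s/2)\times\ball(q_j,s/2)\subset\R^{2d}$, and with the volume comparison giving $(1+2r/s)^{2d}\le(11/3)^{2d}$ per class. Your explicit check that \cref{obs:anchors} still applies in $\R^d$, because the fixed collection covers $R_{pq}$ regardless of the geometry of the pieces, is made in the paper in the discussion before the lemma rather than inside its proof.
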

\begin{proof}
The proof of \cref{lem:multiplicity} applies verbatim, with balls in place of disks and volumes in $\R^{2d}$ in place of the products of areas. The products $\ball(p_j,s/2)\times\ball(q_j,s/2)\subset\R^{2d}$ are pairwise disjoint and contained in $\ball(x,r+s/2)\times\ball(y,r+s/2)$, so each of the two classes has at most $(1+2r/s)^{2d}\le(11/3)^{2d}$ steps.
\end{proof}

\begin{proof}[Proof of \cref{thm:sweepd}]
The size bounds follow from the discussion above, with the grid pieces for $d\le4$ and the net pieces for $d\ge5$. The proof of the running time in \cref{thm:sweep} applies verbatim, with \cref{lem:multd} in place of \cref{lem:multiplicity}, since a point is assigned to its piece in $O(1)$ time. For the greedy sweep, whose cap of $40$ becomes $kh$, the bounds on the number of steps and the size also follow from the discussion above, and the proof of its running time applies verbatim, with \cref{lem:greedymult} in $\R^d$, and gives $O(n^3)$ time and $O(n^2)$ space.
\end{proof}

As in the plane, the number of steps is at most $\opt_\eps=O(n/\eps^d)$~\cite{CK95}, so the sweep also runs in $O(n^2/\eps^d)$ time.

\subparagraph*{Dependence on $d$.}
The factor of the grid pieces grows as $d^{\Theta(d)}$, and that of the net pieces as $2^{O(d)}$. No choice of pieces can make this analysis subexponential in $d$. By the isodiametric inequality, a piece of diameter at most $(1-2\eps)\rho$ has at most a $((1-2\eps)/2)^d$ fraction of the volume of the ball, so $kh\ge 2^{2d-1}/(1-2\eps)^{2d}>2^{2d-1}$.

\section{Experiments}
\label{sec:experiments}

We address the following questions.
(Q1)~How close to optimal are the greedy sweep and the CK WSPD, and how does the gap depend on the input and $n$?
(Q2)~What are the running times?
(Q3)~Can local search close the remaining gap, and how does it compare with LNS?
(Q4)~How large are partition WSPDs converted from cover WSPDs, compared with those built directly?
We use $\eps=1/20$ (\cref{thm:sweep} needs $\eps\le1/16$). \Cref{sec:setup,sec:results} treat points in the plane and \cref{sec:exp3d} points in $\R^3$.

\subsection{Setup}
\label{sec:setup}

All algorithms are implemented in C++. Points have integer coordinates and $\eps$ is rational, so all separation tests are exact, and every output is checked by a separate brute-force verifier. We compare three algorithms. \emph{The sweep} is the greedy sweep of \cref{sec:greedy} followed by a pass that removes redundant dumbbells. \emph{CK} is the variant of the CK construction on a fair split tree~\cite{CK95} that accepts two nodes if their point sets are $\eps$-separated, and \emph{CK box} tests the bounding boxes of the two nodes instead, so it outputs at least as many dumbbells. The inputs (\cref{fig:inputs}) are six synthetic families (uniform, Gaussian, clustered, perturbed grid, logarithmic spiral, and annulus) and the non-uniform instances of the CG:SHOP 2019 challenge~\cite{cgshop2019}, for $n$ from $200$ to $50000$. Code, generators, and raw results are available at \url{https://github.com/gfonsecabr/small-wspd-experiments}. \Cref{tab:lstime} gives the local search over time, \cref{tab:exp} the mean size divided by the reference (\cref{sec:results}) for every input and $n\in\{1000,10000,50000\}$, and \cref{tab:partconv} the same for the partition WSPDs of Q4 and $n\in\{1000,10000\}$. The rest of this subsection gives the details.

\begin{figure}[htb]
\centering
\includegraphics[width=\linewidth]{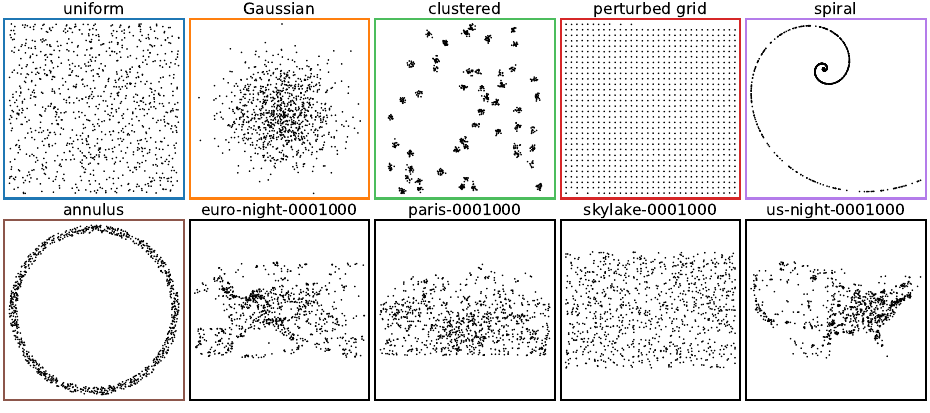}
\caption{Synthetic and CG:SHOP instances for $n=1000$.}
\label{fig:inputs}
\end{figure}

\subparagraph*{Measurements.}
All experiments ran on an AMD Ryzen~9 9900X (12 cores, 128\,GB of RAM, Fedora Linux~44, GCC~16.2 with \texttt{-O2}), with up to $11$ single-threaded runs in parallel and no other load. Times are wall-clock. Memory is the peak resident set size of a run, which includes the output and the verifier. All methods, including CK, store every dumbbell as two lists of points, so the times of CK include listing these points, which the time bounds of \cref{sec:partition} do not. In \cref{fig:time}, the time to $1.001\,\opt$ is the mean over the inputs with proven $\opt$, and where CPLEX reached its time limit of $1800$\,s on some input, the mean time of CPLEX is a lower bound.

\subparagraph*{Implementation.}
The implementation uses the Pangolin geometry library~\cite{pgl}. With $\eps=a/b$, the separation test is $b^2\diam(X)^2\le a^2\dist(X,Y)^2$, and the verifier checks coverage and exact separation.

\emph{Sweep.} The final pass that removes redundant dumbbells considers them in the order of increasing $|X|\cdot|Y|$. For each $p$, the sweep processes the points $q\succ p$ by increasing distance from~$p$. It stores one byte per pair, so memory is quadratic in $n$ (\cref{fig:time}). When a step falls back to the fixed collection, it assigns the points to the pieces in floating point, which matters only for points on a boundary between pieces, and the output is verified like every other.

\emph{CK.} The recursive construction on a fair split tree~\cite{CK95}, which splits the longer side of the bounding box of a node at its midpoint. When two nodes are not accepted, the recursion splits the one whose bounding box has the larger diameter. CK box accepts two nodes if the diameters of their boxes are at most $\eps$ times the distance between the boxes, and CK tests the point sets exactly when this test fails. These are the two variants of \cref{thm:fairsplit}. The original construction~\cite{CK95} tests balls that contain the boxes and splits the node whose box has the longer side (\cref{sec:partition}). Both variants are partition WSPDs, so the redundancy pass would not change them.

\emph{Exact optimum.} An $\eps$-separated dumbbell $\{X,Y\}$ with $\dist(X,Y)=\|pq\|$, $p\in X$, and $q\in Y$ corresponds to a clique containing $p$ and $q$ in an auxiliary graph on the points of $\ball(p,\eps\|pq\|)\cup\ball(q,\eps\|pq\|)$. The edges of this graph encode the pairwise conditions $\|xx'\|\le\eps\,\|pq\|$, $\|yy'\|\le\eps\,\|pq\|$, and $\|xy\|\ge\|pq\|$. We enumerate all maximal cliques (Bron--Kerbosch~\cite{BK73}) for every choice of $p$ and $q$, and solve the resulting set-cover problem by integer programming with CPLEX~22.2. This is exact because some minimum WSPD consists of maximal dumbbells, and every maximal dumbbell arises from a maximal clique. Where CPLEX proves optimality for $n=1000$ ($10$ of the $22$ instances), the set-cover instance has a median of $6.6\cdot10^5$ maximal dumbbells. For the spirals with $n\ge800$, we first apply exact set-cover reductions (forced columns, row and column dominance) until none applies, and then run CPLEX on one connected component at a time, with three threads and the sweep followed by reshape as a warm start. For $n=1000$, $\opt$ is thus known on $13$ instances.

\emph{Lower bound.} The lower bounds come from separate runs of the sweep, which write the anchors to a file, followed by the search of \cref{sec:lb}, a separate program, with up to $9$ runs in parallel for $n\le10000$ and $4$ for larger $n$. The plateau moves use one fixed random seed. In \cref{fig:lb}, panel~(a) shows the mean over the instances with proven $\opt$ (all of them for $n\le500$), panel~(b) the mean over all instances, and panel~(c) the mean running time of these runs, including the sweep. We computed the independence number $\alpha$ of $G_\eps$ for every instance with $n\le500$ by integer programming with CPLEX: one $0$/$1$ variable per pair, whose sum is maximized, with at most one chosen pair in every maximal dumbbell, since two pairs are compatible if and only if some maximal dumbbell covers both. CPLEX proves optimality on all $88$ instances, in at most $2$\,min each.

\emph{Local search.} Every run uses one fixed random seed. For LNS, the \emph{scale} of a dumbbell $\{X,Y\}$ is $\dist(X,Y)$. The \emph{neighborhood} of a dumbbell $T=\{A,B\}$ with scale $\ell_T$ consists of the dumbbells $Z$ that contain a point within distance $2\eps\ell_T$ of a point of $T$, whose scale is within a factor $(1+2\eps)^2$ of $\ell_T$, and whose sides, in one of the two orientations, have centroids within distance $2\eps\ell_T$ of the centroids of $A$ and $B$. Neighbors are ranked by the larger of these two centroid distances, and LNS removes $T$ and its $k$ nearest neighbors. The candidate dumbbells for a pair of $U$ are the first $300$ found by the enumeration used to compute $\opt$ (above), and CPLEX has a limit of $2$\,s per call. We use $(k,u)=(16,1000)$ for $n\le1000$ and $(32,400)$ for larger $n$. In \cref{fig:lstotal}, a curve begins at the mean time to build the start, over the instances of an input, and is horizontal while reshape is set up (relabeling the points and building its tables), a time that we measured in separate runs. It then shows the mean size for the same reshape time. Within its budget, LNS makes about $2\cdot10^5$ moves for $n=10000$ and between $8\cdot10^4$ and $3\cdot10^5$ for $n=50000$, on WSPDs of $3\cdot10^6$ to $4\cdot10^7$ dumbbells, so for large $n$ it tries only a small fraction of the dumbbells. The parameters of LNS are the best ones among those we tested on the local-search samples (below) and on three instances with $n=3000$: neighborhoods of radius $1.5\,\eps\ell_T$ to $5\,\eps\ell_T$, $4$ to $64$ neighbors, and bounds on $|U|$ from $50$ to $4000$ and no bound, not in all combinations. The limit of CPLEX ($0.5$ to $10$\,s) and the number of dumbbells enumerated per pair ($50$ to $2000$) changed the sizes by at most $0.3\%$.

\subparagraph*{Inputs.}
\emph{Synthetic inputs} (\cref{fig:inputs}) are generated at scale $M=10^6$ around the center $(M/2,M/2)$ and rounded to integer coordinates, with three seeds each: \emph{uniform} in $[0,M]^2$; \emph{Gaussian} (standard deviation $M/8$); \emph{clustered} ($\max\{2,n/20\}$ uniform centers, each point Gaussian around a random center with standard deviation $M/100$); \emph{perturbed grid} (a grid with $\lceil\sqrt n\rceil$ columns, each point moved by a Gaussian of standard deviation $2\%$ of the spacing); \emph{logarithmic spiral} (the point $2000\,e^t(\cos3t,\sin3t)$ relative to the center, with $t$ uniform in $[0,6]$); and \emph{annulus} (uniform angle, radius uniform in $[1,1.1]\cdot M/4$). Duplicate points are discarded and replaced. The \emph{CG:SHOP 2019} instances~\cite{cgshop2019} come from the 2019 Computational Geometry Challenge. We use all of its non-uniform instances whose size is one of the values of $n$ below, after removing duplicates. We use $n$ from $200$ to $1000$ in steps of $100$, from $2000$ to $10000$ in steps of $1000$, and from $20000$ to $50000$ in steps of $10000$ (one seed). The perturbed grid is filled row by row.

For every $n\le10000$, there are $18$ synthetic and $4$ CG:SHOP instances, and for every $n\ge20000$, $6$ and $4$. The local-search samples are the first seed of every generator and the four CG:SHOP instances for $n=1000$, and the first uniform and clustered instances and one CG:SHOP instance (euro-night) for $n=10000$ and $n=50000$. \Cref{fig:ls} and \cref{tab:lstime} show reshape and LNS on these samples, and \cref{fig:ls} also on all instances with $n=200$. For $n=50000$, both ran for $5000$\,s.

\subsection{Results}
\label{sec:results}

\subparagraph*{Reference value.}
We compute $\opt$ as described in \cref{sec:setup}. Within $1800$\,s per instance, CPLEX proves optimality on all instances with $n\le500$ and on some with $600\le n\le1000$, including the spirals with $n\ge800$.
Otherwise we compare against the lower bound $\lb$ of \cref{sec:lb}, so some values of $n$ mix both references. Where $\opt$ is known (\cref{fig:lb}), $\lb$ is on average $0.999$, $0.997$, and $0.994$ times $\opt$ for $n=200$, $500$, and $1000$, at least $0.989\,\opt$ on every instance, and equal to $\opt$ on $7$ instances. The number of steps $\lb_0$ is on average $0.98$, $0.97$, and $0.94$ times $\opt$, and at least $0.90\,\opt$, and $\lb_1$ is in between. For $n\le500$, we computed exactly the independence number $\alpha$ of $G_\eps$, the best bound that independent sets can give: it is at least $0.992\,\opt$ and on average $0.999\,\opt$. Of the gap between $\lb_0$ and $\alpha$, the swaps close $76\%$ and the plateau moves another $22\%$, and $\lb=\alpha$ on $31$ of these $88$ instances. Ratios to $\lb$ overestimate ratios to $\opt$, probably more so for larger $n$: $\lb/\lb_0$ grows from $1.05$ for $n=1000$ to $1.10$ and $1.12$ for $n=10000$ and $50000$.

\begin{figure}[htb]
\centering
\includegraphics[width=\linewidth]{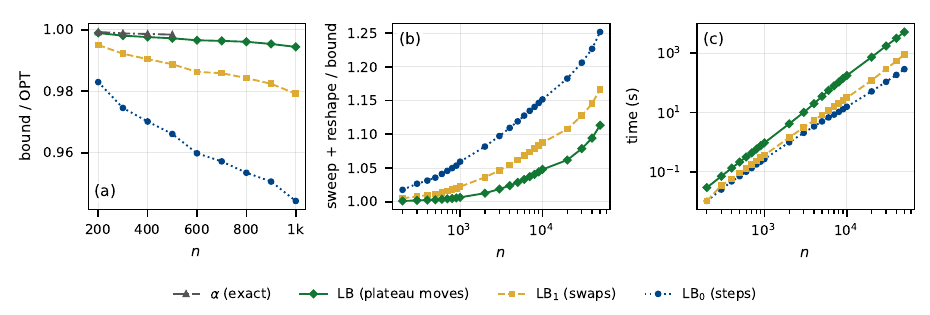}
\caption{The lower bounds of \cref{sec:lb}: means over the instances.}
\label{fig:lb}
\end{figure}

\subparagraph*{Quality (Q1).}
\Cref{fig:summary} summarizes the results, and \cref{fig:ratio,tab:exp} show them per input. For $n=500$, the sweep is on average $3\%$ above $\opt$ (at most $9\%$), and CK box and CK are $21\%$ and $17\%$ above (at most $37\%$ and $27\%$). For $n=10000$, the sweep is on average at $1.22\,\lb$ and CK at $1.55\,\lb$, and for $n=50000$ at $1.38\,\lb$ and $1.76\,\lb$. Part of this growth may come from the lower bound, whose ratio to $\opt$ decreases with $n$ where $\opt$ is known (\cref{fig:lb}). The sweep is smallest, followed by CK and CK box, on every input and for every $n$, except on the perturbed grid, where the three methods are equal for $n=200$ and CK is smaller than the sweep for $n=1000$ (by less than $1\%$) and for $n=4000$ (by $4\%$). After the merge pass of Q4, CK is still larger than the sweep on average for every $n$ ($1.06\,\opt$ against $1.03\,\opt$ for $n=500$, and $1.46\,\lb$ against $1.22\,\lb$ for $n=10000$). \Cref{fig:perpoint} gives the number of dumbbells per point of the sweep followed by reshape. It grows with $n$ on every input except the spiral, where it stays near $22$, and for $n=50000$ it ranges from $23$ on the spiral to $981$ on the perturbed grid.

\begin{figure}[htb]
\centering
\includegraphics[width=\linewidth]{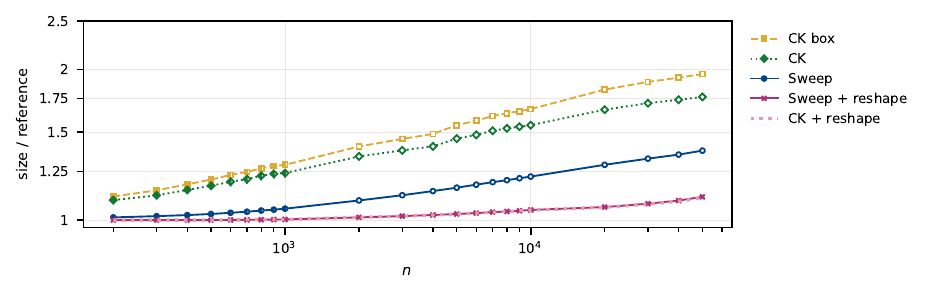}
\caption{Means. Open (half-filled) markers: the reference is $\lb$ on all (some) instances.}
\label{fig:summary}
\end{figure}
\begin{figure}[htb]
\centering
\includegraphics[width=\linewidth]{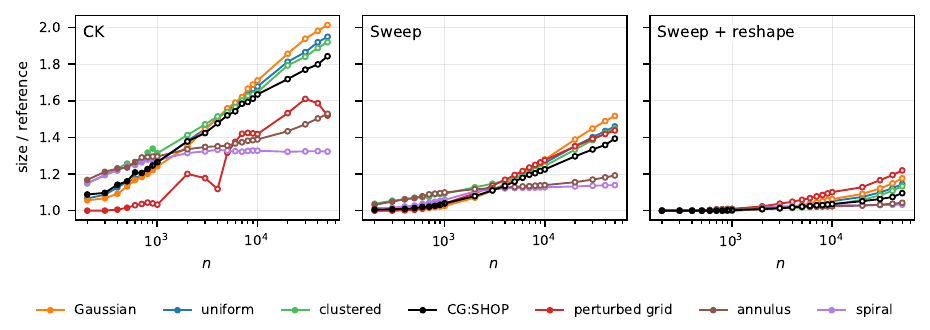}
\caption{Mean size divided by the reference, per input. Markers as in \cref{fig:summary}.}
\label{fig:ratio}
\end{figure}
\begin{figure}[htb]
\centering
\includegraphics[width=\linewidth]{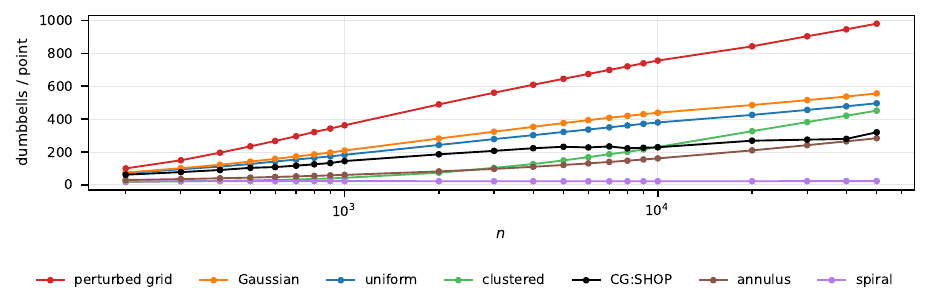}
\caption{Sweep + reshape: mean number of dumbbells per point, per input.}
\label{fig:perpoint}
\end{figure}

\subparagraph*{Running times (Q2).}
\Cref{fig:time} reports the running times and the memory. On average, CK box and CK take $0.5$\,s for $n=10000$ (at most $1.2$\,s) and $3.7$\,s and $4.1$\,s for $n=50000$, and the sweep takes $0.3$\,s for $n=1000$, $15$\,s for $n=10000$, and $5$\,min for $n=50000$. For $n=50000$, CK, the sweep, and reshape use up to $8$, $15$, and $22$\,GB. Computing $\opt$, including the enumeration of the maximal dumbbells, takes $0.07$\,s for $n=200$ and $15$\,s for $n=1000$ (median where CPLEX proves optimality). After the sweep, computing $\lb_1$ takes on average $16$\,s for $n=10000$ and $10$\,min for $n=50000$, and computing $\lb$ takes $2.7$\,min and $82$\,min (\cref{fig:lb}).

\begin{figure}[htb]
\centering
\includegraphics[width=\linewidth]{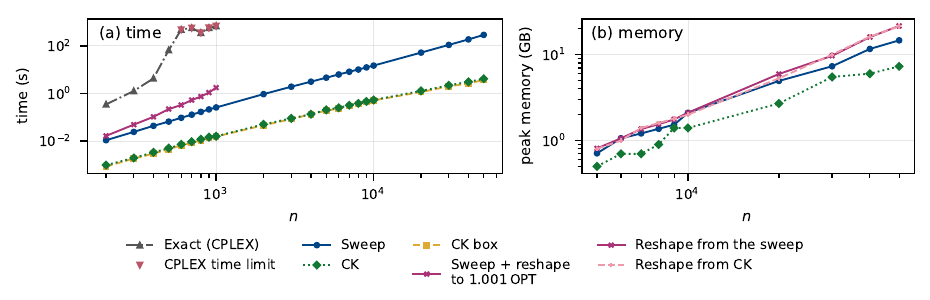}
\caption{Mean running time and largest peak memory.}
\label{fig:time}
\end{figure}

\begin{figure}[htb]
\centering
\includegraphics[width=\linewidth]{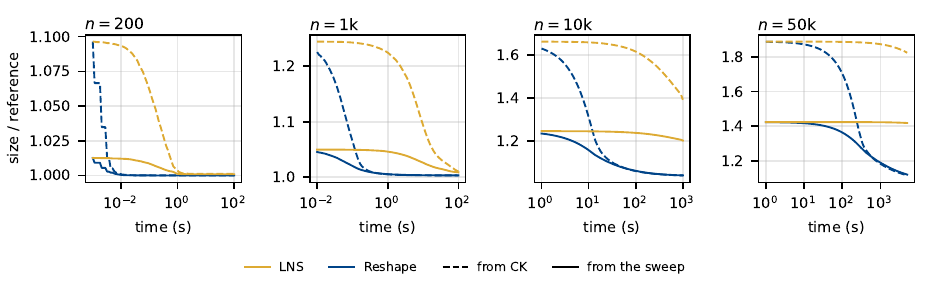}
\caption{Local search: mean size / reference over time, on all instances for $n=200$.}
\label{fig:ls}
\end{figure}

\subparagraph*{Local search (Q3).}
We ran reshape (\cref{sec:reshape}) from CK and from the sweep on every instance, for $100$\,s for $n\le1000$, $1000$\,s for $2000\le n\le10000$, and $5000$\,s above. LNS ran with the same budgets, on every instance for $n\le1000$ and on samples of $3$ instances for $n=10000$ and $n=50000$ (\cref{fig:ls,tab:lstime}).
Reshape ends at most as large as LNS from the same start on every instance, and smaller on average for every $n$. It reaches the final size of LNS within $1\%$ of the budget, while LNS improves CK from $1.66$ to $1.39$ times the reference for $n=10000$ and from $1.89$ to $1.82$ for $n=50000$, where it tries only a small fraction of the dumbbells within its budget (\cref{sec:setup}). On the $164$ instances with $n\le1000$ where $\opt$ is known, the sweep followed by reshape is on average $0.002\%$ and at most $0.02\%$ above $\opt$, and it reaches $\opt$ on $101$ of them (all $22$ for $n=200$, $13$ of $22$ for $n=500$, and $1$ of $13$ for $n=1000$), against $0.23\%$, $1.5\%$, and $18$ for LNS from the sweep. It is on average at $1.05\,\lb$ for $n=10000$ (at most $1.10\,\lb$) and $1.11\,\lb$ for $n=50000$ (at most $1.22\,\lb$), against $1.22\,\lb$ and $1.38\,\lb$ for the sweep alone (right column of \cref{fig:ratio}). For the same total time, including the time to build the start and to set up reshape, the best results are obtained by CK followed by reshape (\cref{fig:lstotal}), although CK is on average larger than the output of the sweep. On every instance with $n\ge20000$ and on $378$ of the $396$ instances with $n\le10000$, it is smaller than the output of the sweep before the sweep has finished; this takes on average $89\%$ of the running time of the sweep for $n=10000$ and $76\%$ for $n=50000$. At every later time and on every instance, the sweep followed by reshape is at most $0.73\%$ smaller, and never smaller for $n\ge20000$. However, the approximation factor that we prove for CK followed by reshape is larger: since reshape never increases the size, it is $576$ for $\eps\le1/16$ (\cref{sec:partition}), against $40$ for the sweep followed by reshape (\cref{thm:greedy}). For the same reshape time, the two starts end within $0.42\%$ of each other for $n\le10000$. For $n\ge20000$, where reshape has not converged after $5000$\,s, the run from CK ends smaller on $39$ of the $40$ instances, by $0.4\%$ on average for $n=50000$ and by at most $1.2\%$. Since the two starts end close to each other, the sweep matters mainly for its lower bound and its approximation factor when reshape follows.

\begin{figure}[htb]
\centering
\includegraphics[width=\linewidth]{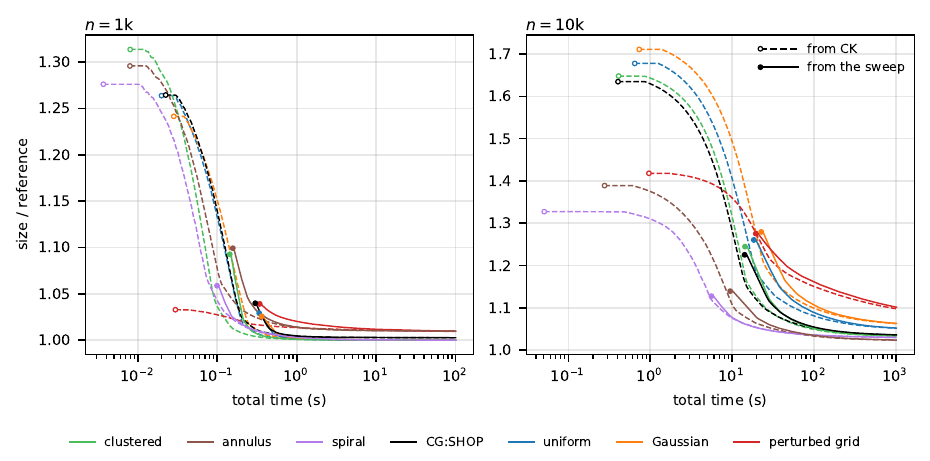}
\caption{Mean size / reference per input over the total time: build of the start, then reshape.}
\label{fig:lstotal}
\end{figure}

\subparagraph*{Partition WSPDs (Q4).}
We convert a cover WSPD into a partition WSPD by processing its dumbbells one by one. For the current dumbbell $\{X,Y\}$, let $U$ be the set of its pairs that no earlier dumbbell covers. Two points of $X$ are grouped if they have the same partners in $Y$ among the pairs of $U$, and each group $X'\subseteq X$ with partners $Y'\subseteq Y$ gives the dumbbell $\{X',Y'\}$. The same is done with the roles of $X$ and $Y$ exchanged, and the split into fewer dumbbells is kept. These dumbbells cover the pairs of $U$ exactly once, and they are $\eps$-separated by \cref{obs:hered}. The order comes from a priority queue that holds, for every dumbbell, the number of dumbbells into which its uncovered pairs split: the next dumbbell has the smallest number, ties broken by size, largest first. The numbers are updated lazily.

A final \emph{merge pass} replaces two dumbbells $\{S,O_1\}$ and $\{S,O_2\}$ with a common side by $\{S,O_1\cup O_2\}$ if it is $\eps$-separated. The result is still a partition WSPD. The pass tries at most $256$ dumbbells per side and is repeated until no merge occurs, at most $20$ times. It applies to every partition WSPD, also to CK and the quadtree.

The conversion has no size guarantee (\cref{fig:partconv,tab:partconv}). For the sweep followed by reshape, the partition WSPD is on average $1.015$, $1.13$, and $1.18$ times the reference for $n=1000$, $10000$, and $20000$ (at most $1.40$, on the perturbed grid with $n=20000$), against $1.003$, $1.05$, and $1.06$ for the cover WSPD, and the conversion takes at most $54$\,s for $n=10000$ and $142$\,s for $n=20000$. CK, which is a partition WSPD, is at $1.24$, $1.55$, and $1.66$, and at $1.11$, $1.46$, and $1.59$ after the merge pass. It is never smaller than the converted WSPD, also after the merge pass. Converting the optimal WSPDs with $n=500$ gives partition WSPDs of size $1.004\,\opt_\eps$ on average and at most $1.0125\,\opt_\eps$, which bounds the partition optimum on these instances. The quadtree construction of \cref{sec:partition}, with exact tests, is at $1.71$, $2.45$, and $2.78$ times the reference, and at $1.15$, $1.98$, and $2.36$ after the merge pass. Despite its smaller proven factor, it is larger than CK on average for every $n$, with or without merging.

\begin{figure}[htb]
\centering
\includegraphics[width=\linewidth]{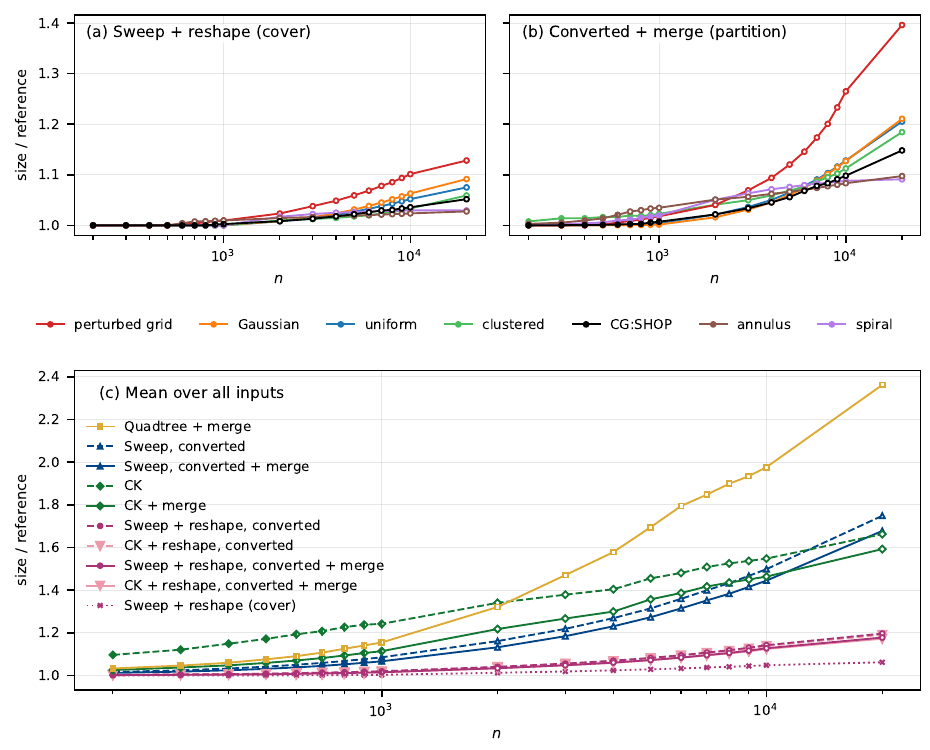}
\caption{Partition WSPDs: mean size / reference, (a,\,b)~per input. Markers as in \cref{fig:summary}.}
\label{fig:partconv}
\end{figure}

\clearpage
{
\let\origtable\table\renewcommand{\table}[1][]{\origtable[p]}
\begin{table}[htbp]
\centering\footnotesize\setlength{\tabcolsep}{3pt}
\caption{Local search over time: mean size divided by the reference. \emph{Build time}: median time to build the start, not included in the times. Bold: best value for each $n$ and time over both starts.}
\label{tab:lstime}
\begin{tabular}{@{}l@{\hspace{8pt}}rrr@{\hspace{8pt}}rrr@{\hspace{8pt}}rrr@{}}
\toprule
$n$ & \multicolumn{3}{c}{1k} & \multicolumn{3}{c}{10k} & \multicolumn{3}{c}{50k} \\
\cmidrule(lr){2-4} \cmidrule(lr){5-7} \cmidrule(lr){8-10}
Time & 1\,s & 10\,s & 100\,s & 10\,s & 100\,s & 1000\,s & 50\,s & 500\,s & 5000\,s \\
\midrule
\multicolumn{10}{@{}l}{\emph{Start: CK}} \\
Build time & \multicolumn{3}{c}{0.017\,s} & \multicolumn{3}{c}{0.41\,s} & \multicolumn{3}{c}{5.7\,s} \\
Start & 1.245 & 1.245 & 1.245 & 1.662 & 1.662 & 1.662 & 1.889 & 1.889 & 1.889 \\
Reshape & \textbf{1.005} & \textbf{1.003} & \textbf{1.003} & 1.312 & \textbf{1.061} & \textbf{1.041} & 1.801 & 1.240 & \textbf{1.116} \\
LNS & 1.225 & 1.093 & 1.011 & 1.658 & 1.616 & 1.392 & 1.888 & 1.881 & 1.817 \\
\midrule
\multicolumn{10}{@{}l}{\emph{Start: Sweep}} \\
Build time & \multicolumn{3}{c}{0.31\,s} & \multicolumn{3}{c}{15\,s} & \multicolumn{3}{c}{354\,s} \\
Start & 1.050 & 1.050 & 1.050 & 1.247 & 1.247 & 1.247 & 1.424 & 1.424 & 1.424 \\
Reshape & \textbf{1.005} & \textbf{1.003} & \textbf{1.003} & \textbf{1.158} & 1.063 & \textbf{1.041} & \textbf{1.392} & \textbf{1.235} & 1.121 \\
LNS & 1.046 & 1.026 & 1.009 & 1.246 & 1.238 & 1.203 & 1.424 & 1.424 & 1.418 \\
\bottomrule
\end{tabular}
\end{table}

\begin{table}[htbp]
\centering\footnotesize\setlength{\tabcolsep}{4pt}
\caption{Mean size divided by the reference, per input: synthetic inputs (three seeds for $n\le10000$, one for $n=50000$) and CG:SHOP 2019 families. $\dagger$: the reference is the lower bound. Bold: smallest construction of the row.}\label{tab:exp}
\begin{tabular}{@{}lrccccc@{}}
\toprule
Input & $n$ & \shortstack{\strut CK box} & \shortstack{\strut CK} & \shortstack{\strut Sweep} & \shortstack{\strut CK\\\strut + reshape} & \shortstack{\strut Sweep\\\strut + reshape} \\
\midrule
uniform & 1k\phantom{$^\dagger$} & 1.30 & 1.26 & \textbf{1.03} & 1.00 & 1.00 \\
 & 10k$^\dagger$ & 1.84 & 1.68 & \textbf{1.26} & 1.05 & 1.05 \\
 & 50k$^\dagger$ & 2.21 & 1.95 & \textbf{1.46} & 1.14 & 1.15 \\
\addlinespace[2pt]
Gaussian & 1k\phantom{$^\dagger$} & 1.27 & 1.24 & \textbf{1.03} & 1.00 & 1.00 \\
 & 10k$^\dagger$ & 1.88 & 1.71 & \textbf{1.28} & 1.06 & 1.06 \\
 & 50k$^\dagger$ & 2.30 & 2.01 & \textbf{1.52} & 1.17 & 1.18 \\
\addlinespace[2pt]
clustered & 1k\phantom{$^\dagger$} & 1.44 & 1.31 & \textbf{1.09} & 1.00 & 1.00 \\
 & 10k$^\dagger$ & 1.84 & 1.65 & \textbf{1.24} & 1.03 & 1.03 \\
 & 50k$^\dagger$ & 2.18 & 1.92 & \textbf{1.45} & 1.13 & 1.13 \\
\addlinespace[2pt]
perturbed grid & 1k$^\dagger$ & \textbf{1.03} & \textbf{1.03} & 1.04 & 1.01 & 1.01 \\
 & 10k$^\dagger$ & 1.45 & 1.42 & \textbf{1.28} & 1.10 & 1.10 \\
 & 50k$^\dagger$ & 1.56 & 1.52 & \textbf{1.44} & 1.21 & 1.22 \\
\addlinespace[2pt]
spiral & 1k\phantom{$^\dagger$} & 1.28 & 1.28 & \textbf{1.06} & 1.00 & 1.00 \\
 & 10k$^\dagger$ & 1.33 & 1.33 & \textbf{1.13} & 1.03 & 1.03 \\
 & 50k$^\dagger$ & 1.32 & 1.32 & \textbf{1.14} & 1.03 & 1.03 \\
\addlinespace[2pt]
annulus & 1k$^\dagger$ & 1.38 & 1.30 & \textbf{1.10} & 1.01 & 1.01 \\
 & 10k$^\dagger$ & 1.49 & 1.39 & \textbf{1.14} & 1.02 & 1.02 \\
 & 50k$^\dagger$ & 1.66 & 1.53 & \textbf{1.19} & 1.04 & 1.04 \\
\midrule
euro-night & 1k\phantom{$^\dagger$} & 1.30 & 1.25 & \textbf{1.04} & 1.00 & 1.00 \\
 & 10k$^\dagger$ & 1.80 & 1.64 & \textbf{1.23} & 1.04 & 1.04 \\
 & 50k$^\dagger$ & 2.02 & 1.80 & \textbf{1.36} & 1.08 & 1.09 \\
\addlinespace[2pt]
us-night & 1k$^\dagger$ & 1.34 & 1.29 & \textbf{1.04} & 1.00 & 1.00 \\
 & 10k$^\dagger$ & 1.77 & 1.61 & \textbf{1.21} & 1.03 & 1.03 \\
 & 50k$^\dagger$ & 1.97 & 1.76 & \textbf{1.34} & 1.07 & 1.08 \\
\addlinespace[2pt]
paris & 1k$^\dagger$ & 1.32 & 1.27 & \textbf{1.04} & 1.00 & 1.00 \\
\addlinespace[2pt]
skylake & 1k$^\dagger$ & 1.28 & 1.24 & \textbf{1.04} & 1.00 & 1.00 \\
\addlinespace[2pt]
jupiter & 10k$^\dagger$ & 1.81 & 1.65 & \textbf{1.22} & 1.03 & 1.03 \\
\addlinespace[2pt]
world & 10k$^\dagger$ & 1.81 & 1.64 & \textbf{1.24} & 1.04 & 1.04 \\
\addlinespace[2pt]
protein & 50k$^\dagger$ & 2.15 & 1.88 & \textbf{1.43} & 1.11 & 1.11 \\
\addlinespace[2pt]
sun & 50k$^\dagger$ & 2.21 & 1.94 & \textbf{1.44} & 1.10 & 1.11 \\
\bottomrule
\end{tabular}
\end{table}

\begin{table}[htbp]
\centering\footnotesize\setlength{\tabcolsep}{4pt}
\caption{Partition WSPDs: mean size divided by the reference ($\dagger$: the lower bound) per input and $n$. First column: the cover WSPD of the sweep followed by reshape. The others are partition WSPDs: its conversion and CK, each without and with the merge pass, and the deterministic quadtree with exact separation tests and the merge pass. Bold: smallest partition WSPD of the row.}
\label{tab:partconv}
\begin{tabular}{@{}lrcccccc@{}}
\toprule
Input & $n$ & \shortstack{\strut Sweep\\\strut + reshape} & \shortstack{\strut Converted} & \shortstack{\strut Converted\\\strut + merge} & \shortstack{\strut CK} & \shortstack{\strut CK\\\strut + merge} & \shortstack{\strut Quadtree\\\strut + merge} \\
\midrule
uniform & 1k\phantom{$^\dagger$} & 1.00 & 1.01 & \textbf{1.00} & 1.26 & 1.12 & 1.17 \\
 & 10k$^\dagger$ & 1.05 & 1.14 & \textbf{1.13} & 1.68 & 1.64 & 2.54 \\
\addlinespace[2pt]
Gaussian & 1k\phantom{$^\dagger$} & 1.00 & \textbf{1.00} & \textbf{1.00} & 1.24 & 1.12 & 1.19 \\
 & 10k$^\dagger$ & 1.06 & 1.14 & \textbf{1.13} & 1.71 & 1.68 & 2.61 \\
\addlinespace[2pt]
clustered & 1k\phantom{$^\dagger$} & 1.00 & 1.04 & \textbf{1.02} & 1.31 & 1.14 & 1.19 \\
 & 10k$^\dagger$ & 1.03 & 1.13 & \textbf{1.11} & 1.65 & 1.55 & 2.19 \\
\addlinespace[2pt]
perturbed grid & 1k$^\dagger$ & 1.01 & \textbf{1.02} & \textbf{1.02} & 1.03 & \textbf{1.02} & 1.05 \\
 & 10k$^\dagger$ & 1.10 & 1.28 & \textbf{1.26} & 1.42 & 1.41 & 1.80 \\
\addlinespace[2pt]
spiral & 1k\phantom{$^\dagger$} & 1.00 & 1.03 & \textbf{1.02} & 1.28 & 1.16 & 1.17 \\
 & 10k$^\dagger$ & 1.03 & 1.10 & \textbf{1.09} & 1.33 & 1.22 & 1.23 \\
\addlinespace[2pt]
annulus & 1k$^\dagger$ & 1.01 & 1.05 & \textbf{1.03} & 1.30 & 1.13 & 1.16 \\
 & 10k$^\dagger$ & 1.02 & 1.10 & \textbf{1.08} & 1.39 & 1.20 & 1.26 \\
\addlinespace[2pt]
euro-night & 1k\phantom{$^\dagger$} & 1.00 & 1.01 & \textbf{1.00} & 1.25 & 1.10 & 1.15 \\
 & 10k$^\dagger$ & 1.04 & 1.11 & \textbf{1.10} & 1.64 & 1.54 & 2.11 \\
\addlinespace[2pt]
us-night & 1k$^\dagger$ & 1.00 & \textbf{1.01} & \textbf{1.01} & 1.29 & 1.11 & 1.15 \\
 & 10k$^\dagger$ & 1.03 & 1.10 & \textbf{1.09} & 1.61 & 1.51 & 2.02 \\
\addlinespace[2pt]
paris & 1k$^\dagger$ & 1.00 & \textbf{1.01} & \textbf{1.01} & 1.27 & 1.12 & 1.16 \\
\addlinespace[2pt]
skylake & 1k$^\dagger$ & 1.00 & \textbf{1.01} & \textbf{1.01} & 1.24 & 1.11 & 1.16 \\
\addlinespace[2pt]
jupiter & 10k$^\dagger$ & 1.03 & 1.11 & \textbf{1.09} & 1.65 & 1.56 & 2.27 \\
\addlinespace[2pt]
world & 10k$^\dagger$ & 1.04 & 1.13 & \textbf{1.11} & 1.64 & 1.54 & 2.20 \\
\bottomrule
\end{tabular}
\end{table}

}
\clearpage

\subsection{Experiments in Three Dimensions}
\label{sec:exp3d}

We ran the experiments of \cref{sec:setup,sec:results} on points in $\R^3$, with $\eps=1/20$: the sweep, CK box, CK, reshape and LNS, the conversion into partition WSPDs, and the computations of $\opt$, $\lb$, and $\alpha$. \Cref{tab:exp3d} gives the results per input for $n\in\{1000,10000\}$.

\subparagraph*{Implementation.}
The experiments in $\R^3$ use a separate C++ implementation, with its own kd-tree. As in the plane, points have integer coordinates, all separation tests are exact, and every output is checked by a brute-force verifier for coverage and exact separation, and, for partition WSPDs, for covering every pair exactly once. The greedy sweep uses the grid pieces of \cref{sec:higher}: $64$ cubes of side $\eps D/2$ in the ball and the $32$ of them that meet the half-ball. A step falls back to the fixed collection only if it has fewer nonempty products than the dumbbells grown in the step, which happened in $117$ steps over all instances, at most a fraction $10^{-5}$ of the steps of any instance. CK splits the longest side of the bounding box of a node at its midpoint, and reshape and LNS relabel the points in Morton order. On planar instances embedded in the plane $z=0$, CK box and CK give exactly the sizes of the planar implementation, and the sweep is within $0.15\%$ of it: ties are broken by the order in which the range queries report the points. The partition WSPDs are the conversion of Q4 with the merge pass, and the octree version of the quadtree construction of \cref{sec:partition}, with exact tests. $\opt$, $\lb$ (with $30$ plateau passes), $\alpha$, and LNS are computed as in the plane (\cref{sec:lb,sec:setup}). Reshape runs from CK and from the sweep on every instance, for $100$\,s for $n\le1000$ and $1000$\,s for larger $n$, and LNS from both starts, with the same budgets, on all instances with $n=200$, on all instances with $n=1000$, and on three instances with $n=10000$ (uniform, clustered, and the bunny). All runs are single-threaded, with up to $11$ runs in parallel, on the machine of \cref{sec:setup}; the times of the sweep, of CK box, and of CK come from separate runs of each.

\subparagraph*{Inputs.}
The \emph{synthetic inputs} (\cref{fig:inputs3d}) are generated in the cube $[0,M]^3$, with $M=10^6$, and rounded to integer coordinates: \emph{uniform} in the cube; \emph{Gaussian} (around the center of the cube, standard deviation $M/8$ in each coordinate); \emph{clustered} ($\max\{2,n/20\}$ uniform centers, each point Gaussian around a random center with standard deviation $M/100$); \emph{perturbed grid} (a grid with $\lceil n^{1/3}\rceil$ points per side, filled layer by layer, each point moved by a Gaussian of standard deviation $2\%$ of the spacing); \emph{conical spiral} (the point $2000\,e^t(\cos3t,\sin3t,1)$ relative to the center of the cube, with $t$ uniform in $[0,6]$); and \emph{spherical shell} (around the center of the cube, uniform direction, radius uniform in $[1,1.1]\cdot M/4$). Duplicate points are discarded and replaced. The \emph{scans} are four meshes of the Stanford 3D Scanning Repository~\cite{Stanford3D}: the bunny, the armadillo, the XYZ RGB dragon, and the happy Buddha, with $35947$, $49990$, $125066$, and $49251$ vertices. Each is scaled so that the longest side of its bounding box is $10^6$ and rounded, and we use a random sample of $n$ vertices. We use the values of $n$ of \cref{sec:setup} up to $10000$: from $200$ to $1000$ in steps of $100$ and from $2000$ to $10000$ in steps of $1000$. For each $n$, there is one instance per generator and one sample of each scan, so $10$ instances, and $180$ in all.

\begin{figure}[htb]
\centering
\includegraphics[width=\linewidth]{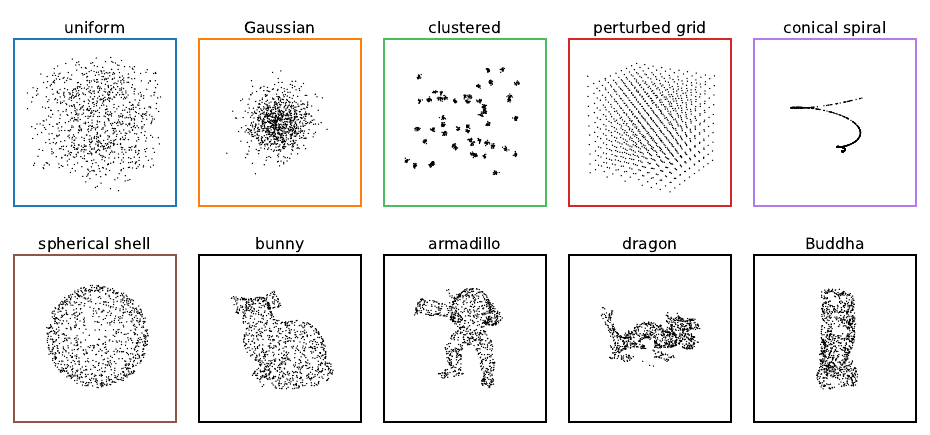}
\caption{Synthetic instances and scans in $\R^3$ for $n=1000$.}
\label{fig:inputs3d}
\end{figure}

\subparagraph*{Reference value.}
CPLEX proves optimality on all $90$ instances with $n\le1000$. Computing $\opt$ takes a median of $3.2$\,s for $n=1000$, but up to $55$\,min on the conical spirals, where the enumeration of the maximal dumbbells dominates. For larger $n$, the reference is $\lb$. Where $\opt$ is known (\cref{fig:lb3d}), $\lb$ is on average $0.999$, $0.998$, and $0.998$ times $\opt$ for $n=200$, $500$, and $1000$, at least $0.986\,\opt$ on every instance, and equal to $\opt$ on $20$ instances. The number of steps $\lb_0$ is on average $0.99$, $0.98$, and $0.97$ times $\opt$, and at least $0.86\,\opt$. For $n\le500$, $\alpha$ is at least $0.987\,\opt$ and on average at least $0.998\,\opt$; CPLEX computes it in at most $88$\,s. Of the gap between $\lb_0$ and $\alpha$, the swaps close $85\%$ and the plateau moves another $14\%$, and $\lb=\alpha$ on $31$ of these $40$ instances. The ratio $\lb/\lb_0$ grows from $1.03$ for $n=1000$ to $1.09$ for $n=10000$.

\begin{figure}[htb]
\centering
\includegraphics[width=\linewidth]{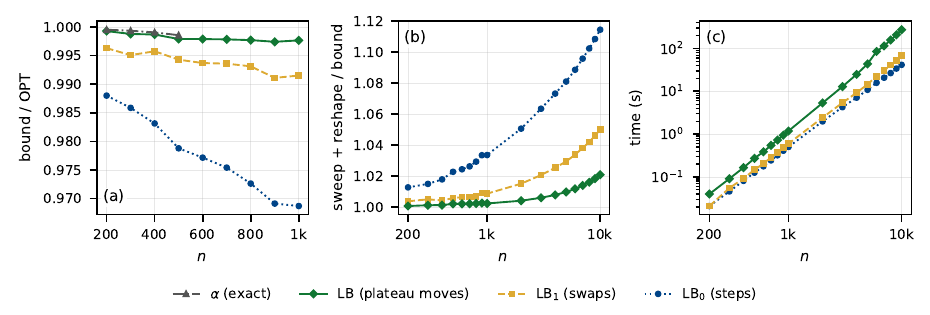}
\caption{The lower bounds in $\R^3$: means over the instances.}
\label{fig:lb3d}
\end{figure}

\subparagraph*{Quality.}
\Cref{fig:summary3d} summarizes the results, and \cref{fig:ratio3d} shows them per input. The sweep is never larger than CK and CK box. It is smaller than both on every instance except the perturbed grids with $n\le1000$, where every dumbbell of every WSPD is a single pair, so that all methods output $\binom n2$ dumbbells. For $n=500$, the sweep is on average $1.4\%$ above $\opt$ (at most $9\%$), and CK box and CK are $15\%$ and $12\%$ above (at most $63\%$ and $36\%$). For $n=1000$, the sweep is at $1.02\,\opt$ and CK at $1.18\,\opt$. For $n=10000$, the sweep is on average at $1.13\,\lb$ (at most $1.23\,\lb$), and CK box and CK at $1.61\,\lb$ and $1.49\,\lb$ (at most $2.03\,\lb$ and $1.68\,\lb$).
\Cref{fig:perpoint3d} gives the number of dumbbells per point of the sweep followed by reshape and the number of pairs per dumbbell of the reference, $\binom n2$ divided by the reference. For $n\le1000$, the uniform, Gaussian, and perturbed grid inputs have at most $1.16$ pairs per dumbbell, so almost all dumbbells are single pairs and every method is close to $\opt$. For $n=10000$, the number of pairs per dumbbell ranges from $1.16$ on the perturbed grid to $244$ on the conical spiral, and it is between $7.6$ and $9.0$ on the scans; the number of dumbbells per point ranges from $21$ on the conical spiral to $4348$ on the perturbed grid.

\begin{figure}[htb]
\centering
\includegraphics[width=\linewidth]{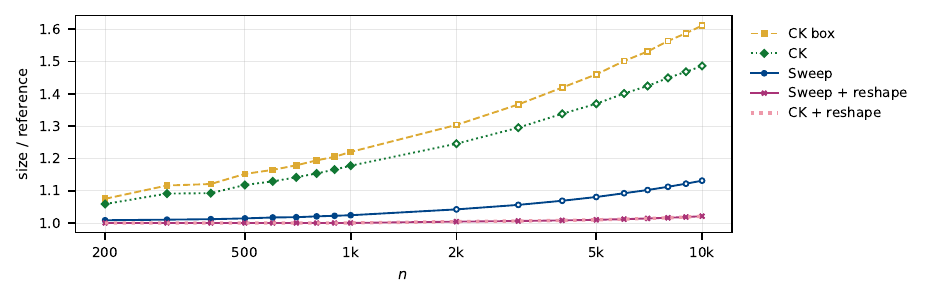}
\caption{Means in $\R^3$. Open markers: the reference is $\lb$ on all instances.}
\label{fig:summary3d}
\end{figure}
\begin{figure}[htb]
\centering
\includegraphics[width=\linewidth]{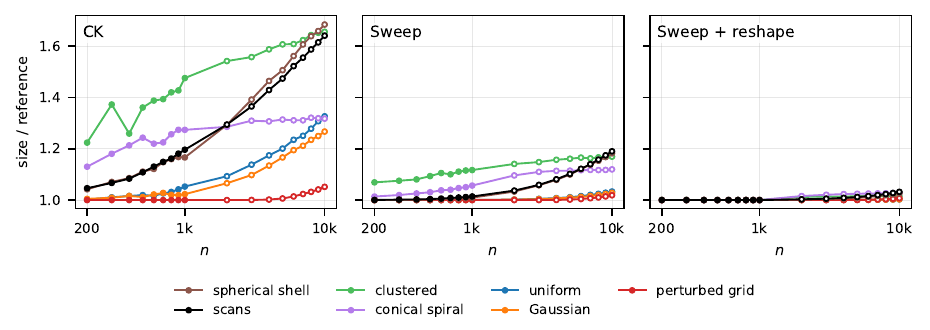}
\caption{Size divided by the reference in $\R^3$, per input. Markers as in \cref{fig:summary3d}.}
\label{fig:ratio3d}
\end{figure}
\begin{figure}[htb]
\centering
\includegraphics[width=\linewidth]{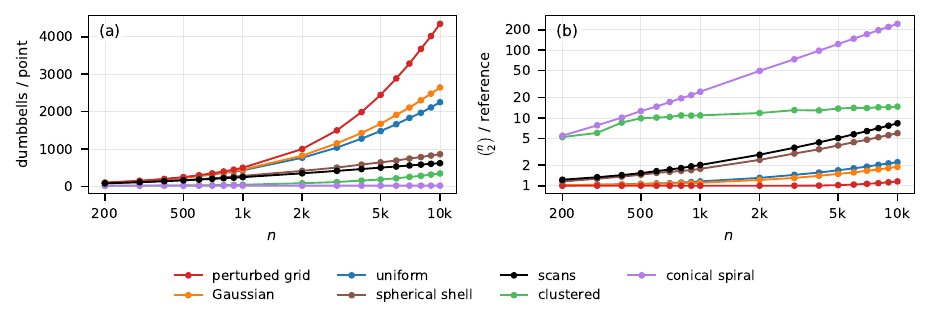}
\caption{Sweep + reshape in $\R^3$: (a)~dumbbells per point, (b)~pairs per dumbbell of the reference, per input.}
\label{fig:perpoint3d}
\end{figure}

\begin{table}[htbp]
\centering\footnotesize\setlength{\tabcolsep}{4pt}
\caption{Size divided by the reference in $\R^3$, per input (one instance each). The column ``Pairs'' gives $\binom n2$ divided by the reference. $\dagger$: the reference is the lower bound. Bold: smallest construction of the row.}\label{tab:exp3d}
\begin{tabular}{@{}lrrccccc@{}}
\toprule
Input & $n$ & Pairs & \shortstack{\strut CK box} & \shortstack{\strut CK} & \shortstack{\strut Sweep} & \shortstack{\strut CK\\\strut + reshape} & \shortstack{\strut Sweep\\\strut + reshape} \\
\midrule
uniform & 1k\phantom{$^\dagger$} & 1.2 & 1.06 & 1.05 & \textbf{1.00} & 1.00 & 1.00 \\
 & 10k$^\dagger$ & 2.2 & 1.37 & 1.33 & \textbf{1.03} & 1.00 & 1.00 \\
\addlinespace[2pt]
Gaussian & 1k\phantom{$^\dagger$} & 1.1 & 1.03 & 1.02 & \textbf{1.00} & 1.00 & 1.00 \\
 & 10k$^\dagger$ & 1.9 & 1.30 & 1.27 & \textbf{1.03} & 1.00 & 1.00 \\
\addlinespace[2pt]
clustered & 1k\phantom{$^\dagger$} & 10.9 & 1.76 & 1.48 & \textbf{1.12} & 1.00 & 1.00 \\
 & 10k$^\dagger$ & 14.7 & 2.03 & 1.66 & \textbf{1.17} & 1.02 & 1.01 \\
\addlinespace[2pt]
perturbed grid & 1k\phantom{$^\dagger$} & 1.0 & \textbf{1.00} & \textbf{1.00} & \textbf{1.00} & 1.00 & 1.00 \\
 & 10k$^\dagger$ & 1.2 & 1.05 & 1.05 & \textbf{1.02} & 1.01 & 1.01 \\
\addlinespace[2pt]
conical spiral & 1k\phantom{$^\dagger$} & 24.4 & 1.28 & 1.27 & \textbf{1.06} & 1.00 & 1.00 \\
 & 10k$^\dagger$ & 243.8 & 1.32 & 1.32 & \textbf{1.12} & 1.03 & 1.03 \\
\addlinespace[2pt]
spherical shell & 1k\phantom{$^\dagger$} & 1.8 & 1.19 & 1.17 & \textbf{1.01} & 1.00 & 1.00 \\
 & 10k$^\dagger$ & 6.0 & 1.87 & 1.68 & \textbf{1.18} & 1.03 & 1.03 \\
\midrule
bunny & 1k\phantom{$^\dagger$} & 1.9 & 1.21 & 1.19 & \textbf{1.01} & 1.00 & 1.00 \\
 & 10k$^\dagger$ & 8.3 & 1.81 & 1.67 & \textbf{1.23} & 1.04 & 1.04 \\
\addlinespace[2pt]
armadillo & 1k\phantom{$^\dagger$} & 2.2 & 1.23 & 1.21 & \textbf{1.02} & 1.00 & 1.00 \\
 & 10k$^\dagger$ & 9.0 & 1.83 & 1.67 & \textbf{1.20} & 1.04 & 1.04 \\
\addlinespace[2pt]
dragon & 1k\phantom{$^\dagger$} & 2.1 & 1.24 & 1.21 & \textbf{1.01} & 1.00 & 1.00 \\
 & 10k$^\dagger$ & 8.5 & 1.78 & 1.62 & \textbf{1.17} & 1.03 & 1.03 \\
\addlinespace[2pt]
Buddha & 1k\phantom{$^\dagger$} & 2.0 & 1.21 & 1.18 & \textbf{1.01} & 1.00 & 1.00 \\
 & 10k$^\dagger$ & 7.6 & 1.76 & 1.61 & \textbf{1.16} & 1.02 & 1.02 \\
\bottomrule
\end{tabular}
\end{table}

\subparagraph*{Running times.}
\Cref{fig:time3d} reports the running times and the memory. On average, the sweep takes $0.5$\,s for $n=1000$ and $45$\,s for $n=10000$, and CK box and CK take $1.4$\,s and $1.5$\,s for $n=10000$. The sweep followed by reshape reaches $1.001\,\opt$ in $1.0$\,s on average for $n=1000$. After the sweep, computing $\lb_1$ takes $26$\,s and computing $\lb$ $3.9$\,min on average for $n=10000$. For $n=10000$, the sweep and reshape use up to $8.4$\,GB and CK up to $5.0$\,GB, on the perturbed grid, whose WSPDs have the most dumbbells.

\begin{figure}[htb]
\centering
\includegraphics[width=\linewidth]{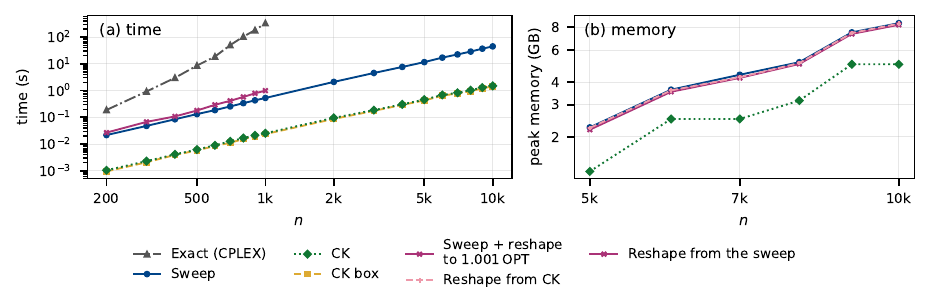}
\caption{Mean running time and largest peak memory in $\R^3$.}
\label{fig:time3d}
\end{figure}

\subparagraph*{Local search.}
Reshape, from either start, is at most $0.03\%$ above $\opt$ on the $90$ instances where $\opt$ is known, and reaches $\opt$ on $72$ of them from CK and on $69$ from the sweep. For $n=10000$, it is on average at $1.021\,\lb$ and at most $1.043\,\lb$, against $1.13\,\lb$ for the sweep alone. The two starts end within $0.03\%$ of each other on every instance. Reshape ends at most as large as LNS from the same start in every run of \cref{fig:ls3d}. For $n=1000$, LNS ends on average $0.36\%$ above the reference from CK and $0.19\%$ from the sweep, against less than $0.01\%$ for reshape, and for $n=10000$, LNS improves CK from $1.55$ to $1.34$ times the reference and the sweep from $1.15$ to $1.10$ (means over the three instances), against $1.02$ for reshape from both starts. \Cref{fig:lstotal3d} shows reshape from both starts against the total time, including the time to build the start.

\begin{figure}[htb]
\centering
\includegraphics[width=\linewidth]{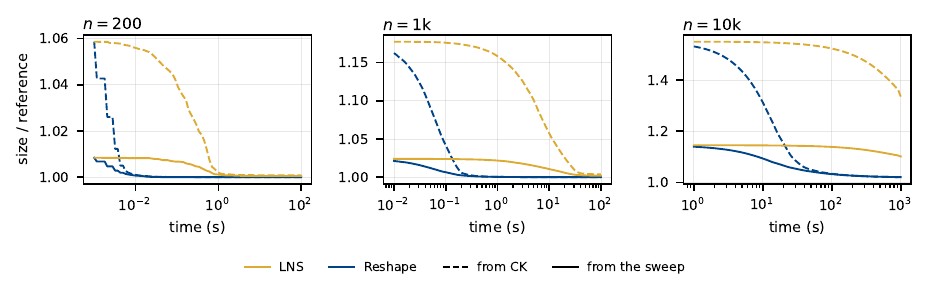}
\caption{Local search in $\R^3$: mean size / reference over time, on all instances for $n\in\{200,1000\}$.}
\label{fig:ls3d}
\end{figure}
\begin{figure}[htb]
\centering
\includegraphics[width=\linewidth]{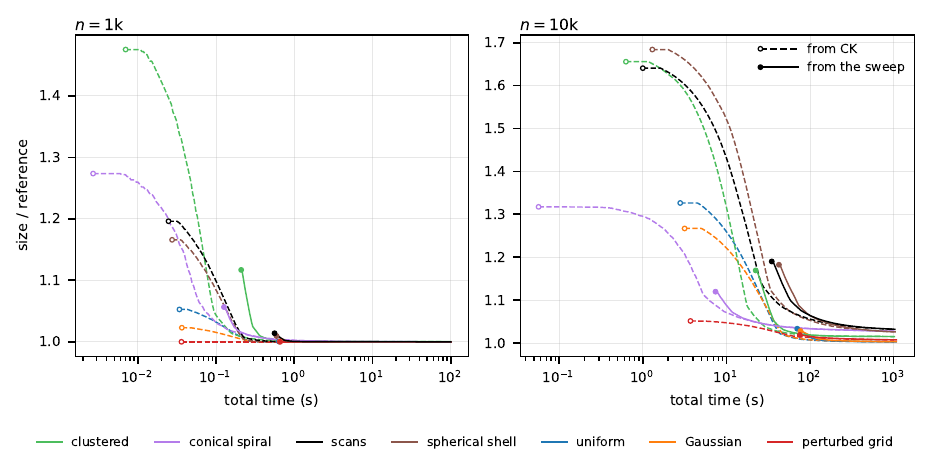}
\caption{Mean size / reference in $\R^3$ per input over the total time: build of the start, then reshape.}
\label{fig:lstotal3d}
\end{figure}

\subparagraph*{Partition WSPDs.}
\Cref{fig:partconv3d} shows the partition WSPDs. For the sweep followed by reshape, the converted partition WSPD is on average $1.005$ and $1.055$ times the reference for $n=1000$ and $10000$, against $1.000$ and $1.021$ for the cover WSPD. CK is at $1.18$ and $1.49$, and at $1.09$ and $1.46$ after the merge pass. It is never smaller than the converted WSPD, also after the merge pass. The octree construction is at $1.51$ and $2.44$, and at $1.14$ and $2.25$ after the merge pass; with the merge pass, it is smaller than CK with the merge pass on $5$ instances, all with $n\le300$.

\begin{figure}[htb]
\centering
\includegraphics[width=\linewidth]{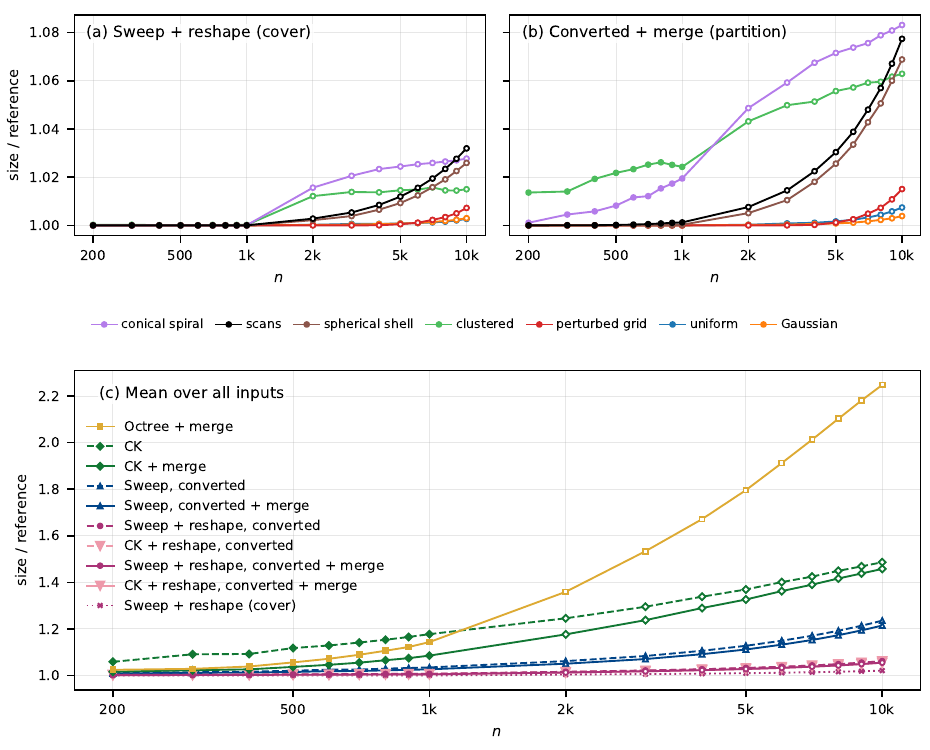}
\caption{Partition WSPDs in $\R^3$: mean size / reference, (a,\,b)~per input. Markers as in \cref{fig:summary3d}.}
\label{fig:partconv3d}
\end{figure}
\clearpage

\section{Larger \texorpdfstring{$\eps$}{epsilon}}
\label{sec:largeeps}

For $\eps>1/16$, the pieces of \cref{sec:sweep} may give dumbbells that are not $\eps$-separated: the proof of \cref{thm:sweep} bounds the distance between two pieces only by $(1-2\eps)D$, which is useless for $\eps\ge1/2$. We show that pieces that depend on $\eps$, together with a separation test, give both sweeps a constant factor for every $\eps>0$. The new ingredient is a lower bound on the distance between two pieces that uses a dumbbell covering the anchor. Throughout, $P\subset\R^2$.

\subparagraph*{Piece families.}
Let $H=\{z\in\ball(0,1): z_1\ge0\}$ be the unit half-disk, where $z_1$ denotes the first coordinate of~$z$. A \emph{piece family} $\mathcal L=(\mathcal L^+,\mathcal L^\circ)$ consists of a partition $\mathcal L^+$ of $H$ and a partition $\mathcal L^\circ$ of $\ball(0,1)$ into finitely many \emph{pieces}. Each piece $\Pi$ comes with numbers $0\le a_\Pi\le b_\Pi\le1$ and $\beta_\Pi>0$ such that $a_\Pi\le\|z\|\le b_\Pi$ for every $z\in\Pi$ and $\diam(\Pi)\le\beta_\Pi$. The \emph{size} of $\mathcal L$ is $|\mathcal L^+|\cdot|\mathcal L^\circ|$. For $0\le a\le b\le1$ and $0\le c\le d\le1$, let
\[
  M_\eps(a,b;c,d)=\min_{r\in[a,b],\ s\in[c,d]}\ \max\Bigl\{1-\eps(r+s),\ r,\ s,\ \frac{1}{1+2\eps}\Bigr\}.
\]
The family is \emph{$\eps$-admissible} if all $\Pi\in\mathcal L^+$ and $\Sigma\in\mathcal L^\circ$ satisfy
\begin{equation}\label{eq:adm}
  \max\{\beta_\Pi,\beta_\Sigma\}\le\max\bigl\{1-\eps(b_\Pi+b_\Sigma),\ M_\eps(a_\Pi,b_\Pi;a_\Sigma,b_\Sigma)-\eps(\beta_\Pi+\beta_\Sigma)\bigr\}.
\end{equation}
Every term on the right-hand side is nonincreasing in $\eps$, so a family that is $\eps_0$-admissible is $\eps$-admissible for every $\eps\le\eps_0$. The first term is the bound used in the proof of \cref{thm:sweep}: the pieces of \cref{sec:sweep}, scaled to radius $1$, with $a_\Pi=0$, $b_\Pi=1$, and the diameters of \cref{lem:pieces}, form a $1/16$-admissible family of size $40$.

\subparagraph*{The collection of a family.}
Fix a piece family $\mathcal L$. For a step with anchor $\{p,q\}$, where $p\prec q$ and $D=\|pq\|$, let $A_\Pi=\{z\in P: z\succeq p,\ z\in p+\eps D\,\Pi\}$ for $\Pi\in\mathcal L^+$, and let $C_\Sigma=P\cap(q+\eps D\,\Sigma)$ for $\Sigma\in\mathcal L^\circ$. The \emph{$\mathcal L$-collection} of the step consists of the dumbbells $\{A_\Pi,C_\Sigma\}$ whose sides are nonempty and disjoint and that are $\eps$-separated. Both conditions are tested on the points, so the $\mathcal L$-collection consists of at most $|\mathcal L^+|\cdot|\mathcal L^\circ|$ dumbbells, all $\eps$-separated. For the pieces of \cref{sec:sweep} and $\eps\le1/16$, every dumbbell $\{A_i,C_k\}$ with nonempty sides passes both tests, so the $\mathcal L$-collection is the fixed collection.

\begin{lemma}[Piece separation]\label{lem:piecesep}
Let $p,q\in\R^2$, $D=\|pq\|$, and let $\{X,Y\}$ be an $\eps$-separated dumbbell with $p,x\in X$ and $q,y\in Y$. Let $A\ni x$ and $C\ni y$ be sets such that $a\eps D\le\|pz\|\le b\eps D$ for every $z\in A$, $c\eps D\le\|qz\|\le d\eps D$ for every $z\in C$, $\diam(A)\le\beta\eps D$, and $\diam(C)\le\beta'\eps D$. Then
\[
  \dist(A,C)\ge D\cdot\max\bigl\{1-\eps(b+d),\ M_\eps(a,b;c,d)-\eps(\beta+\beta')\bigr\}.
\]
\end{lemma}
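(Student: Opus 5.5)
The plan is to bound $\dist(A,C)$ from below twice, once for each term of the maximum, and then take the larger bound. Throughout, fix arbitrary $z\in A$ and $w\in C$; it suffices to bound $\|zw\|$ from below.

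\emph{First term.} This is the argument of the proof of \cref{thm:sweep}. By the triangle inequality,
\[
\|zw\|\ge\|pq\|-\|pz\|-\|qw\|\ge D-b\eps D-d\eps D=D\bigl(1-\eps(b+d)\bigr).
\]

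\emph{Second term.} Here I route through the pair $\{x,y\}$. Since $x\in A$ and $y\in C$, we have $\|zw\|\ge\|xy\|-\|xz\|-\|yw\|\ge\|xy\|-(\beta+\beta')\eps D$. So it remains to show $\|xy\|\ge D\cdot M_\eps(a,b;c,d)$.

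To do this, set $r=\|px\|/(\eps D)$ and $s=\|qy\|/(\eps D)$. The hypotheses on $A$ and $C$ give $r\in[a,b]$ and $s\in[c,d]$. Since $M_\eps$ is a minimum over exactly these intervals, it is enough to prove
\[
\|xy\|\ge D\max\bigl\{1-\eps(r+s),\ r,\ s,\ 1/(1+2\eps)\bigr\}
\]
for these specific values of $r$ and $s$. I would verify the four bounds one at a time, with $\ell=\dist(X,Y)$.
\begin{enumerate}
\item The triangle inequality gives $\|xy\|\ge D-\|px\|-\|qy\|=D\bigl(1-\eps(r+s)\bigr)$.
\item Because $x$ and $y$ lie on opposite sides of $\{X,Y\}$, we have $\|xy\|\ge\ell$. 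The separation gives $r\eps D=\|px\|\le\diam(X)\le\eps\ell$, so $\ell\ge rD$.
\item Symmetrically, $s\eps D=\|qy\|\le\diam(Y)\le\eps\ell$, so $\ell\ge sD$.
\item Since $\{X,Y\}$ covers $\{p,q\}$, \cref{obs:local} gives $D\le(1+2\eps)\ell$, so $\ell\ge D/(1+2\eps)$.
\end{enumerate}
Combining these four bounds yields the claim for $\|xy\|$, and hence the second term.

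There is no real obstacle here. The only point needing care is that the terms $r$ and $s$ come from the $\eps$-separation of $\{X,Y\}$ through $\diam(X)\le\eps\ell$, not from the pieces. One should also note that replacing the actual $r$ and $s$ by the minimum over $[a,b]\times[c,d]$ can only decrease the bound, which is what makes the statement hold uniformly over the piece.
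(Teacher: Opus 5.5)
Your proposal is correct and follows the paper's proof essentially step for step. You bound an arbitrary cross distance once by the triangle inequality through $p,q$, and once through $\{x,y\}$ using $\diam(A)$ and $\diam(C)$. You then get $\|xy\|\ge M_\eps D$ from the same four bounds: the triangle inequality, $\ell\ge rD$ and $\ell\ge sD$ from $\eps$-separation, and $\ell\ge D/(1+2\eps)$ from \cref{obs:local}.
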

\begin{proof}
Let $a'\in A$ and $c'\in C$. First, $\|a'c'\|\ge\|pq\|-\|pa'\|-\|qc'\|\ge(1-\eps(b+d))D$. Second, let $\ell=\dist(X,Y)$, $r=\|px\|/(\eps D)$, and $s=\|qy\|/(\eps D)$, so that $r\in[a,b]$ and $s\in[c,d]$. Since $\{X,Y\}$ is $\eps$-separated, $\ell\ge\diam(X)/\eps\ge\|px\|/\eps=rD$, and similarly $\ell\ge sD$. By \cref{obs:local}, $\ell\ge D/(1+2\eps)$. Moreover, $\|xy\|\ge\ell$ and $\|xy\|\ge\|pq\|-\|px\|-\|qy\|=(1-\eps(r+s))D$. Hence $\|xy\|\ge M_\eps(a,b;c,d)\,D$, and $\|a'c'\|\ge\|xy\|-\diam(A)-\diam(C)\ge(M_\eps(a,b;c,d)-\eps(\beta+\beta'))D$.
\end{proof}

\begin{lemma}\label{lem:lcover}
If $\mathcal L$ is $\eps$-admissible, then, in every step, with anchor $\{p,q\}$, the $\mathcal L$-collection covers $R_{pq}$.
\end{lemma}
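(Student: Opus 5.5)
We take a pair $\{x,y\}\in R_{pq}$ and exhibit one dumbbell of the $\mathcal L$-collection that covers it.

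\emph{Locating the pair.} By definition of $R_{pq}$ and \cref{obs:hered}, there is an $\eps$-separated dumbbell $\{X,Y\}$ covering both $\{p,q\}$ and $\{x,y\}$. We name $x,y$ so that $p,x\in X$ and $q,y\in Y$. Let $\ell=\dist(X,Y)$. Then $\ell\le D$, and by \cref{obs:local} we get $\|px\|\le\eps\ell\le\eps D$ and $\|qy\|\le\eps D$. By the definition of $R_{pq}$, both $x$ and $y$ are $\succeq p$. Since $x\succeq p$ and $\|px\|\le\eps D$, the point $x$ lies in $p+\eps D\,H$, so it lies in some $A_\Pi$ with $\Pi\in\mathcal L^+$. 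Since $y\in\ball(q,\eps D)$, it lies in some $C_\Sigma$ with $\Sigma\in\mathcal L^\circ$. Both sides are therefore nonempty, and the dumbbell $\{A_\Pi,C_\Sigma\}$ covers $\{x,y\}$.

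\emph{Disjointness and separation.} It remains to show that $\{A_\Pi,C_\Sigma\}$ passes the two tests, that is, its sides are disjoint and it is $\eps$-separated. We apply \cref{lem:piecesep} with $A=A_\Pi$ and $C=C_\Sigma$. The radial bounds come from $a_\Pi\le\|z\|\le b_\Pi$ after scaling by $\eps D$, giving $a=a_\Pi$ and $b=b_\Pi$, and similarly $c=a_\Sigma$ and $d=b_\Sigma$. The diameter bounds are $\beta=\beta_\Pi$ and $\beta'=\beta_\Sigma$. Combined with admissibility \eqref{eq:adm}, the lemma gives
\[
\dist(A_\Pi,C_\Sigma)\ge D\cdot\max\{\beta_\Pi,\beta_\Sigma\}\ge \max\{\diam(A_\Pi),\diam(C_\Sigma)\}/\eps .
\]
This is exactly $\eps$-separation. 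For disjointness we need $\dist(A_\Pi,C_\Sigma)>0$, and this holds because $\beta_\Pi,\beta_\Sigma>0$.

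\emph{Main obstacle: the hypotheses of \cref{lem:piecesep}.} That lemma needs a single $\eps$-separated dumbbell containing $p,x$ on one side and $q,y$ on the other. The compatibility witness from \cref{obs:hered} supplies this, but we must check its orientation. It pairs $p$ with $x$ or $p$ with $y$, and we rename $x,y$ accordingly. This renaming is harmless for membership in pieces: the argument in the first paragraph only used $x,y\succeq p$, which holds for both endpoints. A second subtlety is the degenerate case $\{x,y\}=\{p,q\}$. Here we take $X=\{p\}$ and $Y=\{q\}$, and the argument goes through unchanged.
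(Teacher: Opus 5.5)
Your proof is correct and follows the paper's argument essentially step for step. You use the compatibility witness and $x\succeq p$ to place $x$ in some $A_\Pi$ and $y$ in some $C_\Sigma$, and then combine \cref{lem:piecesep} with \eqref{eq:adm} to get a positive distance, hence disjoint sides, and $\eps$-separation. Your separate treatment of $\{x,y\}=\{p,q\}$ is unnecessary, since the general argument already covers that case.
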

\begin{proof}
Let $\{x,y\}\in R_{pq}$. By \cref{obs:hered}, we may name $x$ and $y$ so that $\{\{p,x\},\{q,y\}\}$ has disjoint sides and is $\eps$-separated. Its distance is at most $D$, so $\|px\|\le\eps D$ and $\|qy\|\le\eps D$. Since $x\succeq p$ by the definition of $R_{pq}$, the first coordinate of $x$ is at least that of $p$, so $x\in p+\eps D\,H$. Hence $x\in A_\Pi$ for some $\Pi\in\mathcal L^+$, and similarly $y\in C_\Sigma$ for some $\Sigma\in\mathcal L^\circ$. By the definition of a piece family, $A=A_\Pi$ and $C=C_\Sigma$ satisfy the conditions of \cref{lem:piecesep} with $(a,b,\beta)=(a_\Pi,b_\Pi,\beta_\Pi)$ and $(c,d,\beta')=(a_\Sigma,b_\Sigma,\beta_\Sigma)$. By \cref{lem:piecesep} and~\eqref{eq:adm},
\[
  \dist(A_\Pi,C_\Sigma)\ge D\max\{\beta_\Pi,\beta_\Sigma\}\ge\max\{\diam(A_\Pi),\diam(C_\Sigma)\}/\eps .
\]
The left-hand side is positive, so the sides are disjoint, and the dumbbell $\{A_\Pi,C_\Sigma\}$ is $\eps$-separated. Hence it belongs to the $\mathcal L$-collection, and it covers $\{x,y\}$.
\end{proof}

\subparagraph*{The sweeps.}
Let $\mathcal L$ be a piece family of size $c$. The \emph{sweep with $\mathcal L$} outputs the $\mathcal L$-collection in every step. In every step, the \emph{greedy sweep with $\mathcal L$} computes a greedy cover of $R_{pq}$ as in \cref{sec:greedy}, except that, if $|R_{pq}|>16$, it stops after $c$ grown dumbbells. It outputs the greedy cover if the greedy cover covers $R_{pq}$ and has at most as many dumbbells as the $\mathcal L$-collection, and it outputs the $\mathcal L$-collection otherwise. For the pieces of \cref{sec:sweep} and $\eps\le1/16$, these are the fixed sweep of \cref{sec:sweep} and the greedy sweep of \cref{sec:greedy}. As in \cref{sec:greedy}, the cap does not change the output: if $|R_{pq}|>16$, every grown dumbbell belongs to the greedy cover, so a greedy cover with more than $c$ dumbbells is larger than the $\mathcal L$-collection.

\begin{theorem}\label{thm:largeeps}
Let $\eps>0$, let $\mathcal L$ be an $\eps$-admissible piece family of size $c$, and let $P\subset\R^2$. The sweep and the greedy sweep with $\mathcal L$ perform at most $\opt_\eps$ steps and output an $\eps$-WSPD of size at most $c\,\opt_\eps$. For a fixed family and $\eps\ge1/16$, both run in $O(n^3)$ time and $O(n^2)$ space.
\end{theorem}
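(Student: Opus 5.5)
The plan is to reuse the proof of \cref{thm:sweep} almost verbatim, replacing the geometric validity argument by \cref{lem:lcover}.

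\emph{Validity and step count.} By construction, every dumbbell of the $\mathcal L$-collection has nonempty disjoint sides and passes the $\eps$-separation test on its points. Grown dumbbells are $\eps$-separated by construction. Hence every output dumbbell is $\eps$-separated. By \cref{lem:lcover}, the $\mathcal L$-collection covers $R_{pq}$. The greedy sweep outputs either this collection or a greedy cover that covers $R_{pq}$. So in both sweeps every step covers $R_{pq}$, and in particular its anchor $\{p,q\}$. Every pair $\{p,q\}$ with $p\prec q$ is therefore covered once $p$ is processed, and the output is an $\eps$-WSPD. By \cref{obs:anchors}, the anchors are pairwise incompatible, so there are at most $\opt_\eps$ steps. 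Each step outputs at most $c$ dumbbells: the $\mathcal L$-collection has at most $|\mathcal L^+|\cdot|\mathcal L^\circ|=c$, and the greedy cover is output only if it is no larger. This gives size at most $c\,\opt_\eps$. The final redundancy deletion of the greedy sweep only shrinks the output. Unlike the proof of \cref{thm:sweep}, I would not rely on the charging argument with diameter bounds, only on \cref{obs:anchors}.

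\emph{Running time.} This is the main obstacle, since \cref{lem:multiplicity,lem:greedymult} assumed $\eps\le1/16$, and for large $\eps$ the two disks around $p$ and $q$ may overlap. I would argue crudely instead. Precompute the sorted lists and the coverage table in $O(n^2\log n)$ time and $O(n^2)$ space, and inspect every pair once in $O(n^2)$ time.

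For a step of the fixed-family sweep:
\begin{itemize}
\item Scan all of $P$ in $O(n)$ time, assigning each point to its piece in $O(1)$ time, since the family is fixed.
\item Test separation of each of the $c=O(1)$ candidate dumbbells in time $O(n^2)$ by brute force over the points of its sides.
\item Mark the covered pairs in $O(n^2)$ time.
\end{itemize}
That is $O(n^2)$ per step. Naively this gives $O(n^2\cdot\#\text{steps})$, too much, so the number of steps needs a better bound. Here I would use that the anchors are pairwise incompatible together with a packing argument like the one in \cref{lem:multd}. For $\eps\ge1/16$, each anchor covers, with its compatibility ball pair, a product region in $\R^4$ of radius proportional to $\eps D/(1+2\eps)$. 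If this fails to bound the step count by $O(n)$, I would instead amortize the marking cost: each step marks at least one new pair (its anchor), and for fixed $\eps\ge1/16$ and a fixed family, the number of steps covering a given pair is $O(1)$ by a variant of \cref{lem:greedymult} with $\lambda$ chosen depending on $\eps$. The delicate point is that for $\eps\ge1/4$ the balls around $x$ and $y$ may intersect, so the product-of-balls packing must be redone without the disjointness used there. I expect this to need the bound $\ell\ge D/(1+2\eps)$ from \cref{obs:local}, which makes all anchors of a class have comparable length.

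Granting $O(1)$ multiplicity, the step costs sum as in the proof of \cref{thm:greedy}: $O(N^2+M)=O(nM)$ per step, with $\sum M=O(n^2)$, giving $O(n^3)$ total time, and $O(n^2)$ space for the table and lists. The separation tests also fit in $O(N^2)\le O(nM)$.
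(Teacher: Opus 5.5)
Your validity and size argument is correct and matches the paper's. Every output dumbbell is $\eps$-separated, \cref{lem:lcover} shows that each step covers $R_{pq}$ with at most $c$ dumbbells, and \cref{obs:anchors} bounds the number of steps by $\opt_\eps$.

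The running-time part has a genuine gap. You show $O(n^2)$ work per step, but you never bound the number of steps by $O(n)$. Your fallback, $O(1)$ multiplicity giving $\sum M=O(n^2)$, is explicitly granted rather than proved, and it is doubtful for large $\eps$. The packing of \cref{lem:greedymult} rests on $L\ge(1-2\lambda\eps)D_j$, which is vacuous once $2\lambda\eps\ge1$. For the growing radius $\lambda\eps=\eps(2+2\eps)$ this already happens at $\eps=1/4$. Since a step must at least look at $\ball(p,\eps D)$ and $\ball(q,\eps D)$, it happens for every choice of $\lambda$ once $\eps\ge1/2$. A short pair $\{x,y\}$ can then lie in the balls of anchors of arbitrarily larger scales. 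The bound $\ell\ge D/(1+2\eps)$ of \cref{obs:local} does not help: it compares an anchor with an optimal dumbbell covering it, not with the pair $\{x,y\}$. The remark that each step marks a new pair only gives $\binom n2$ steps.

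The missing idea is a one-line monotonicity argument. For $\eps\ge1/16$, every $1/16$-separated dumbbell is $\eps$-separated, so $\opt_\eps\le\opt_{1/16}=O(n)$ by the Callahan--Kosaraju bound $O(n/\eps^2)$ at the constant $1/16$. With your step bound, this gives $O(n)$ steps. It then suffices to bound each step by $O(n^2)$ directly, with no multiplicity lemma. Replace $O(N^2+M)=O(nM)$ by $O(N^2+M)=O(n^2)$, using $N\le n$ and $M\le n^2$.
\begin{itemize}
\item Assigning points to pieces takes $O(n)$ time.
\item Computing all diameters and testing the at most $c$ dumbbells $\{A_\Pi,C_\Sigma\}$ takes $O(n^2)$ time in total.
\item Enumerating $R_{pq}$ takes $O(n^2)$ time, since compatibility is an $O(1)$ test by \cref{obs:hered}.
\item Each of the at most $\max\{c,16\}$ grown dumbbells costs $O(n^2)$, as in the proof of \cref{thm:greedy}.
\item Marking the output takes $O(n^2)$ time.
\end{itemize}
This yields $O(n^3)$ time and $O(n^2)$ space, which is the paper's argument.
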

\begin{proof}
\emph{Size and steps.} Every output dumbbell is $\eps$-separated: the grown dumbbells by construction, and those of the $\mathcal L$-collection by its tests. By \cref{lem:lcover}, every step, with anchor $\{p,q\}$, covers $R_{pq}$, which contains $\{p,q\}$, with at most $c$ dumbbells. As explained before \cref{thm:greedy}, the proofs of validity and of the approximation factor in \cref{thm:sweep} use only this property, with $c$ in place of $40$, and \cref{obs:anchors} bounds the number of steps by $\opt_\eps$.

\emph{Running time.} Let $\eps\ge1/16$. Every $1/16$-separated dumbbell is $\eps$-separated, so $\opt_\eps\le\opt_{1/16}=O(n)$ by the CK construction~\cite{CK95}, and there are $O(n)$ steps. We store the covered pairs in an $n\times n$ table and inspect every pair once, as in the proof of \cref{thm:sweep}. It remains to show that a step takes $O(n^2)$ time. Every point is assigned to its piece in $O(1)$ time, since the family is fixed. The diameters of all sets $A_\Pi$ and $C_\Sigma$ are computed in $O(n^2)$ time. Testing the dumbbell $\{A_\Pi,C_\Sigma\}$ takes $O(|A_\Pi|\cdot|C_\Sigma|)$ time, which sums to $O(n^2)$ over all $\Pi$ and $\Sigma$. Enumerating $R_{pq}$ takes $O(n^2)$ time, since compatibility is tested in $O(1)$ time by \cref{obs:hered}. A dumbbell is grown in $O(n^2)$ time, as in the proof of \cref{thm:greedy}, and at most $\max\{c,16\}$ dumbbells are grown. Marking the pairs covered by each of the at most $c$ output dumbbells takes $O(n^2)$ time. Hence the total time is $O(n^3)$, and the table takes $O(n^2)$ space.
\end{proof}

In particular, the number of steps of either sweep is a lower bound on $\opt_\eps$ for every $\eps>0$.

\subparagraph*{Ring families.}
Radii $0=r_0<r_1<\dots<r_K=1$ and positive integers $n_1,\dots,n_K$ define a partition of $H$: for $j=1,\dots,K$, the $j$th \emph{ring}, the set of points $z\in H$ with $r_{j-1}\le\|z\|\le r_j$, is cut into $n_j$ annular sectors of angle $\pi/n_j$. Similarly, other radii and integers define a partition of $\ball(0,1)$, with sectors of angle $2\pi/n_j$. These pieces form a \emph{ring family}, in which a piece of the $j$th ring has $a=r_{j-1}$ and $b=r_j$, and $\beta$ is its diameter, given by the next lemma. As in \cref{sec:sweep}, points on the boundary between two pieces are assigned to one of them arbitrarily. The pieces of \cref{sec:sweep} form a ring family with $K=2$. \Cref{fig:families} shows the first four families of \cref{tab:families}.

\begin{lemma}\label{lem:sectordiam}
Let $0\le a<b$ and $0<\varphi\le\pi$. The annular sector with radii $a$ and $b$ and angle $\varphi$ (a circular sector if $a=0$) has diameter $\max\{2b\sin(\varphi/2),\sqrt{a^2+b^2-2ab\cos\varphi}\}$.
\end{lemma}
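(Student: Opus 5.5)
The plan is to use polar coordinates and the convexity of the distance function. Place the sector with apex at the origin, angular range $[0,\varphi]$, and write $S$ for it.

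\emph{Reduction to extreme points.} If $\varphi\le\pi$, the sector is the intersection of a disk of radius $b$ with a convex wedge. When $a>0$ we also remove an open disk of radius $a$, so $S$ need not be convex. Still, the diameter of $S$ equals that of its convex hull, and it is attained by two extreme points of that hull. The extreme points of $\mathrm{conv}(S)$ are contained in the outer arc $\{b e^{i\theta}:\theta\in[0,\varphi]\}$ together with the two inner corners $a$ and $ae^{i\varphi}$. When $a=0$, the two inner corners coincide with the apex. This is the same reduction already used in the proof of \cref{lem:pieces}, and I would justify it briefly: the inner arc bulges toward the apex, so its interior points lie in the convex hull of its endpoints together with the outer region.

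\emph{Case analysis on pairs of extreme points.} There are three cases.
\begin{itemize}
\item \emph{Two points of the outer arc.} Two points at angular separation $\theta\le\varphi$ are at distance $2b\sin(\theta/2)$. Since $\theta/2\le\pi/2$, this is increasing in $\theta$, so the maximum is $2b\sin(\varphi/2)$.
\item \emph{An inner corner and an outer-arc point.} By the law of cosines, the distance is $\sqrt{a^2+b^2-2ab\cos\theta}$, where $\theta\in[0,\varphi]$ is the angle between the two points. This is increasing in $\theta$ on $[0,\pi]$, so the maximum is $\sqrt{a^2+b^2-2ab\cos\varphi}$.
\item \emph{The two inner corners.} Their distance is $2a\sin(\varphi/2)\le2b\sin(\varphi/2)$, so this case is dominated by the first.
\end{itemize}
The radial segment term $b-a$, which appeared in \cref{lem:pieces}, is the case $\theta=0$ of the second formula, so it is dominated as well. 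When $a=0$, the second term equals $b$, which is the apex-to-arc distance. Taking the maximum over the cases gives the claimed formula.

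\emph{Main obstacle.} The only delicate point is the extreme-point claim for $a>0$, namely that no point of the inner arc other than its endpoints can realize the diameter. I would avoid convex hulls altogether with a direct argument. For fixed $w$, the function $z\mapsto\|zw\|$ along the inner arc is maximized at an endpoint. Indeed, writing $z=ae^{i\theta}$, the squared distance $\|z-w\|^2=a^2+|w|^2-2a|w|\cos(\theta-\arg w)$ is a function of $\theta$ that is unimodal with a minimum at $\theta=\arg w$, so its maximum on an interval is attained at an endpoint. The same computation along each radial segment, where the distance is convex in the radius, moves the points to the arcs. The remaining work is combining these moves.
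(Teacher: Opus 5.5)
Your proposal is correct and follows the paper's proof: reduce to the extreme points of the convex hull (the outer arc and the two inner corners) and compare the three kinds of pairs with the same monotonicity in $\theta$. One small slip is that the inner arc bulges \emph{away} from the apex, beyond the chord between the inner corners, which is why the hull replaces it by that chord. Your alternative argument, by unimodality along the inner arc and convexity along radii, also works, because $\arg w\in[0,\varphi]$ and $\varphi\le\pi$ keep the antipodal maximum at $\arg w+\pi$ out of the interior of the arc.
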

\begin{proof}
Since $\varphi\le\pi$, the convex hull of the sector is bounded by the outer arc, the two radial segments, and the segment between the two inner corners (these corners coincide if $a=0$). Hence its extreme points are the points of the outer arc and the inner corners, and the diameter is attained by two of them. Two points of the outer arc are at distance at most $2b\sin(\varphi/2)$, with equality for the two outer corners. An inner corner and a point of the outer arc at angle $\theta\in[0,\varphi]$ from it are at distance $\sqrt{a^2+b^2-2ab\cos\theta}$, which increases with $\theta$ and is largest for the opposite outer corner. The two inner corners are at distance $2a\sin(\varphi/2)\le2b\sin(\varphi/2)$.
\end{proof}

A central piece of the disk with $n_1=1$ is a disk of diameter $2r_1$. Every other piece of a ring family has angle at most $\pi$.

\begin{proposition}\label{prop:families}
For each $\eps_0$ in \cref{tab:families}, the ring family given there is $\eps_0$-admissible. Hence, for every $\eps\le\eps_0$, the sweep and the greedy sweep with this family output an $\eps$-WSPD of size at most $c(\eps_0)\,\opt_\eps$, where $c(\eps_0)$ is its size.
\end{proposition}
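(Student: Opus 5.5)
The plan is to reduce the statement to a finite verification and then invoke \cref{thm:largeeps}. The second sentence follows at once from the first. Every term on the right-hand side of~\eqref{eq:adm} is nonincreasing in $\eps$, so an $\eps_0$-admissible family is $\eps$-admissible for every $\eps\le\eps_0$. \Cref{thm:largeeps} then gives both sweeps the size bound $c(\eps_0)\,\opt_\eps$. So the work is to check~\eqref{eq:adm} for each row of \cref{tab:families}, with $\eps=\eps_0$.

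For a fixed row, I will list the pieces of $\mathcal L^+$ and $\mathcal L^\circ$ by ring: piece $\Pi$ in ring $j$ gets $a_\Pi=r_{j-1}$ and $b_\Pi=r_j$. For $\beta_\Pi$ I use \cref{lem:sectordiam} with angle $\pi/n_j$ in the half-disk or $2\pi/n_j$ in the disk, except that a central disk piece with $n_1=1$ has $\beta=2r_1$. All pieces in the same ring share the same parameters. So condition~\eqref{eq:adm} only has to be checked for each pair (ring $j$ of the half-disk, ring $k$ of the disk), which is at most $K^+K^\circ$ inequalities per row.

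Each inequality needs $M_\eps(a,b;c,d)$ in closed form. The term $1-\eps(r+s)$ decreases in $r$ and $s$, while the terms $r$ and $s$ increase, and $1/(1+2\eps)$ is constant. Hence the minimum over the box is reached where $1-\eps(r+s)$ meets $\max\{r,s\}$, or at a corner. A short case analysis gives the value. Let $t=\max\{a,c\}$.
\begin{itemize}
\item If $1-\eps(b+d)\ge\max\{b,d\}$, the minimum is $1-\eps(b+d)$.
\item Otherwise, move along the box toward the balance point of $1-\eps(r+s)$ and $\max\{r,s\}$, and clip the result to $[t,\cdot]$.
\end{itemize}
In every case the result is finally maxed with $1/(1+2\eps)$. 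To avoid errors in this case analysis, I would lower-bound $M_\eps$ by $\max\{1/(1+2\eps),\,t,\,1-\eps(b+d)\}$. This is valid because each of the three terms is a pointwise lower bound on the max over the box. I would check~\eqref{eq:adm} with this weaker value first, and fall back to the exact minimum only for the pairs that fail.

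The main obstacle is purely numerical. The table entries are presumably tuned so that some inequalities are nearly tight. Irrational quantities such as $\sin(\pi/n_j)$ and the square roots must therefore be bounded rigorously, in the right direction: upper bounds for the $\beta$'s and lower bounds for $M_\eps$. I would verify these with interval arithmetic, or with rational bounds carried to enough digits, and record the slack for each tight pair.

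I expect the binding pairs to involve the outer rings of both sides ($b=d=1$). There the first term $1-2\eps$ is weakest, and the bound has to come from the $M_\eps$ term. For those pairs the lower bound $1/(1+2\eps)-\eps(\beta_\Pi+\beta_\Sigma)$ is the one to check first.
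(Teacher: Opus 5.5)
Your proposal is correct and follows the paper's proof. Both reduce~\eqref{eq:adm} to one inequality per pair of rings and compute $M_\eps$ exactly from finitely many candidates; the paper phrases this as minimizing a maximum of four affine functions over a rectangle, evaluated at corners, edge points where two agree, and points where three agree. Both then check at $\eps=\eps_0$ by computer and conclude with monotonicity in $\eps$ and \cref{thm:largeeps}. The paper uses plain floating point and notes that the smallest slack, $2\cdot10^{-5}$, is far above rounding error, whereas your interval-arithmetic plan, with the cheap lower bound on $M_\eps$ as a first filter, would make the check fully certified.
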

\begin{proof}
In a ring family, $a$, $b$, and $\beta$ depend only on the ring of a piece, so \eqref{eq:adm} has to be checked only for every pair of rings, one of $H$ and one of $\ball(0,1)$. The function minimized in the definition of $M_\eps$ is the maximum of four affine functions of $(r,s)$. Hence its minimum over the rectangle $[a,b]\times[c,d]$ is attained at a corner, at a point of an edge where two of the four functions are equal, or at a point where three of them are equal, and $M_\eps$ is computed exactly by evaluating these finitely many candidates. We checked~\eqref{eq:adm} at $\eps=\eps_0$ by computer, in floating-point arithmetic, for all pairs of rings of the ten families. The smallest slack is $2\cdot10^{-5}$, far above the rounding errors. The second part follows from \cref{thm:largeeps}.
\end{proof}

\begin{table}[htb]
\centering\small
\begin{tabular}{@{}rrlrll@{}}
\toprule
$\eps_0$ & $c(\eps_0)$ & & pieces & outer radii of the rings ($\times1/240$) & pieces per ring\\
\midrule
$1/10$ & $45$ & $H$ & $5$ & $80, 240$ & $1, 4$ \\
 & & $\ball(0,1)$ & $9$ & $80, 240$ & $1, 8$ \\ \addlinespace
$1/8$ & $60$ & $H$ & $6$ & $84, 240$ & $1, 5$ \\
 & & $\ball(0,1)$ & $10$ & $92, 240$ & $1, 9$ \\ \addlinespace
$3/16$ & $128$ & $H$ & $8$ & $118, 240$ & $2, 6$ \\
 & & $\ball(0,1)$ & $16$ & $118, 240$ & $4, 12$ \\ \addlinespace
$1/4$ & $264$ & $H$ & $12$ & $138, 240$ & $3, 9$ \\
 & & $\ball(0,1)$ & $22$ & $32, 158, 240$ & $1, 7, 14$ \\ \addlinespace
$3/8$ & $578$ & $H$ & $17$ & $108, 180, 240$ & $3, 6, 8$ \\
 & & $\ball(0,1)$ & $34$ & $108, 180, 240$ & $6, 12, 16$ \\ \addlinespace
$1/2$ & $1431$ & $H$ & $27$ & $83, 142, 184, 240$ & $3, 6, 9, 9$ \\
 & & $\ball(0,1)$ & $53$ & $47, 119, 175, 240$ & $1, 9, 21, 22$ \\ \addlinespace
$5/8$ & $2850$ & $H$ & $38$ & $54, 100, 146, 190, 240$ & $2, 5, 10, 11, 10$ \\
 & & $\ball(0,1)$ & $75$ & $40, 90, 136, 180, 240$ & $1, 9, 16, 24, 25$ \\ \addlinespace
$3/4$ & $5151$ & $H$ & $51$ & $35, 84, 120, 149, 189, 240$ & $1, 5, 10, 11, 12, 12$ \\
 & & $\ball(0,1)$ & $101$ & $35, 84, 120, 149, 189, 240$ & $1, 10, 20, 22, 24, 24$ \\ \addlinespace
$7/8$ & $9112$ & $H$ & $68$ & $26, 64, 100, 122, 154, 194, 240$ & $1, 4, 10, 12, 14, 14, 13$ \\
 & & $\ball(0,1)$ & $134$ & $26, 64, 96, 122, 154, 190, 240$ & $1, 8, 16, 27, 28, 25, 29$ \\ \addlinespace
$1$ & $14365$ & $H$ & $85$ & $21, 54, 77, 103, 127, 157, 192, 240$ & $1, 4, 7, 13, 15, 15, 14, 16$ \\
 & & $\ball(0,1)$ & $169$ & $11, 48, 75, 102, 126, 156, 192, 240$ & $1, 7, 14, 26, 30, 30, 29, 32$ \\
\bottomrule
\end{tabular}
\caption{Ring families that are $\eps_0$-admissible, and their sizes $c(\eps_0)$. For each $\eps_0$, the two rows give the rings of the half-disk $H$ and of the disk $\ball(0,1)$: the outer radius of each ring, in units of $1/240$, and its number of pieces. For $\eps\le1/16$, the pieces of \cref{sec:sweep} give $c=40$.}
\label{tab:families}
\end{table}

\begin{figure}[htb]
\centering

\begin{tikzpicture}[font=\small]
\fill[black!7] (0,-0.92) arc[start angle=-90, end angle=90, radius=0.92] -- cycle;
\fill[black!7] (2.09,0) circle[radius=0.92];
\draw (0,-0.307) arc[start angle=-90, end angle=90, radius=0.307];
\draw (0.217,-0.217) -- (0.651,-0.651);
\draw (0.307,0) -- (0.92,0);
\draw (0.217,0.217) -- (0.651,0.651);
\draw (2.09,0) circle[radius=0.307];
\draw (2.09,0.307) -- (2.09,0.92);
\draw (1.873,0.217) -- (1.439,0.651);
\draw (1.783,0) -- (1.17,0);
\draw (1.873,-0.217) -- (1.439,-0.651);
\draw (2.09,-0.307) -- (2.09,-0.92);
\draw (2.307,-0.217) -- (2.741,-0.651);
\draw (2.397,-0) -- (3.01,-0);
\draw (2.307,0.217) -- (2.741,0.651);
\draw[thick] (0,-0.92) arc[start angle=-90, end angle=90, radius=0.92] -- cycle;
\draw[thick] (2.09,0) circle[radius=0.92];
\node[anchor=north, align=center] at (1.505,-1.02) {$\eps_0=1/10$\\ $5\cdot9=45$};
\fill[black!7] (3.46,-0.92) arc[start angle=-90, end angle=90, radius=0.92] -- cycle;
\fill[black!7] (5.55,0) circle[radius=0.92];
\draw (3.46,-0.322) arc[start angle=-90, end angle=90, radius=0.322];
\draw (3.649,-0.261) -- (4.001,-0.744);
\draw (3.766,-0.1) -- (4.335,-0.284);
\draw (3.766,0.1) -- (4.335,0.284);
\draw (3.649,0.261) -- (4.001,0.744);
\draw (5.55,0) circle[radius=0.353];
\draw (5.55,0.353) -- (5.55,0.92);
\draw (5.323,0.27) -- (4.959,0.705);
\draw (5.203,0.061) -- (4.644,0.16);
\draw (5.245,-0.176) -- (4.753,-0.46);
\draw (5.429,-0.331) -- (5.235,-0.865);
\draw (5.671,-0.331) -- (5.865,-0.865);
\draw (5.855,-0.176) -- (6.347,-0.46);
\draw (5.897,0.061) -- (6.456,0.16);
\draw (5.777,0.27) -- (6.141,0.705);
\draw[thick] (3.46,-0.92) arc[start angle=-90, end angle=90, radius=0.92] -- cycle;
\draw[thick] (5.55,0) circle[radius=0.92];
\node[anchor=north, align=center] at (4.965,-1.02) {$\eps_0=1/8$\\ $6\cdot10=60$};
\fill[black!7] (6.92,-0.92) arc[start angle=-90, end angle=90, radius=0.92] -- cycle;
\fill[black!7] (9.01,0) circle[radius=0.92];
\draw (6.92,-0.452) arc[start angle=-90, end angle=90, radius=0.452];
\draw (6.92,0) -- (7.372,0);
\draw (7.146,-0.392) -- (7.38,-0.797);
\draw (7.312,-0.226) -- (7.717,-0.46);
\draw (7.372,0) -- (7.84,0);
\draw (7.312,0.226) -- (7.717,0.46);
\draw (7.146,0.392) -- (7.38,0.797);
\draw (9.01,0) circle[radius=0.452];
\draw (9.01,0) -- (9.01,0.452);
\draw (9.01,0) -- (8.558,0);
\draw (9.01,0) -- (9.01,-0.452);
\draw (9.01,0) -- (9.462,-0);
\draw (9.01,0.452) -- (9.01,0.92);
\draw (8.784,0.392) -- (8.55,0.797);
\draw (8.618,0.226) -- (8.213,0.46);
\draw (8.558,0) -- (8.09,0);
\draw (8.618,-0.226) -- (8.213,-0.46);
\draw (8.784,-0.392) -- (8.55,-0.797);
\draw (9.01,-0.452) -- (9.01,-0.92);
\draw (9.236,-0.392) -- (9.47,-0.797);
\draw (9.402,-0.226) -- (9.807,-0.46);
\draw (9.462,-0) -- (9.93,-0);
\draw (9.402,0.226) -- (9.807,0.46);
\draw (9.236,0.392) -- (9.47,0.797);
\draw[thick] (6.92,-0.92) arc[start angle=-90, end angle=90, radius=0.92] -- cycle;
\draw[thick] (9.01,0) circle[radius=0.92];
\node[anchor=north, align=center] at (8.425,-1.02) {$\eps_0=3/16$\\ $8\cdot16=128$};
\fill[black!7] (10.38,-0.92) arc[start angle=-90, end angle=90, radius=0.92] -- cycle;
\fill[black!7] (12.47,0) circle[radius=0.92];
\draw (10.38,-0.529) arc[start angle=-90, end angle=90, radius=0.529];
\draw (10.38,0) -- (10.838,-0.264);
\draw (10.38,0) -- (10.838,0.264);
\draw (10.561,-0.497) -- (10.695,-0.865);
\draw (10.72,-0.405) -- (10.971,-0.705);
\draw (10.838,-0.264) -- (11.177,-0.46);
\draw (10.901,-0.092) -- (11.286,-0.16);
\draw (10.901,0.092) -- (11.286,0.16);
\draw (10.838,0.264) -- (11.177,0.46);
\draw (10.72,0.405) -- (10.971,0.705);
\draw (10.561,0.497) -- (10.695,0.865);
\draw (12.47,0) circle[radius=0.123];
\draw (12.47,0) circle[radius=0.606];
\draw (12.47,0.123) -- (12.47,0.606);
\draw (12.374,0.076) -- (11.996,0.378);
\draw (12.35,-0.027) -- (11.88,-0.135);
\draw (12.417,-0.111) -- (12.207,-0.546);
\draw (12.523,-0.111) -- (12.733,-0.546);
\draw (12.59,-0.027) -- (13.06,-0.135);
\draw (12.566,0.076) -- (12.944,0.378);
\draw (12.47,0.606) -- (12.47,0.92);
\draw (12.207,0.546) -- (12.071,0.829);
\draw (11.996,0.378) -- (11.751,0.574);
\draw (11.88,0.135) -- (11.573,0.205);
\draw (11.88,-0.135) -- (11.573,-0.205);
\draw (11.996,-0.378) -- (11.751,-0.574);
\draw (12.207,-0.546) -- (12.071,-0.829);
\draw (12.47,-0.606) -- (12.47,-0.92);
\draw (12.733,-0.546) -- (12.869,-0.829);
\draw (12.944,-0.378) -- (13.189,-0.574);
\draw (13.06,-0.135) -- (13.367,-0.205);
\draw (13.06,0.135) -- (13.367,0.205);
\draw (12.944,0.378) -- (13.189,0.574);
\draw (12.733,0.546) -- (12.869,0.829);
\draw[thick] (10.38,-0.92) arc[start angle=-90, end angle=90, radius=0.92] -- cycle;
\draw[thick] (12.47,0) circle[radius=0.92];
\node[anchor=north, align=center] at (11.885,-1.02) {$\eps_0=1/4$\\ $12\cdot22=264$};
\end{tikzpicture}
\caption{The ring families of \cref{tab:families} for $\eps_0=1/10$, $1/8$, $3/16$, and $1/4$: the pieces of the half-disk $H$ (left) and of the disk $\ball(0,1)$ (right), and the size $|\mathcal L^+|\cdot|\mathcal L^\circ|$ of the family.}
\label{fig:families}
\end{figure}

\begin{corollary}\label{cor:everyeps}
Let $\eps>0$ and $\beta=\max\{1-2\eps,(1+2\eps)^{-2}\}$. Some $\eps$-admissible piece family has size at most $4\pi^2(1+1/\beta)^4$, which is less than $4\cdot10^5\max\{1,\eps^8\}$. Hence the sweep and the greedy sweep with this family output an $\eps$-WSPD of size at most $4\cdot10^5\max\{1,\eps^8\}\,\opt_\eps$.
\end{corollary}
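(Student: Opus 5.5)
We use a piece family in which every piece has $a_\Pi=0$ and $b_\Pi=1$, so that the term $M_\eps$ in~\eqref{eq:adm} is controlled only by the factor $1/(1+2\eps)$. With $a=0$ and $b=1$, the minimum defining $M_\eps(0,1;0,1)$ can be taken at $r=s$ near $0$ only when $1-\eps(r+s)$ is also small. The safe lower bound is $M_\eps\ge\max\{1-2\eps,\,1/(1+2\eps)\}$. For $1/(1+2\eps)$, use that each of the four terms is at least $1/(1+2\eps)$. For $1-2\eps$, note that if $r,s\le1$, then $\max\{1-\eps(r+s),\dots\}\ge1-2\eps$.

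Suppose every piece has diameter at most $\gamma$ (after scaling by $\eps D$, a diameter $\gamma\eps D$ in the construction). Then \eqref{eq:adm} holds if $\gamma\le\max\{1-2\eps,\ 1/(1+2\eps)-2\eps\gamma\}$. The second alternative is equivalent to $\gamma(1+2\eps)\le1/(1+2\eps)$, that is, $\gamma\le(1+2\eps)^{-2}$. So it suffices to take $\gamma=\beta=\max\{1-2\eps,(1+2\eps)^{-2}\}$.

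The plan is then to partition $\ball(0,1)$ and $H$ into pieces of diameter at most $\beta$ and to count them.

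\textbf{Counting.} Take a maximal $\beta/2$-separated set $N$ in the disk and use the nearest-point (Voronoi) assignment, as for the net pieces of \cref{sec:higher}. Every point of the disk is within $\beta/2$ of $N$, so every piece has diameter at most $\beta$. The disks of radius $\beta/4$ around $N$ are disjoint and lie in $\ball(0,1+\beta/4)$. Comparing areas gives
\[
|N|\le\bigl((1+\beta/4)/(\beta/4)\bigr)^2=(1+4/\beta)^2,
\]
which is too large by constants. To hit exactly $4\pi^2(1+1/\beta)^4$, we instead want $|\mathcal L^\circ|,|\mathcal L^+|\le2\pi(1+1/\beta)^2$.

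A cleaner route to that bound is to cover by a net of radius $\beta/2$ directly: take $N$ maximal with pairwise distances greater than $\beta/2$. Assigning each point of the disk to its nearest net point gives cells within distance $\beta/2$ of their centers, hence of diameter at most $\beta$. The disks of radius $\beta/4$ around $N$ are disjoint, and their number is at most $(1+4/\beta)^2$. I will instead try a tighter packing argument, with separation $\beta$ and covering radius $\beta$, using cells of diameter at most $\beta$ from a hexagonal or polar-grid-type construction. Nailing the exact constant $2\pi(1+1/\beta)^2$ is the main obstacle. One natural reading is $\pi(1+1/\beta)^2\cdot 2$, the area of a disk of radius $1+\beta$ divided by the area $\pi\beta^2/2$... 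I would first try a ring family: rings of width about $\beta/\sqrt2$ cut into sectors of outer arc at most $\beta/\sqrt2$, bounding the total count by the area argument. For the half-disk, use the same pieces intersected with $H$, so $|\mathcal L^+|\le|\mathcal L^\circ|$. Intersecting with $H$ only decreases diameters and keeps the radial bounds $a=0$, $b=1$, so admissibility is preserved.

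\textbf{Numerics.} With both counts at most $2\pi(1+1/\beta)^2$, the size is at most $4\pi^2(1+1/\beta)^4$. If $\eps\le1$, then $\beta\ge1/9$, so $(1+1/\beta)^4\le10^4$ and the size is at most $4\pi^2\cdot10^4<4\cdot10^5$. If $\eps>1$, then $1/\beta\le(1+2\eps)^2\le9\eps^2$, so $(1+1/\beta)^4\le(10\eps^2)^4=10^4\eps^8$ and the size is again below $4\cdot10^5\eps^8$.

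The WSPD bound then follows directly from \cref{thm:largeeps}.
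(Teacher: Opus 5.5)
\paragraph{Verdict.} Your admissibility argument and your numerics are correct and match the paper's. With $a_\Pi=0$, $b_\Pi=1$ and $\beta_\Pi=\beta$, the first term of~\eqref{eq:adm} handles $\beta=1-2\eps$, and $M_\eps\ge1/(1+2\eps)$ handles $\beta=(1+2\eps)^{-2}$. The gap is the counting step: you never prove $|\mathcal L^\circ|,|\mathcal L^+|\le2\pi(1+1/\beta)^2$, you only assume it (``with both counts at most $2\pi(1+1/\beta)^2$'').

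\paragraph{Why your counts fall short.}
\begin{itemize}
\item The only count you establish is the net bound $(1+4/\beta)^2$ per disk. It is not just loose in the constant. At $\eps=1$, where $\beta=1/9$, the product is $37^4\approx1.9\cdot10^6$, so it does not even give the final claim $4\cdot10^5\max\{1,\eps^8\}$.
\item The other routes remain intentions. A Voronoi partition with covering radius $\beta$ gives cells of diameter up to $2\beta$, not $\beta$.
\item In your ring family, the innermost sectors have area only about $\beta^2/4$. An area comparison with a uniform lower bound per piece therefore does not give the constant. You would have to sum the ceilings $\lceil2\sqrt2\pi r_j/\beta\rceil$ ring by ring and control losses of order $1/\beta$, which you do not do.
\end{itemize}

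\paragraph{The missing idea.} The area comparison needs a tiling of the plane by disjoint cells of diameter at most $\beta$ and area at least $\beta^2/2$. Your ``natural reading'' had the right form but the wrong denominator: $\pi\beta^2/2$ instead of $\beta^2/2$, which is the area of a square of diameter $\beta$.

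The paper takes as pieces the nonempty intersections of $\ball(0,1)$ and of $H$ with the half-open squares of an axis-parallel grid of side $\beta/\sqrt2$, with $a_\Pi=0$, $b_\Pi=1$ and $\beta_\Pi=\beta$.
\begin{itemize}
\item These pieces partition the disk and the half-disk, and each has diameter at most $\beta$.
\item Every square that meets $\ball(0,1)$ lies in $\ball(0,1+\beta)$, and the squares are disjoint.
\item Comparing areas, at most $\pi(1+\beta)^2/(\beta^2/2)=2\pi(1+1/\beta)^2$ squares meet $\ball(0,1)$, and at most as many meet $H$.
\end{itemize}
This gives size at most $4\pi^2(1+1/\beta)^4$, and the rest of your proof then goes through unchanged. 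A regular hexagonal tiling with cells of diameter $\beta$, whose area is $\frac{3\sqrt3}{8}\beta^2>\beta^2/2$, would also work.
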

\begin{proof}
Let the pieces be the nonempty intersections of $H$, and of $\ball(0,1)$, with the half-open squares of a grid of side $\beta/\sqrt2$, with $a_\Pi=0$, $b_\Pi=1$, and $\beta_\Pi=\beta$ for every piece~$\Pi$. If $\beta=1-2\eps$, then the first term of~\eqref{eq:adm} equals $\beta$. Otherwise, $\beta=(1+2\eps)^{-2}$, and since $M_\eps\ge1/(1+2\eps)$, the second term is at least $1/(1+2\eps)-2\eps/(1+2\eps)^2=(1+2\eps)^{-2}=\beta$. Hence the family is $\eps$-admissible. A square that meets $\ball(0,1)$ lies in $\ball(0,1+\beta)$, since its diameter is $\beta$. Comparing areas, at most $\pi(1+\beta)^2/(\beta^2/2)=2\pi(1+1/\beta)^2$ squares meet $\ball(0,1)$, and hence $H$, so the family has size at most $4\pi^2(1+1/\beta)^4$. If $\eps\le1$, then $\beta\ge1/9$ and $4\pi^2(1+1/\beta)^4\le4\pi^2\cdot10^4<4\cdot10^5$. If $\eps\ge1$, then $1+1/\beta=1+(1+2\eps)^2\le10\eps^2$, and the size is less than $4\cdot10^5\eps^8$.
\end{proof}

\section{Quadtree and Fair Split Tree Partition WSPD Constructions}
\label{sec:partition}

We first prove \cref{thm:partition}, on a quadtree construction, and then bound the size of the CK construction on a fair split tree by the same charging (\cref{thm:fairsplit}).

The running times of this section are for dumbbells given as pairs of nodes of a tree, as in~\cite{CK95}. Listing the points of every dumbbell may take $\Omega(n^2)$ time even if $\opt_\eps=n-1$. Consider the points $p_i=(i,(4/\eps)^i)$ of \cref{sec:intro}, and let $p_j$ be the point of largest index in an $\eps$-separated dumbbell. If the side of $p_j$ contained another point $p_l$, its diameter would be at least $(1-\eps/4)(4/\eps)^j$, while its distance to a point $p_i$ of the other side is at most $\|p_ip_l\|\le2(4/\eps)^{j-1}=(\eps/2)(4/\eps)^j$, and $1-\eps/4>\eps^2/2$. Hence this side is $\{p_j\}$, a dumbbell with $m$ points covers $m-1$ pairs, and the dumbbells of every $\eps$-WSPD of these points have at least $\binom n2$ points in total.

The quadtree construction is a variant of the WSPD construction on a compressed quadtree~\cite[Chapter~3]{Har11}. That construction replaces the larger of two cells that are not separated by its children, and tests separation on the cells. Ours replaces both cells, so it outputs pairs of cells of the same level. We analyze it with separation tested on the cells and on the points. For the net-tree construction in doubling metrics, a bound of $O(\opt_\eps)$ is proved in~\cite[Section~4]{BCHKLPRW26} by a similar charging to an optimal WSPD. We make the constant explicit in the plane. The randomized algorithm uses a randomly shifted quadtree~\cite{Aro98}, \cite[Chapter~11]{Har11}, together with a random scale common to all levels, as in the tree embeddings of~\cite{FRT04}.

Since translations preserve separation, we assume that $P\subset[0,L)^2$ for some $L>0$. The construction uses a square $R_0=o+[0,S)^2$ that contains $P$, with $o\in\R^2$ and $S\ge L$. The deterministic algorithm uses $o=(0,0)$ and $S=L$. The randomized algorithm draws $U$ uniformly from $[0,1)$, sets $S=2^{1+U}L$, and draws $o$ uniformly from $(-S/2,0]^2$. Then $R_0$ contains $[0,L)^2$, because $o+S>S/2\ge L$ in each coordinate. For an integer $i\ge0$, let $s_i=2^{-i}S$. The \emph{cells} of level $i$ are the $4^i$ squares $o+\bigl([as_i,(a+1)s_i)\times[bs_i,(b+1)s_i)\bigr)$ with integers $0\le a,b<2^i$. Every point of $R_0$ lies in exactly one cell of each level, and every cell of level $i$ lies in exactly one cell of each level $j\le i$, its \emph{ancestor} of level $j$. For a cell $c$, let $P_c=P\cap c$. For two distinct cells $c$ and $c'$ of the same level such that $P_c$ and $P_{c'}$ are nonempty, the \emph{box test} accepts if the diameter of the cells is at most $\eps$ times the distance between them, and the \emph{exact test} accepts if $\{P_c,P_{c'}\}$ is $\eps$-separated. Fix one of the two tests. A \emph{cell dumbbell} is a set $\{c,c'\}$ of two such cells that the test accepts. Then $\{P_c,P_{c'}\}$ is $\eps$-separated, since $P_c\subseteq c$ and $P_{c'}\subseteq c'$. A cell dumbbell covers the pairs covered by $\{P_c,P_{c'}\}$. A cell dumbbell $\{c,c'\}$ of level $i$ is \emph{maximal} if, for every $j<i$, the ancestors of $c$ and $c'$ of level $j$ do not form a cell dumbbell. Let $\W_\square$ be the collection of the dumbbells $\{P_c,P_{c'}\}$ over all maximal cell dumbbells $\{c,c'\}$. We show that $\W_\square$ satisfies the size bounds of \cref{thm:partition} with both tests, and the running time with the box test.

\begin{figure}[htb]
\centering

\begin{tikzpicture}[font=\small]
\fill[black!13] (0,0) rectangle (0.42,0.42);
\fill[black!13] (0,0.42) rectangle (0.42,0.84);
\fill[black!13] (0,0.84) rectangle (0.42,1.26);
\fill[black!13] (0.42,0) rectangle (0.84,0.42);
\fill[black!13] (0.42,0.42) rectangle (0.84,0.84);
\fill[black!13] (0.42,0.84) rectangle (0.84,1.26);
\fill[black!13] (0.42,1.26) rectangle (0.84,1.68);
\fill[black!13] (0.84,0) rectangle (1.26,0.42);
\fill[wspdblue!45] (0.84,0.84) rectangle (1.26,1.26);
\fill[black!13] (0.84,1.26) rectangle (1.26,1.68);
\fill[black!13] (0.84,1.68) rectangle (1.26,2.1);
\fill[black!13] (1.26,0) rectangle (1.68,0.42);
\fill[black!13] (1.26,0.42) rectangle (1.68,0.84);
\fill[black!13] (1.26,0.84) rectangle (1.68,1.26);
\fill[black!13] (1.26,1.26) rectangle (1.68,1.68);
\fill[black!13] (1.68,0.42) rectangle (2.1,0.84);
\fill[black!13] (1.68,0.84) rectangle (2.1,1.26);
\fill[black!13] (3.78,0.42) rectangle (4.2,0.84);
\fill[black!13] (4.2,0) rectangle (4.62,0.42);
\fill[black!13] (4.2,1.26) rectangle (4.62,1.68);
\fill[black!13] (4.62,0.42) rectangle (5.04,0.84);
\fill[wspdblue!45] (4.62,0.84) rectangle (5.04,1.26);
\draw[black!30, step=0.42] (-0.42,-0.42) grid (2.52,2.52);
\draw[black!30, step=0.42] (3.36,-0.42) grid (5.46,2.1);
\draw[thick] (0,0) rectangle (2.1,2.1);
\draw[wspdorange, thick] (1.764,1.764) -- (2.016,2.016) (1.764,2.016) -- (2.016,1.764);
\draw[wspdorange, densely dotted, thick] (0.42,0.42) -- (1.68,1.68);
\node[wspdorange, anchor=south, inner sep=1pt, font=\scriptsize] at (1.89,2.52) {$>3\sqrt2\,s$ apart};
\draw[wspdblue, thick, shorten <=4pt, shorten >=4pt] (1.05,1.05) -- (4.83,1.05);
\fill[white] (2.82,0.85) rectangle (3.06,1.25);
\draw[wspdblue] (2.78,0.89) -- (2.92,1.25) (2.96,0.85) -- (3.1,1.21);
\fill (0.336,0.315) circle[radius=1.2pt];
\fill (1.722,0.924) circle[radius=1.2pt];
\fill (0.924,1.722) circle[radius=1.2pt];
\fill (0.63,0.672) circle[radius=1.2pt];
\fill (1.092,1.008) circle[radius=1.2pt];
\fill (1.386,1.344) circle[radius=1.2pt];
\fill (0.588,1.302) circle[radius=1.2pt];
\fill (1.344,0.546) circle[radius=1.2pt];
\fill (1.05,0.252) circle[radius=1.2pt];
\fill (0.252,0.966) circle[radius=1.2pt];
\fill (0.168,0.588) circle[radius=1.2pt];
\fill (0.714,0.126) circle[radius=1.2pt];
\fill (1.764,0.63) circle[radius=1.2pt];
\fill (1.47,0.294) circle[radius=1.2pt];
\fill (0.966,1.386) circle[radius=1.2pt];
\fill (0.588,0.924) circle[radius=1.2pt];
\fill (1.512,1.092) circle[radius=1.2pt];
\fill (4.074,0.714) circle[radius=1.2pt];
\fill (4.452,0.378) circle[radius=1.2pt];
\fill (4.746,0.966) circle[radius=1.2pt];
\fill (4.284,1.26) circle[radius=1.2pt];
\fill (4.956,0.588) circle[radius=1.2pt];
\node[anchor=north] at (1.05,-0.441) {$X$: $K_X=17$};
\node[anchor=north] at (4.41,-0.441) {$Y$: $K_Y=5$};
\node[anchor=north] at (2.73,-1.45) {(a)};
\fill[black!18] (7.517,0.372) rectangle (8.137,0.992);
\fill[black!18] (7.517,0.992) rectangle (8.137,1.612);
\fill[black!18] (8.137,-0.248) rectangle (8.757,0.372);
\fill[black!18] (8.137,0.992) rectangle (8.757,1.612);
\fill[black!18] (8.757,0.372) rectangle (9.377,0.992);
\fill[black!18] (8.757,1.612) rectangle (9.377,2.232);
\draw[black!30] (6.277,-0.992) -- (6.277,2.356);
\draw[black!30] (6.153,-0.868) -- (9.501,-0.868);
\draw[black!30] (6.897,-0.992) -- (6.897,2.356);
\draw[black!30] (6.153,-0.248) -- (9.501,-0.248);
\draw[black!30] (7.517,-0.992) -- (7.517,2.356);
\draw[black!30] (6.153,0.372) -- (9.501,0.372);
\draw[black!30] (8.137,-0.992) -- (8.137,2.356);
\draw[black!30] (6.153,0.992) -- (9.501,0.992);
\draw[black!30] (8.757,-0.992) -- (8.757,2.356);
\draw[black!30] (6.153,1.612) -- (9.501,1.612);
\draw[black!30] (9.377,-0.992) -- (9.377,2.356);
\draw[black!30] (6.153,2.232) -- (9.501,2.232);
\draw[wspdblue, thick, fill=wspdblue, fill opacity=0.12] (6.99,-0.062) -- (7.982,-0.372) -- (8.602,-0.372) -- (9.16,0.31) -- (9.16,0.93) -- (8.788,1.798) -- (8.168,1.798) -- (7.238,1.55) -- (6.99,0.558) -- cycle;
\draw[thick, fill=wspdblue, fill opacity=0.3] (7.61,0.558) -- (8.602,0.248) -- (9.16,0.93) -- (8.788,1.798) -- (7.858,1.55) -- cycle;
\fill (7.61,0.558) circle[radius=1.2pt];
\fill (8.602,0.248) circle[radius=1.2pt];
\fill (9.16,0.93) circle[radius=1.2pt];
\fill (8.788,1.798) circle[radius=1.2pt];
\fill (7.858,1.55) circle[radius=1.2pt];
\fill (8.292,1.054) circle[radius=1.2pt];
\fill[wspdorange] (7.517,0.372) circle[radius=1.8pt];
\fill[wspdorange] (7.517,0.992) circle[radius=1.8pt];
\fill[wspdorange] (8.137,-0.248) circle[radius=1.8pt];
\fill[wspdorange] (8.137,0.992) circle[radius=1.8pt];
\fill[wspdorange] (8.757,0.372) circle[radius=1.8pt];
\fill[wspdorange] (8.757,1.612) circle[radius=1.8pt];
\draw[wspdorange, fill=white] (7.517,1.612) circle[radius=1.8pt];
\draw[wspdorange, fill=white] (8.137,0.372) circle[radius=1.8pt];
\draw[wspdorange, fill=white] (8.137,1.612) circle[radius=1.8pt];
\draw[wspdorange, fill=white] (8.757,0.992) circle[radius=1.8pt];
\draw[<->, black!70] (7.61,-1.085) -- node[below, font=\scriptsize] {$w_1$} (9.16,-1.085);
\draw[<->, black!70] (9.532,0.248) -- node[right, font=\scriptsize] {$w_2$} (9.532,1.798);
\draw[black!40, densely dotted] (7.61,-1.085) -- (7.61,0.558);
\draw[black!40, densely dotted] (9.16,-1.085) -- (9.16,0.93);
\draw[black!40, densely dotted] (9.532,0.248) -- (8.602,0.248);
\draw[black!40, densely dotted] (9.532,1.798) -- (8.788,1.798);
\node at (8.385,0.899) {$C$};
\node[wspdblue, font=\scriptsize, anchor=north east] at (7.424,-0.279) {$C+[-s,0]^2$};
\node[anchor=north] at (8.292,-1.45) {(b)};
\end{tikzpicture}
\caption{(a) The cells of level $i$ met by the sides $X$ and $Y$ of an optimal dumbbell are shaded. Every dumbbell of $\W_\square$ charged to $\{X,Y\}$ is determined by a pair of shaded cells, such as the blue pair (the distance is not to scale). Here $\eps\le1/5$ and $m=5$: $X$ meets cells of a $5\times5$ block, but not two opposite corners of it. (b) The lower-left corners of the cells that meet $X$ (filled dots) lie in $C+[-s,0]^2$, where $C$ is the convex hull of $X$. Other points of the grid may lie in it too (hollow dots).}
\label{fig:partition}
\end{figure}

\begin{theorem}\label{thm:partition}
For $P\subset\R^2$ and $\eps\le1/5$, a partition $\eps$-WSPD of size at most $529\,\opt_\eps$ can be computed in $O(n\log n+\opt_\eps)$ time, and a randomized algorithm with the same running time computes one of expected size at most $252\,\opt_\eps$, where a dumbbell is given as a pair of nodes of a compressed quadtree. For $P\subset\R^d$, one of size at most $(\lfloor2\sqrt d(1+2\eps)\rfloor+2)^{2d}\opt_\eps$ can be computed in polynomial time.
\end{theorem}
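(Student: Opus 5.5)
Three things have to be shown for $\W_\square$: it is a partition WSPD, it is small, and it can be computed quickly. I would handle them in that order, with a charging argument for the size bound in the spirit of \cref{thm:sweep}.

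\emph{Partition.} Take a pair $\{x,y\}$ and let $c_i(x),c_i(y)$ be the cells of level $i$ containing $x$ and $y$. For large $i$ these cells have tiny diameter while their distance tends to $\|xy\|$, so the box test, and hence the exact test, accepts. Let $i$ be the smallest level at which the test accepts. Then $\{c_i(x),c_i(y)\}$ is the unique maximal cell dumbbell covering $\{x,y\}$. At deeper levels the ancestors of level $i$ already form a cell dumbbell, so nothing deeper is maximal. Every dumbbell of $\W_\square$ is $\eps$-separated, as noted before the theorem. Hence every pair is covered exactly once.

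\emph{Charging.} Fix a minimum $\eps$-WSPD $\W^*$, and charge each maximal cell dumbbell $\{c,c'\}$ of level $i$ to some $Z=\{X,Y\}\in\W^*$ covering a pair $\{x,y\}$ with $x\in c$ and $y\in c'$. I must bound how many $\W_\square$ dumbbells a fixed $Z$ receives, and I would do it in two steps.

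First, I would show that only $O(1)$ levels $i$ are possible, namely those with $s_i$ comparable to $\eps\ell$, where $\ell=\dist(X,Y)$. The upper bound on $s_i$ comes from the parents at level $i-1$ failing the test. Their distance is at least $\ell-2\sqrt2\,s_{i-1}$, so failure forces $\sqrt2\,s_{i-1}>\eps(\ell-2\sqrt2 s_{i-1})$, which gives $s_i\lesssim\eps\ell/(\sqrt2(1+2\eps))$. The lower bound on $s_i$ should come from the level-$i$ pair still being accepted, combined with $\diam X\le\eps\ell$. I expect this lower bound to be the delicate part. Maximality does not by itself rule out deep levels, because a pair of tiny cells is always accepted. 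I would instead argue that the charged dumbbell's level is determined by the pair $\{x,y\}$, and that this level lies in a window of width $O(1)$ in $\log s_i$ around $\log(\eps\ell)$.

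Second, at a fixed level, every charged cell meets $X$ or $Y$. Each side has diameter at most $\eps\ell$, so it meets at most $m^2$ cells, with $m=\lfloor \eps\ell/s_i\rfloor+2$. The number of charged cell dumbbells at that level is therefore at most $K_XK_Y\le m^4$. For $d$ dimensions the same count gives $m^{2d}$, and with the level window this should match $(\lfloor2\sqrt d(1+2\eps)\rfloor+2)^{2d}$ once the relation between $s_i$ and $\eps\ell$ is pinned down.

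For the planar constants I would sharpen the per-level count as in \cref{fig:partition}. The lower-left corners of the cells meeting $X$ lie in $C+[-s,0]^2$, where $C$ is the convex hull of $X$, so $K_X$ is at most the number of grid points in that Minkowski sum. For a set of width $w_1$ and height $w_2$ with diameter at most $\eps\ell$, this count is at most $m^2-2$ when $\eps\le1/5$, since two opposite corners of the $m\times m$ block are missed. This should give $23^2=529$, using $K_X,K_Y\le23$ for $m=5$.

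\emph{Randomized bound.} The random scale $U$ makes $\log_2(\eps\ell/s_i)$ uniformly distributed modulo $1$, and the random shift makes the corner count equal the area of the Minkowski sum in expectation. That expected area is $(\mathrm{area}(C)+\mathrm{per}(C)s+s^2)/s^2$, with perimeter at most $\pi\eps\ell$ and area at most $\pi(\eps\ell)^2/4$. I would integrate the resulting bound on $\mathbb E[K_XK_Y]$ over $U$, using that $K_X$ and $K_Y$ depend on the same shift, and I expect this to yield $252$.

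\emph{Running time.} I would build a compressed quadtree in $O(n\log n)$ time and run the simultaneous descent of the two cells with the box test. Each failed test is charged to one of the $O(1)$ children pairs it creates that ends up output or empty. Long compressed paths need care: when both cells are singly chained, the descent should jump directly to the level where they first split or become separated. Together this gives $O(n\log n+|\W_\square|)=O(n\log n+\opt_\eps)$.
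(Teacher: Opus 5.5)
Your partition argument and running-time sketch match the paper. The size bound, however, has a genuine gap, exactly where you expected difficulty.

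\textbf{The level inequalities are reversed.} Suppose the level-$(i-1)$ ancestors of a maximal cell dumbbell are not accepted (failing the exact test implies failing the box test). Then $\sqrt2\,s_{i-1}>\eps(\ell-2\sqrt2\,s_{i-1})$, which is the \emph{lower} bound $s_i>\delta/(2\sqrt2)$ with $\delta=\eps\ell/(1+2\eps)$. So maximality does rule out deep levels.

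What cannot be ruled out is coarse levels. Acceptance bounds $s_i$ from above only for the box test. The exact test can accept large cells containing few points at arbitrarily coarse levels, so there is no window of bounded width. Even for the box test, summing the per-level counts $K_XK_Y$ over the window (two levels for $\eps\le1/5$) gives about $529+81$. In $\R^d$ it likewise gives more than $(\lfloor2\sqrt d(1+2\eps)\rfloor+2)^{2d}$. So the stated constants do not follow from your plan.

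\textbf{The missing idea is that one level suffices, and no upper bound on the cell size is needed.} Fix $Z=\{X,Y\}$ and the level $i$ with $\delta/(2\sqrt2)<s_i\le\delta/\sqrt2$. For all $x\in X$ and $y\in Y$, the level-$i$ cells $c\ni x$ and $c'\ni y$ have diameter at most $\delta$ and distance at least $\ell-2\delta=\delta/\eps$, so both tests accept them. By the uniqueness in your partition argument, the maximal cell dumbbell covering $\{x,y\}$ consists of ancestors of $c$ and $c'$ of some level at most $i$. It is therefore determined by $\{c,c'\}$, and at most $K_XK_Y$ dumbbells, counted at this single level, are charged to $Z$. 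With $\diam(X)\le(1+2\eps)\delta<3.96\,s_i$, your corner count gives $23^2$, and the per-axis count gives the $d$-dimensional bound.

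\textbf{For the constant $252$, two adjustments.} First, the area of $C+[-s,0]^2$ is $\mathrm{area}(C)+s(w_1+w_2)+s^2$, with axis widths $w_1,w_2\le\diam(X)$. Your perimeter term $\pi\diam(X)s$ is a valid but loose substitute for $2\diam(X)s$, and it pushes the integral to about $312$. Second, rather than handling the joint dependence of $K_X$ and $K_Y$ on the shift, bound $K_Y$ deterministically for the current scale and take the expectation over the shift only for $K_X$.
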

\begin{proof}
\emph{Partition.} Let $\{x,y\}$ be a pair, and let $c_i$ and $c'_i$ be the cells of level $i$ that contain $x$ and $y$, respectively. A cell dumbbell of level $i$ covers $\{x,y\}$ only if it is $\{c_i,c'_i\}$. For $i$ large enough, $c_i\ne c'_i$ and both tests accept, the exact test because $P_{c_i}=\{x\}$ and $P_{c'_i}=\{y\}$, and the box test because the diameter of the cells tends to $0$ and their distance to $|xy|$. So $\{c_i,c'_i\}$ is a cell dumbbell. Let $i_0$ be the smallest level for which $\{c_{i_0},c'_{i_0}\}$ is a cell dumbbell. It is maximal, and no cell dumbbell $\{c_i,c'_i\}$ with $i>i_0$ is maximal, because its ancestors of level $i_0$ form a cell dumbbell. Hence exactly one maximal cell dumbbell covers $\{x,y\}$. In particular, distinct maximal cell dumbbells cover different pairs and give distinct dumbbells, so $\W_\square$ is a partition $\eps$-WSPD.

\emph{Size.} Fix a minimum $\eps$-WSPD $\W^*$. Charge every dumbbell of $\W_\square$ to a dumbbell of $\W^*$ that covers one of its pairs. Let $Z=\{X,Y\}\in\W^*$, let $\ell=\dist(X,Y)$, and let $\delta=\eps\ell/(1+2\eps)$, so that $\delta=\eps(\ell-2\delta)$. Since $\delta<\ell<\sqrt2L\le\sqrt2S$, there is a level $i\ge1$ whose cells have side $s=s_i$ with $\delta/(2\sqrt2)<s\le\delta/\sqrt2$. Its cells have diameter at most $\delta$. Let $x\in X$ and $y\in Y$, and let $c$ and $c'$ be the cells of level $i$ that contain them. Every point of $c$ is within distance $\delta$ of $x$, and every point of $c'$ is within distance $\delta$ of $y$. Hence $\dist(c,c')\ge\ell-2\delta>0$, so $c\ne c'$, and since $\delta=\eps(\ell-2\delta)$, the box test accepts. Then the exact test accepts as well, so $\{c,c'\}$ is a cell dumbbell. By the first part, the maximal cell dumbbell that covers $\{x,y\}$ is formed by the ancestors of $c$ and $c'$ of some level, so it is determined by $\{c,c'\}$. Therefore, the number of dumbbells charged to $Z$ is at most $K_XK_Y$, where $K_X$ and $K_Y$ are the numbers of cells of level $i$ that meet $X$ and $Y$, respectively (\cref{fig:partition}a).

\emph{Cells met by a small set.} For $D>0$, let $m=\lceil D\rceil+1$, and let $\kappa(D)=m^2-2$ if $D\le(m-2)\sqrt2$ and $\kappa(D)=m^2$ otherwise. We claim that a set $A$ of diameter at most $Ds$ meets at most $\kappa(D)$ cells of level $i$. If two points of $A$ lie in cells whose columns differ by $j\ge1$, their first coordinates differ by more than $(j-1)s$, so $j-1<D$ and $j\le m-1$. Hence $A$ meets at most $m$ columns and, similarly, at most $m$ rows. If it meets fewer than $m$ columns or fewer than $m$ rows, it meets at most $m(m-1)\le m^2-2$ cells. Otherwise, it meets only cells of an $m\times m$ block. Two points in opposite corner cells of this block are at distance more than $(m-2)\sqrt2\,s$, so if $D\le(m-2)\sqrt2$, then $A$ meets at most one cell of each of the two pairs of opposite corners.

\emph{Deterministic bound.} We have $\diam(X)\le\eps\ell=(1+2\eps)\delta<2\sqrt2(1+2\eps)s$, and $2\sqrt2(1+2\eps)<3.96$ for $\eps\le1/5$. For $D=3.96$, we have $m=5$ and $D\le3\sqrt2$, so $K_X\le\kappa(D)=23$. The same holds for $Y$. Hence at most $23^2=529$ dumbbells are charged to $Z$, and $|\W_\square|\le529\,|\W^*|$.

\emph{Randomized bound.} In the randomized algorithm, $s=2^{1+U-i}L$, and the level $i$ depends only on $U$ (for fixed $Z$). Let $V=\log_2\bigl(\delta/(\sqrt2s)\bigr)$. Then $V\in[0,1)$ and $V=\log_2\bigl(\delta/(2\sqrt2L)\bigr)-U+i$, so $V$ is the fractional part of $\log_2\bigl(\delta/(2\sqrt2L)\bigr)-U$, and it is uniformly distributed in $[0,1)$. Let $\gamma=1.4\sqrt2$ and $D_V=\gamma\,2^V$. For $\eps\le1/5$, $\diam(X)\le(1+2\eps)\delta=(1+2\eps)\sqrt2\,2^Vs\le D_Vs$, and the same holds for $Y$, so $K_Y\le\kappa(D_V)$.

Fix $U$. The lower-left corners of the cells of level $i$ lie in the lattice $o+s\mathbb{Z}^2$. Since $i\ge1$, $S/2=2^{i-1}s$ is a multiple of $s$, so each coordinate of $o$ is uniformly distributed modulo $s$, and the lattice is a uniformly random translate of $s\mathbb{Z}^2$. The expected number of its points in a measurable set is the area of the set divided by $s^2$. The cell with lower-left corner $g$ meets $X$ if and only if $g\in X+(-s,0]^2$. This set is contained in $C+[-s,0]^2$, where $C$ is the convex hull of $X$. Let $w_1,w_2\le\diam(X)$ be the widths of $C$ along the two axes. By Cavalieri's principle, adding a horizontal segment of length $s$ to a convex set increases its area by $s$ times its vertical width, and similarly for a vertical segment, so the area of $C+[-s,0]^2$ is $\mathrm{area}(C)+s(w_1+w_2)+s^2$ (\cref{fig:partition}b). By the isodiametric inequality, $\mathrm{area}(C)\le\pi\diam(X)^2/4$. Therefore, $\mathbb{E}[K_X\mid U]\le e(D_V)$, where $e(D)=\pi D^2/4+2D+1$. Since $K_Y\le\kappa(D_V)$, we get $\mathbb{E}[K_XK_Y\mid U]\le\kappa(D_V)\,e(D_V)$. With the substitution $D=\gamma\,2^v$,
\[
  \mathbb{E}[K_XK_Y]\le\int_0^1\kappa(\gamma\,2^v)\,e(\gamma\,2^v)\,dv=\frac1{\ln2}\int_\gamma^{2\gamma}\kappa(D)\Bigl(\frac{\pi D}4+2+\frac1D\Bigr)dD .
\]
The function $\kappa$ equals $9$ on $[\gamma,2]$, $14$ on $(2,2\sqrt2]$, $16$ on $(2\sqrt2,3]$, and $23$ on $(3,2\gamma]$. With the antiderivative $\pi D^2/8+2D+\ln D$, the right side is less than $251.6$. Summing over the dumbbells of $\W^*$ gives $\mathbb{E}|\W_\square|\le252\,|\W^*|$.

\emph{Running time.} We use the compressed quadtree of $P$, which is computed in $O(n\log n)$ time in the real RAM with the floor and logarithm functions~\cite{Har11}. For every node $c$ of the compressed quadtree and every two distinct cells $a,b\subseteq c$ of the next level such that $P_a$ and $P_b$ are nonempty, we call $\textsc{Visit}(a,b)$. Consider a call $\textsc{Visit}(c,c')$. If $P_c$ and $P_{c'}$ are single points, it outputs $\{P_c,P_{c'}\}$. Otherwise, let $j$ be the smallest level greater than that of $c$ at which $P_c$ or $P_{c'}$ meets at least two cells; it is found in the compressed quadtree. From the level of $c$ to level $j-1$, the cells that contain $P_c$ and $P_{c'}$ contain no other point. If the two cells of one of these levels form a cell dumbbell, then so do those of level $j-1$: the exact test gives the same answer at all these levels, and the box test is monotone, because from a level to the next the diameter of the cells halves and their distance does not decrease. Therefore, the call tests the two cells of level $j-1$ and outputs $\{P_c,P_{c'}\}$ if the test accepts. Otherwise, it makes a call $\textsc{Visit}(a,b)$ for every nonempty cell $a\subseteq c$ and every nonempty cell $b\subseteq c'$ of level $j$. Hence the calls output exactly the dumbbells of $\W_\square$, each as a pair of nodes of the compressed quadtree. Every call that outputs nothing makes at least two calls, and there are $O(n)$ initial calls, so there are $O(n+|\W_\square|)$ calls in total. The deterministic bound holds for every square $R_0$ that contains $P$, so $|\W_\square|\le529\,\opt_\eps$ in the randomized algorithm too, and there are $O(n+\opt_\eps)$ calls. With the box test, a call takes $O(1)$ time, since the two cells of level $j-1$ are computed from one point of $P_c$ and one point of $P_{c'}$ with the floor function, and the total time is $O(n\log n+\opt_\eps)$. This is $O(n\log n+n/\eps^2)$, since $\opt_\eps=O(n/\eps^2)$ by the CK construction~\cite{CK95}. With the exact test, a call takes $O(n\log n)$ time, computing the diameters from the convex hulls and the distance from the Delaunay triangulation of $P_c\cup P_{c'}$, which contains a closest pair with one point in each side, and the total time is $O((n+\opt_\eps)\,n\log n)$, which is $O(n^2\min\{n,1/\eps^2\}\log n)$ since also $\opt_\eps\le\binom n2$. Drawing $R_0$ takes $O(1)$ time.

\emph{Higher dimensions.} The deterministic bound extends with cubes of diameter $\sqrt d\,s$ in place of squares: the projection of $X$ on each axis is an interval of length less than $2\sqrt d(1+2\eps)s$, which meets at most $\lfloor2\sqrt d(1+2\eps)\rfloor+2$ of the intervals $[as,(a+1)s)$.
\end{proof}

The exact test never gives more dumbbells than the box test. Every cell dumbbell of the box test is also a cell dumbbell of the exact test, so every maximal cell dumbbell of the box test lies in a unique maximal cell dumbbell of the exact test, formed by ancestors of its cells. Every maximal cell dumbbell of the exact test covers some pair and therefore contains the maximal cell dumbbell of the box test that covers this pair. By Markov's inequality, for $\eps\le1/5$, one run of the randomized algorithm outputs more than $504\,\opt_\eps$ dumbbells with probability at most $1/2$, so the smallest output of $k$ independent runs has size at most $504\,\opt_\eps$ with probability at least $1-2^{-k}$. For $\eps\le1/16$, the randomized bound computed with $\gamma=1.125\sqrt2$ is $141$, with the same pieces of $\kappa$. For every $\eps<1$, the same proofs give the factors $98^2=9604$ (with $D=6\sqrt2$, $m=10$) and $2812$ (with $\gamma=3\sqrt2$). In $\R^d$, the factor is $5^6=15625$ for $d=3$ and $\eps\le1/16$.

\subparagraph*{Fair split trees.}
The charging of \cref{thm:partition} also applies to the CK construction on a fair split tree~\cite{CK95}. We first consider the version used in \cref{sec:experiments}, which has the tree of~\cite{CK95} but differs in the separation test and in the choice of the node that is split, and then the original construction. A \emph{box} is an axis-parallel rectangle. Every node $u$ of the tree $T$ has a point set $P_u\subseteq P$, and $b_u$ denotes the smallest box that contains $P_u$. The root has the point set $P$, and the leaves are the nodes with one point. Every other node $u$ is split by the line through the center of $b_u$ that is perpendicular to a longest side of $b_u$: the point sets of its two children are the points of $P_u$ on each side of the line, where the points on the line go to a fixed side. For two nodes $u$ and $v$ with disjoint point sets, the \emph{box test} accepts if $\max\{\diam(b_u),\diam(b_v)\}\le\eps\dist(b_u,b_v)$, and the \emph{exact test} accepts if the box test accepts or $\{P_u,P_v\}$ is $\eps$-separated. Fix one of the two tests. The procedure $\textsc{Pairs}(u,v)$ outputs $\{P_u,P_v\}$ if the test accepts. Otherwise, it takes the node whose box has the larger diameter, say $u$, breaking ties arbitrarily, and calls $\textsc{Pairs}(u_1,v)$ and $\textsc{Pairs}(u_2,v)$ for the two children $u_1,u_2$ of $u$. This node is not a leaf, since both tests accept two leaves. Let $\W_T$ be the collection of the dumbbells output by the calls $\textsc{Pairs}(u_1,u_2)$ over all nodes of $T$ with children $u_1,u_2$, where a dumbbell is output as a pair of nodes.

\begin{theorem}\label{thm:fairsplit}
For $P\subset\R^2$ and $\eps\le1/5$, the collection $\W_T$ is a partition $\eps$-WSPD of size at most $1024\,\opt_\eps$, both with the box test and with the exact test. With the box test, it is computed in $O(n\log n+\opt_\eps)$ time. For $P\subset\R^d$, its size is at most $\lfloor(2\sqrt d(1+2\eps)+2)^d\rfloor^2\opt_\eps$.
\end{theorem}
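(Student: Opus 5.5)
The plan is to reuse the charging of \cref{thm:partition}, with the quadtree cells of level $i$ replaced by a \emph{frontier} of the fair split tree.

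\emph{Partition property.} I would first check that $\W_T$ is a partition $\eps$-WSPD, as in~\cite{CK95}. Take a pair $\{x,y\}$ and let $w$ be the lowest common ancestor of the leaves of $x$ and $y$. Follow the calls from $\textsc{Pairs}(w_1,w_2)$. Every call that does not accept splits one of its two nodes, so exactly one branch keeps $x$ on one side and $y$ on the other. That branch ends with an accepted call, since two leaves are always accepted. No other call outputs a dumbbell containing both $x$ and $y$, because the recursion tree of disjoint node pairs partitions $P_{w_1}\times P_{w_2}$. Both tests imply $\eps$-separation of the point sets, since $P_u\subseteq b_u$.

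\emph{Charging.} Fix a minimum WSPD $\W^*$ and $Z=\{X,Y\}\in\W^*$, and set $\ell=\dist(X,Y)$ and $\delta=\eps\ell/(1+2\eps)$ as in the proof of \cref{thm:partition}. Charge each dumbbell of $\W_T$ to a dumbbell of $\W^*$ that covers one of its pairs. Define the frontier $F_X$ as the set of nodes $u$ with $P_u\cap X\neq\emptyset$, $\diam(b_u)\le\delta$, and either $u$ the root or $\diam(b_{\mathrm{parent}(u)})>\delta$. Define $F_Y$ in the same way.

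For $u\in F_X$ and $v\in F_Y$, each box has diameter at most $\delta$ and contains a point of $X$ or of $Y$. Hence $\dist(b_u,b_v)\ge\ell-2\delta$, so the box test accepts $\{u,v\}$, because $\delta=\eps(\ell-2\delta)$. Now consider the branch of calls covering a pair $\{x,y\}$ with $x\in X$ and $y\in Y$. I would show that it never descends strictly below the frontier nodes $u_x\ni x$ and $v_y\ni y$. The split rule always splits the node of larger box diameter. So if the branch ever reached a proper descendant of $u_x$, the call that split $u_x$ would have had its other node with a box of diameter at most $\delta$ and containing $y$. That other node is therefore a descendant of $v_y$ or equal to it, and the box test would have accepted before the split. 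The main point to check is that the exact test only accepts more, so the same holds for it. Hence every output dumbbell charged to $Z$ is a pair of ancestors-or-equal of a frontier pair $(u_x,v_y)$, and is determined by it. So at most $|F_X|\cdot|F_Y|$ dumbbells are charged to $Z$.

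\emph{Packing, the main obstacle.} It remains to show $|F_X|\le32$ in the plane, and $|F_X|\le\lfloor(2\sqrt d(1+2\eps)+2)^d\rfloor$ in general. Unlike quadtree cells, fair split boxes can be very thin, so I would follow the CK packing argument with outer rectangles. Frontier nodes have pairwise disjoint point sets, so they form an antichain. For each $u\in F_X$, the box of its parent has diameter greater than $\delta$, so the longest side of the parent box is at least $\delta/\sqrt d$. Using the CK outer rectangles $R_u\supseteq b_u$, I would show that these rectangles are pairwise disjoint and that each contains a cube of side $\delta/(2\sqrt d)$, or at least has all sides of length at least $\delta/(2\sqrt d)$. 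Since $\diam(X)\le(1+2\eps)\delta$, every $R_u$ meets $X$ and lies within a region of width $\diam(X)$ plus one cube side along each axis. Comparing volumes in units of $(\delta/(2\sqrt d))^d$ then gives at most $(2\sqrt d(1+2\eps)+2)^d$ frontier nodes, and the floor is allowed because the count is an integer. For $d=2$ and $\eps\le1/5$, this should be sharpened to $32$ by discretizing on a grid as in \cref{thm:partition}, and this is where I expect the bookkeeping to be hardest.

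\emph{Running time.} With the box test, the standard fair split tree construction~\cite{CK95} gives $O(n\log n)$ for building the tree. The number of calls is $O(n+|\W_T|)$, because every rejected call spawns two calls, and each call takes $O(1)$ time. Together with the size bound this gives $O(n\log n+\opt_\eps)$.
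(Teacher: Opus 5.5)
Your partition argument, your frontier $F_X=\{u_x\}$, and the proof that the chain of $\{x,y\}$ never splits $u_x$ all match the paper's charging to $K_XK_Y$ nodes. (Your contradiction argument works: the call that would split $u_x$ pairs it with a node whose box has diameter at most $\delta$ and contains $y$, and the box test accepts that pair.)

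\textbf{The planar constant is not proved.} The gap is in the packing step, which is where the headline constant $1024$ comes from. Your volume comparison, for $d=2$ and $\eps=1/5$, only gives $\lfloor(2\sqrt2\cdot1.4+2)^2\rfloor=\lfloor35.5\rfloor=35$ frontier nodes. That is a factor $1225$, not $1024$, and you leave the sharpening to $32$ open. The route you suggest, a grid count like $\kappa$ in \cref{thm:partition}, does not apply here. $\kappa$ counts the cells of one aligned grid, but the squares obtained from outer rectangles of a fair split tree are not aligned to any common grid.

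The missing idea is to compare areas against the convex hull $C$ of $X$ rather than its bounding square. Let $m=\delta/(2\sqrt2)$. The $K_X$ interior-disjoint squares of side $m$ each contain a point of $X$, so they lie in $C+[-m,m]^2$. This set has area $\mathrm{area}(C)+2m(w_1+w_2)+4m^2$, where $w_1,w_2\le\diam(X)$ are the axis widths of $C$. By the isodiametric inequality, $\mathrm{area}(C)\le\pi\diam(X)^2/4$. Since $\diam(X)\le(1+2\eps)\delta=2\sqrt2(1+2\eps)m<3.96\,m$, the area is at most $(\pi D^2/4+4D+4)m^2<33\,m^2$ with $D=3.96$. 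Hence $K_X\le32$.

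\textbf{The general-$d$ comparison has a false step.} The outer rectangles $R_u$ themselves do not lie near $X$. A frontier node whose points sit in a tiny corner of its parent's outer rectangle still receives half of that rectangle, which can be arbitrarily larger than $\delta$.

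What you need instead, for each $u\in F_X$ and a point $x\in P_u\cap X$, is a cube of side $m=\delta/(2\sqrt d)$ inside $R_u$ that contains $x$. Such a cube exists because every side of $R_u$ has length at least $\lambda_v/2>m$, where $v$ is the parent of $u$. That is the inductive property of outer rectangles, which you state but still have to prove; the paper proves it by induction along the tree. These cubes have disjoint interiors and lie in the bounding box of $X$ enlarged by $m$ at both ends of every axis. So the side is $\diam(X)+2m$, not one extra cube side, and this gives the bound $(2\sqrt d(1+2\eps)+2)^d$.
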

\begin{proof}
\emph{Partition.} Every dumbbell of $\W_T$ is $\eps$-separated, because $P_u\subseteq b_u$ for every node $u$. Let $\{x,y\}$ be a pair. The calls $\textsc{Pairs}(u,v)$ with $x\in P_u$ and $y\in P_v$, or vice versa, form a chain: it starts with the call for the two children of the lowest common ancestor of the leaves of $x$ and $y$, every call of the chain that outputs nothing makes exactly one call of the chain, and the chain ends with an output, since two leaves are accepted. Hence exactly one dumbbell of $\W_T$ covers $\{x,y\}$.

\emph{Size.} Fix a minimum $\eps$-WSPD $\W^*$, and charge every dumbbell of $\W_T$ to a dumbbell of $\W^*$ that covers one of its pairs. Let $Z=\{X,Y\}\in\W^*$, and let $\ell=\dist(X,Y)$ and $\delta=\eps\ell/(1+2\eps)$, as in the proof of \cref{thm:partition}. We regard a node as an ancestor of itself. The diameter of the box does not decrease from a node to its parent. For a point $x$, let $u(x)$ be the highest ancestor of the leaf of $x$ whose box has diameter at most $\delta$. The box of the root has diameter at least $\ell>\delta$, so $u(x)$ is not the root, and the box of every proper ancestor of $u(x)$ has diameter greater than $\delta$.

Let $x\in X$ and $y\in Y$. Every point of $b_{u(x)}$ is within distance $\delta$ of $x$, and every point of $b_{u(y)}$ is within distance $\delta$ of $y$. Hence $\dist(b_{u(x)},b_{u(y)})\ge\ell-2\delta>0$, and since $\delta=\eps(\ell-2\delta)$, both tests accept $u(x)$ and $u(y)$. In particular, $P_{u(x)}$ and $P_{u(y)}$ are disjoint, so the first call of the chain of $\{x,y\}$ is on proper ancestors of $u(x)$ and $u(y)$ or on these nodes themselves. We claim that every call $\textsc{Pairs}(u,v)$ of the chain, with $x\in P_u$, is on an ancestor $u$ of $u(x)$ and an ancestor $v$ of $u(y)$. Consider such a call that outputs nothing. Then $u\ne u(x)$ or $v\ne u(y)$. If $u=u(x)$, then $v$ is a proper ancestor of $u(y)$, so $\diam(b_v)>\delta\ge\diam(b_u)$ and the call splits $v$. Similarly, if $v=u(y)$, the call splits $u$. Otherwise, both nodes are proper ancestors. In each case, the next call of the chain is on ancestors of $u(x)$ and $u(y)$. Therefore, the dumbbell of $\W_T$ that covers $\{x,y\}$ is output by a call on ancestors of $u(x)$ and $u(y)$. The calls of the chain depend only on these ancestors, so this dumbbell is determined by $u(x)$ and $u(y)$. The number of dumbbells charged to $Z$ is thus at most $K_XK_Y$, where $K_X$ is the number of distinct nodes $u(x)$ with $x\in X$, and $K_Y$ the number of distinct nodes $u(y)$ with $y\in Y$.

\emph{Packing.} We bound $K_X$ using the outer rectangles of~\cite{CK95}. Let $\lambda_u$ be the length of a longest side of $b_u$. The \emph{outer rectangle} $\hat b_u$ of the root $u$ is a square of side $\lambda_u$ that contains $b_u$. For a child $u$ of a node $v$, $\hat b_u$ is the part of $\hat b_v$ on the side of $P_u$ of the line that splits $v$. Then $b_u\subseteq\hat b_u$, and the outer rectangles of two nodes have disjoint interiors if neither is an ancestor of the other. Moreover, both sides of $\hat b_u$ have length at least $\lambda_v/2$, where $v$ is the parent of $u$. For the sides perpendicular to the line that splits $v$, this holds because $b_v\subseteq\hat b_v$ extends $\lambda_v/2$ on each side of the line. The other sides are as long as those of $\hat b_v$, whose length is $\lambda_v$ if $v$ is the root, and at least $\lambda_{v'}/2\ge\lambda_v/2$ by induction otherwise, where $v'$ is the parent of $v$.

Let $m=\delta/(2\sqrt2)$. The box of the parent $v$ of $u(x)$ has diameter greater than $\delta$, so $\lambda_v>\delta/\sqrt2$ and both sides of $\hat b_{u(x)}$ are longer than $m$. Hence $\hat b_{u(x)}$ contains an axis-parallel square of side $m$ that contains $x$. No node $u(x)$ is a proper ancestor of another one, so these squares, one for each of the $K_X$ nodes, have disjoint interiors. They lie in $C+[-m,m]^2$, where $C$ is the convex hull of $X$. As in the proof of \cref{thm:partition}, the area of this set is $\mathrm{area}(C)+2m(w_1+w_2)+4m^2$, where $w_1,w_2\le\diam(X)$ are the widths of $C$ along the two axes, and $\mathrm{area}(C)\le\pi\diam(X)^2/4$. If $\diam(X)\le Dm$, the area is thus at most $(\pi D^2/4+4D+4)\,m^2$. We have $\diam(X)\le\eps\ell=(1+2\eps)\delta=2\sqrt2(1+2\eps)m$, which is less than $3.96\,m$ for $\eps\le1/5$. Hence $K_X\le\pi D^2/4+4D+4<33$ for $D=3.96$, that is, $K_X\le32$, and the same holds for $K_Y$. At most $32^2=1024$ dumbbells are charged to $Z$, and $|\W_T|\le1024\,|\W^*|$.

\emph{Running time.} The tree is computed in $O(n\log n)$ time~\cite{CK95}. Every call that outputs nothing makes two calls, and there are $n-1$ initial calls, so there are $O(n+|\W_T|)$ calls. With the box test, a call takes $O(1)$ time. Since $|\W_T|\le1024\,\opt_\eps$, the total time is $O(n\log n+\opt_\eps)$, which is $O(n\log n+n/\eps^2)$ since $\opt_\eps=O(n/\eps^2)$~\cite{CK95}.

\emph{Higher dimensions.} In $\R^d$, a node is split by the hyperplane through the center of its box that is perpendicular to a longest side, and the same proof applies with $m=\delta/(2\sqrt d)$: the outer box of $u(x)$ contains a cube of side $m$ that contains $x$, and $X$ lies in a cube of side $\diam(X)\le2\sqrt d(1+2\eps)m$. These cubes lie in a cube of side $\diam(X)+2m$, so $K_X\le(2\sqrt d(1+2\eps)+2)^d$.
\end{proof}

For $\eps\le1/16$, the same proof gives the factor $24^2=576$ (with $D=1.125\cdot2\sqrt2$), and for every $\eps<1$, the factor $94^2=8836$ (with $D=6\sqrt2$). In $\R^3$, the factor is $205^2=42025$ for $\eps\le1/16$. With the exact test, a call takes $O(n\log n)$ time, as in the proof of \cref{thm:partition}.

The proof also applies to the original construction of~\cite{CK95}, with two changes. First, its test, for a separation $s>0$, accepts two nodes $u$ and $v$ if $b_u$ and $b_v$ are contained in two balls of the same radius $r$ that are at distance at least $sr$ from each other. For $s=2/\eps$, this test accepts only $\eps$-separated dumbbells, since $\diam(P_u),\diam(P_v)\le2r$ and $\dist(P_u,P_v)\ge2r/\eps$. It accepts $u(x)$ and $u(y)$: for $r=\max\{\diam(b_{u(x)}),\diam(b_{u(y)})\}/2\le\delta/2$, every point of the balls of radius $r$ centered at the centers of the two boxes is within distance $2r\le\delta$ of $x$ or of $y$, so the distance between the balls is at least $\ell-2\delta=\delta/\eps\ge sr$. Second, $\textsc{Pairs}$ splits the node whose box has the longer side. Then $u(x)$ is defined as the highest ancestor of the leaf of $x$ with $\lambda_{u(x)}\le\delta/\sqrt2$, so that its box has diameter at most $\delta$ and the box of its parent has a side longer than $\delta/\sqrt2$, and the rest of the proof is unchanged. Hence, for $P\subset\R^2$ and $\eps\le1/5$, the construction of~\cite{CK95} with separation $2/\eps$ computes a partition $\eps$-WSPD of size at most $1024\,\opt_\eps$ in $O(n\log n+\opt_\eps)$ time.

\section{Conclusion}
\label{sec:conclusion}

We gave approximation algorithms for \minwspd with explicit factors that keep the separation $\eps$. For cover WSPDs, the sweep is a $40$-approximation in the plane for $\eps\le1/16$, a $2048$-approximation in $\R^3$, and a $2^{O(d)}$-approximation in $\R^d$. With pieces that depend on $\eps$, it keeps a constant factor for every $\eps>0$. For partition WSPDs, CK on a fair split tree and the quadtree construction give factors $1024$ and $529$ in the plane for $\eps\le1/5$, and $252$ in expectation for a randomly shifted and scaled quadtree, in $O(n\log n+\opt)$ time. The sweep also gives a lower bound on $\opt$: its anchors are pairwise incompatible, so its number of steps is at most $\opt$.

On the practical side, the greedy sweep keeps the factor $40$, and reshape shrinks any given WSPD without increasing its size. A local search on the anchors of the sweep enlarges its lower bound to $\lb$, which is at least $0.989\,\opt$ on every planar instance where $\opt$ is known, and certifies the quality of any WSPD on inputs where $\opt$ is out of reach. In the plane, the greedy sweep followed by reshape is within $0.02\%$ of $\opt$ for $n=500$ and on average $1.05\,\lb$ for $n=10000$. Converting its output gives partition WSPDs of on average $1.13\,\lb$, against $1.55\,\lb$ for CK. In $\R^3$, reshape is at most $0.03\%$ above $\opt$ where $\opt$ is known, and on average $1.021\,\lb$ for $n=10000$.

\subparagraph*{Open problem.}
Without the final deletion of redundant dumbbells, which can only decrease the size, the largest ratio we observed between the output of the greedy sweep and the smallest known WSPD is $1.503$, for \texttt{jupiter-0020000} with $\eps=1/20$. Hence its approximation factor for small $\eps$ is between $1.503$ and $40$. What is it?

The largest observed ratio grows with $n$, from $1.41$ for $n=10000$ to $1.50$ for $n=20000$, so the lower end may be far from the truth. The upper end is the product of two bounds: the number of steps is at most $\opt$, and every step outputs at most $40$ dumbbells. An analysis that bounds each step separately cannot improve it much: for $\eps=1/16$, we found point sets in which the set $R_{pq}$ of a single step needs $23$ dumbbells. A smaller factor seems to need an amortized argument, which pays for a step whose $R_{pq}$ needs many dumbbells with optimal dumbbells that cover pairs outside $R_{pq}$.

\bibliography{refs}

\end{document}